\documentclass[a4paper,UKenglish,cleveref, autoref, thm-restate]{lipics-v2021}
\bibliographystyle{plainurl}

\usepackage{extarrows}
\usepackage{graphicx}
\usepackage{mathtools}
\usepackage[ruled]{algorithm2e}

\allowdisplaybreaks
\nolinenumbers

\theoremstyle{definition}
\newtheorem{defn}[theorem]{Definition}
\theoremstyle{definition}
\newtheorem{notation}[theorem]{Notation}

\newcommand{\Z}{\mathbb{Z}}
\newcommand{\N}{\mathbb{N}}
\newcommand{\Q}{\mathbb{Q}}
\newcommand{\R}{\mathbb{R}}
\newcommand{\C}{\mathbb{C}}
\newcommand{\Rp}{\mathbb{R}_{\geq 0}}
\newcommand{\Qp}{\mathbb{Q}_{\geq 0}}
\newcommand{\Zp}{\mathbb{Z}_{\geq 0}}
\newcommand{\UT}{\mathsf{UT}}
\newcommand{\SL}{\operatorname{SL}}
\newcommand{\GL}{\operatorname{GL}}
\newcommand{\U}{\mathsf{U}}
\newcommand{\Heis}{\operatorname{H}_3}
\newcommand{\Sym}{\operatorname{S}}

\newcommand{\card}{\operatorname{card}}
\newcommand{\supp}{\operatorname{supp}}
\newcommand{\mG}{\mathcal{G}}
\newcommand{\mE}{\mathcal{E}}
\newcommand{\mC}{\mathcal{C}}

\newcommand{\mI}{\mathcal{I}}

\newcommand{\bv}{\boldsymbol{v}}
\newcommand{\bl}{\boldsymbol{\ell}}
\newcommand{\bzer}{\boldsymbol{0}}

\title{On the Identity Problem for Unitriangular Matrices of Dimension Four}
\author{Ruiwen Dong}{Department of Computer Science, University of Oxford}{ruiwen.dong@kellogg.ox.ac.uk}{}{}

\authorrunning{R. Dong}

\Copyright{Ruiwen Dong} 

\ccsdesc[500]{Computing methodologies~Symbolic and algebraic manipulation} 

\keywords{identity problem, matrix semigroups,
unitriangular matrices} 

\category{} 

\relatedversion{} 

\supplement{}

\acknowledgements{}

\EventEditors{Stefan Szeider, Robert Ganian, and Alexandra Silva}
\EventNoEds{3}
\EventLongTitle{47th International Symposium on Mathematical Foundations of Computer Science (MFCS 2022)}
\EventShortTitle{MFCS 2022}
\EventAcronym{MFCS}
\EventYear{2022}
\EventDate{August 22--26, 2022}
\EventLocation{Vienna, Austria}
\EventLogo{}
\SeriesVolume{241}
\ArticleNo{9}

\begin{document}

\maketitle
\begin{abstract}
We show that the Identity Problem is decidable in polynomial time for finitely generated sub-semigroups of the group $\UT(4, \Z)$ of $4 \times 4$ unitriangular integer matrices.
As a byproduct of our proof, we also show the polynomial-time decidability of several subset reachability problems in $\UT(4, \Z)$.
\end{abstract}

\section{Introduction}
Among the most prominent algorithmic problems for matrix semigroups are the \emph{Identity Problem} and the \emph{Membership Problem}.
For the Membership Problem, the input is a finite set of square matrices $A_1, \ldots, A_k$ and a target matrix $A$.
The problem is to decide whether $A$ lies in the semigroup generated by $A_1, \ldots, A_k$.
The Identity Problem is the Membership Problem restricted to the case where $A$ is the identity matrix.
These two problems are closely related to each other, and, as shown in many circumstances, studying the Identity Problem is usually the first step in studying the Membership Problem.

For general matrices, the Membership Problem is undecidable by a classical result of Markov \cite{markov1947certain}.
Indeed, it is one of the earliest undecidability results on algorithmic problems in matrix semigroups.
Most variants of the problem remain undecidable in low dimension. For example, the \emph{Mortality Problem}, which is the Membership Problem in which the target matrix is 0, is undecidable in dimension three~\cite{paterson1970unsolvability}.  
In dimension four, the Membership Problem is undecidable for matrices in $\SL(4, \Z)$ (see~\cite{mikhailova1966occurrence}), while the Identity Problem is undecidable for the set of $4 \times 4$ integer matrices $\mathcal{M}_{4\times 4}(\mathbb{Z})$ (see~\cite{bell2010undecidability}).


However, there has also been steady progress on the decidability side.
The Membership Problem is shown to be decidable for $\GL(2, \Z)$ in \cite{choffrut2005some}.
This decidability result is then extended to $2 \times 2$ integer matrices with nonzero determinant \cite{potapov2017decidability}, and to $2 \times 2$ integer matrices with determinants equal to 0 and $\pm 1$ \cite{potapov2017membership}.
It remains an intricate open problem whether the Membership Problem or the Identity Problem is decidable for $\SL(3, \Z)$.

Recently, there has been more progress on closing the decidability gap by restricting consideration to the class of unitriangular matrices.
It has long been known that the \emph{Group Membership Problem} is decidable for $\UT(n, \Z)$, the group of unitriangular integer matrices of dimension $n$.
The Group Membership Problem asks to decide whether a matrix $A$ lies in the \emph{group} generated by given matrices $A_1, \ldots, A_k$.
In fact, it is decidable for all finitely generated solvable matrix groups \cite{kopytov1968solvability}.
Later, Babai et al. \cite{babai1996multiplicative} showed that the Group Membership Problem for \emph{commuting matrices} can be computed in polynomial time (note that commuting matrices are simultaneously upper-triangularizable).
However, there are significant differences between the group case and the semigroup case.
In fact, for large enough $n$, the \emph{Knapsack Problem} for $\UT(n, \Z)$ is undecidable \cite{konig2016knapsack}.
Given matrices $A_1, \ldots, A_k$ and $A$, the Knapsack Problem asks to decide whether there exist natural numbers $e_1, \ldots, e_k$ such that $A_1^{e_1} \cdots A_k^{e_k} = A$.
From the undecidability of the Knapsack Problem, one can deduce the undecidability of the semigroup Membership Problem for $\UT(n, \Z)$ for large enough $n$ \cite{lefaucheux2022private}.

Nevertheless, there have been some positive decidability results.
The Identity Problem has been shown to be decidable for the group of $3 \times 3$ unitriangular integer matrices $\UT(3, \Z)$ and the Heisenberg groups $\operatorname{H}_{2n+1}$ in \cite{ko2017identity}.
Shortly after, the decidability result was extended to the Membership Problem \cite{colcombet2019reachability}.
Ko et al. left open the problem whether the Identity Problem in $\UT(n, \Z)$ is decidable for $n \geq 4$, as well as finding the smallest $n$ for which the Membership Problem for $\UT(n, \Z)$ becomes undecidable.

The main result of this paper is that the Identity Problem is decidable in polynomial time for $\UT(4, \Z)$.
This further narrows the gap between decidability and undecidability and can be regarded as a first step towards the Membership Problem for $\UT(4, \Z)$.
The foundation of our method is the arguments developed in \cite{colcombet2019reachability} for the Membership Problem of $\UT(3, \Z)$.
However, in order to pass from dimension three to four, we need to introduce additional methods from convex geometry, linear programming and even use the aid of computational algebraic geometry software.
The proof for $\UT(3, \Z)$ heavily relies on the fact that the subgroup generated by commutators of matrices from a given subset of $\{A_1, \ldots, A_k\} \subset \UT(3, \Z)$ is isomorphic to a subgroup of $\Z$.
This is no longer the case for $\UT(4, \Z)$.
However, $\UT(4, \Z)$ is still metabelian \cite{rotman2012introduction}, and its derived subgroup is isomorphic to $\Z^3$. 
Given a finite set $\mG \subseteq \UT(4, \Z)$, we construct elements in $\langle \mG \rangle$ that fall inside the derived subgroup of $\UT(4, \Z)$.
These elements then generate a cone in $\Z^3$ under the isomorphism between the derived subgroup and $\Z^3$.
The possible shapes of this cone will determine the Identity Problem.

There is strong evidence that the new techniques introduced in this paper can help tackle the Identity Problem for $\UT(n, \Z)$ with $n \geq 5$.

\section{Preliminaries}
Denote by $\UT(4, \Z)$ the group of upper triangular integer matrices with ones on the diagonal:
\[
\UT(4, \Z) \coloneqq \left\{
    \begin{pmatrix}
    1 & a & d & f \\
    0 & 1 & b & e  \\
    0 & 0 & 1 & c \\
    0 & 0 & 0 & 1 \\
    \end{pmatrix}
    \middle| a, b, c, d, e, f \in \Z
    \right\}.
\]
Denote its normal subgroups
\[
\U_1 \coloneqq \left\{
    \begin{pmatrix}
    1 & 0 & d & f \\
    0 & 1 & 0 & e  \\
    0 & 0 & 1 & 0 \\
    0 & 0 & 0 & 1 \\
    \end{pmatrix}
    \middle| d, e, f \in \Z
    \right\}, \quad
\U_2 \coloneqq \left\{
    \begin{pmatrix}
    1 & 0 & 0 & f \\
    0 & 1 & 0 & 0  \\
    0 & 0 & 1 & 0 \\
    0 & 0 & 0 & 1 \\
    \end{pmatrix}
    \middle| f \in \Z
    \right\}
\]
in the lower central series:
$
\UT(4, \Z) \trianglerighteq \U_1 = [\UT(4, \Z), \UT(4, \Z)] \trianglerighteq \U_2 = [\UT(4, \Z), \U_1]
$
(see \cite[Chapter~5]{rotman2012introduction}).
In particular, $\U_1$ and $\U_2$ are respectively the derived subgroup and the centre of $\UT(4, \Z)$.
For convenience, we introduce the following notations:
\[
UT(a,b,c; d,e,f) \coloneqq \begin{pmatrix}
    1 & a & d & f \\
    0 & 1 & b & e  \\
    0 & 0 & 1 & c \\
    0 & 0 & 0 & 1 \\
    \end{pmatrix}, \quad
U_1(d,e,f) \coloneqq UT(0,0,0; d,e,f).
\]
There are surjective group homomorphisms $\varphi_0\colon \UT(4, \Z) \rightarrow \Z^3$ defined by 
\[
\varphi_0(UT(a,b,c; d,e,f)) = (a,b,c),
\]
with $\ker(\varphi_0) = \U_1$, and $\varphi_1\colon \U_1 \rightarrow \Z^2$,
\[
\varphi_1(U_1(d,e,f)) = (d,e),
\]
with $\ker(\varphi_1) = \U_2$.
Moreover, $\U_1$ is itself abelian, with a natural isomorphism $\tau \colon \U_1 \xrightarrow{\sim} \Z^3$:
\[
\tau(U_1(d,e,f)) = (d,e,f).
\]
Denote by $\tau_d$ the projection $U_1(d,e,f) \mapsto d$, $\tau_e$ the projection $U_1(d,e,f) \mapsto e$,
and $\tau_{f}$ the projection $U_1(d,e,f) \mapsto f$.
Then, $\varphi_1 = (\tau_d, \tau_e)$ and $\tau = (\tau_d, \tau_e, \tau_f)$.

Finally, define the subgroup of $\UT(4, \Z)$:
\[
\U_{10} \coloneqq \left\{
    U_1(0,e,f)
    \mid e, f \in \Z
    \right\} \trianglelefteq \U_1.
\]
For a finite set of matrices $\mG = \{A_1, \ldots, A_k\}$, denote by $\langle \mG \rangle$ the semigroup generated by $\mG$.
In this paper, we are concerned with the following problems.
\begin{defn}
Let $G$ be a monoid of matrices, and $H$ a subset of $G$.
\begin{enumerate}[(i)]
    \item The \emph{Identity Problem} in $G$ asks, given a finite set of matrices $\mG$ in $G$, whether $I \in \langle \mG \rangle$.
    If this is the case, we say that the identity matrix is \emph{reachable}.
    \item The \emph{$H$-Reachability Problem} in $G$ asks, given a finite set of matrices $\mG$ in $G$, whether $H \cap \langle \mG \rangle \neq \emptyset$.
    If this is the case, we say that $H$ is \emph{reachable}.
\end{enumerate}
\end{defn}
The main result of this paper is that the Identity Problem in $\UT(4, \Z)$ is decidable in polynomial time, with respect to the number of bits required to encode all the entries of the matrices in $\mG$ (each matrix $UT(a,b,c,d,e,f)$ is encoded by the entries $a, b, c, d, e, f$).

It turns out that the three problems: Identity Problem, $\U_2$-Reachability and $\U_{10}$-Reachability are interconnected and it is more convenient to devise algorithms that decide them simultaneously.
A trivial observation is that, because $I \in \U_2 \subset \U_{10}$, a positive instance of the Identity Problem is also a positive instance of $\U_2$-Reachability;
and a positive instance of $\U_2$-Reachability is also a positive instance of $\U_{10}$-Reachability.

The following definitions will be used throughout this paper.
\begin{defn}[String, product and Parikh vector]
Let $\mG = \{A_1, \ldots, A_k\}$ be a fixed set of matrices in $\UT(4, \Z)$.
A \emph{string} of $\mG$ is an expression $B_1 B_2 \cdots B_m$ such that $B_i \in \mG, i = 1, \ldots, m$.
The \emph{product} of a string $B_1 B_2 \cdots B_m$ is the matrix $P \in \UT(4, \Z)$ such that $P = B_1 B_2 \cdots B_m$.
The \emph{Parikh vector} of a string $B_1 B_2 \cdots B_m$ is the vector $\bl = (\ell_1, \ldots, \ell_k) \in \Zp$ where
\[
\ell_j = \card(\{i \mid B_i = A_j\}), j = 1, \ldots, k.
\]
When $\mG$ is clear from the context, we simply use the term ``string'' instead of ``string of $\mG$''.
\end{defn}

For an integer $n \geq 1$, the \emph{Heisenberg group} of dimension $2n+1$ is the group $\operatorname{H}_{2n+1}$ of $(n+2) \times (n+2)$ integer matrices of the form
$
H = \begin{pmatrix}
        1 & \boldsymbol{a} & c \\
        0 & I_{n} & \boldsymbol{b}^{\top} \\
        0 & 0 & 1 \\
\end{pmatrix},
$
where $\boldsymbol{a}, \boldsymbol{b} \in \Z^{n}$, $c \in \Z$.
The following result comes from \cite{ko2017identity} and \cite{colcombet2019reachability}.
\begin{lemma}[{\cite[Theorem~7]{colcombet2019reachability}}]\label{lem:Heis}
The Identity Problem and the Membership Problem in $\operatorname{H}_{2n+1}$ are decidable for all $n \geq 1$.
\end{lemma}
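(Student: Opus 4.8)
The plan is to reduce the Identity Problem in $\operatorname{H}_{2n+1}$ to finitely many effectively checkable conditions, one for each subset of the given generators; this follows the route of \cite{ko2017identity, colcombet2019reachability}. Write an element of $\operatorname{H}_{2n+1}$ as a triple $(\boldsymbol{a}, \boldsymbol{b}, c) \in \Z^n \times \Z^n \times \Z$ with multiplication
\[
(\boldsymbol{a}, \boldsymbol{b}, c)\,(\boldsymbol{a}', \boldsymbol{b}', c') \;=\; \bigl(\boldsymbol{a} + \boldsymbol{a}',\; \boldsymbol{b} + \boldsymbol{b}',\; c + c' + \boldsymbol{a}^{\top}\boldsymbol{b}'\bigr),
\]
so that the centre is $\{(\bzer, \bzer, c)\} \cong \Z$ and the abelianisation $\varphi\colon (\boldsymbol{a}, \boldsymbol{b}, c) \mapsto (\boldsymbol{a}, \boldsymbol{b})$ surjects onto $\Z^{2n}$. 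Given $\mG = \{A_1, \dots, A_k\}$ with $A_j = (\boldsymbol{a}_j, \boldsymbol{b}_j, c_j)$, the first observation is that a product with exponent vector $\bl \ne \bzer$ can equal the identity only if $\sum_j \ell_j \varphi(A_j) = \bzer$; and once this linear condition holds, every ordering of the corresponding multiset produces a product lying in the centre. Writing $C(\bl)$ for the set of central coordinates so obtained, the identity is reachable iff some such $\bl$ satisfies $0 \in C(\bl)$.

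The core of the argument is to describe $C(\bl)$ and, crucially, $\bigcup_{N \ge 1} C(N\bl)$ (since replacing a product $P$ by $P^N$ shows that $0 \in C(\bl)$ implies $0 \in C(N\bl)$). An ordering $\sigma$ of the multiset indexed by $\bl$ has central coordinate $\sum_j \ell_j c_j + Q(\sigma)$ with $Q(\sigma) = \sum_{i < i'} \boldsymbol{a}_{\sigma(i)}^{\top}\boldsymbol{b}_{\sigma(i')}$. Two structural facts guide the analysis. First, since $\sum_j \ell_j \boldsymbol{a}_j = \sum_j \ell_j \boldsymbol{b}_j = \bzer$, reversing an ordering gives $Q(\sigma) + Q(\bar{\sigma}) = -\sum_j \ell_j \boldsymbol{a}_j^{\top}\boldsymbol{b}_j$, so $C(\bl)$ is symmetric about a point determined by $\bl$ alone. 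Second, swapping two adjacent letters $A_j, A_{j'}$ changes $Q$ by $\boldsymbol{a}_{j'}^{\top}\boldsymbol{b}_j - \boldsymbol{a}_j^{\top}\boldsymbol{b}_{j'}$, the central part of the commutator $[A_{j'}, A_j]$; since adjacent transpositions generate all permutations, $C(\bl)$ lies in a single residue class modulo $g = g(\mS)$, the gcd of these commutator values over indices $j, j'$ in the support $\mS$ of $\bl$. A quantitative count then shows that for $N$ large, $C(N\bl)$ equals, apart from finitely many positions near its symmetric centre, the set of all integers lying in the appropriate residue class modulo $g$ and within an interval whose half-width grows quadratically in $N$ while its centre drifts only linearly; hence $0 \in \bigcup_N C(N\bl)$ iff a single explicit congruence in $N$ modulo $g$ is solvable, which is decidable. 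When $g = 0$ (the generators in $\mS$ pairwise commute), $C(\bl)$ is a single point and the condition is a direct equality check.

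It remains to search for a suitable $\bl$. Since everything above depends only on the support $\mS$ and on the class of $\bl$ in a computable sublattice, I would enumerate the finitely many subsets $\mS \subseteq \{1, \dots, k\}$; for each, deciding whether there is a strictly positive integer vector $\bl$ supported on $\mS$ with $\sum_{j \in \mS} \ell_j \varphi(A_j) = \bzer$ is a linear-programming feasibility test (equivalently, the origin lies in the relative interior of the convex hull of $\{\varphi(A_j)\}_{j \in \mS}$), and the accompanying central condition is a gcd-and-congruence computation. The identity is reachable iff some subset $\mS$ passes both tests, which gives the decision procedure; taking $n = 1$ this also covers $\operatorname{H}_3 = \UT(3, \Z)$.

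The step I expect to be the main obstacle is the middle one: the naive guess that $C(\bl)$ is a full interval of integers is false, so pinning down $\bigcup_N C(N\bl)$ precisely requires combining the reversal symmetry, the commutator-gcd constraint, and a careful filling-in estimate for $C(N\bl)$ with $N$ large; one must also dispose of the degenerate (low-dimensional) configurations of $\mS$, where the positive hull of $\{\varphi(A_j)\}_{j \in \mS}$ is not a linear subspace, by a separate and easier argument.
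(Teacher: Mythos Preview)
The paper does not prove this lemma; it is quoted verbatim as \cite[Theorem~7]{colcombet2019reachability} and used as a black box. Your sketch follows the approach of the cited references \cite{ko2017identity,colcombet2019reachability}: abelianise, restrict to exponent vectors summing to zero, analyse the range of central coordinates via commutator gcds and the reversal symmetry, and enumerate supports. That is exactly the route taken in those papers, so there is nothing to compare against here beyond noting that your outline is faithful to the source; for the purposes of the present paper, simply citing the result suffices.
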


\section{Identity problem, $\U_2$- and $\U_{10}$-Reachability in $\UT(4, \Z)$}
In this section, we construct algorithms that decide the Identity Problem, $\U_2$-Reachability and $\U_{10}$-Reachability in $\UT(4, \Z)$.

\subsection{Overview of decision strategy}

For any set of vectors $\bv_1, \ldots, \bv_l \in \R^n$, denote by
\[
\langle \bv_1, \ldots, \bv_l \rangle_{\Rp} \coloneqq \left\{\sum_{i = 1}^{l} r_i \bv_i \middle| r_i \in \Rp, i = 1, \ldots, l\right\}
\]
the $\Rp$-cone generated by $\bv_1, \ldots, \bv_l$, and by
$\langle \bv_1, \ldots, \bv_l \rangle_{\R}$
the $\R$-vector space spanned by $\bv_1, \ldots, \bv_l$.

Let $\mG = \{A_1, \ldots, A_k\}$ be a set of matrices in $\UT(4, \Z)$, for which we want to decide the Identity Problem, $\U_2$-Reachability and $\U_{10}$-Reachability.
Define the $\Rp$-cone
\begin{equation}\label{eq:defC}
\mC \coloneqq \langle \varphi_0(A_1), \ldots, \varphi_0(A_k) \rangle_{\Rp},
\end{equation}
and denote by $\mC^{lin}$ its lineality space, i.e.\ the largest linear subspace (by inclusion) contained in $\mC$.
In particular, $\mC^{lin} = \mC \cap - \mC$.
A basis of $\mC^{lin}$ can be effectively computed in polynomial time \cite{schrijver1998theory}.
For any matrix $A_i \in \mG$, the projection $\varphi_0(A_i)$ can be either in $\mC^{lin}$ or in $\mC \setminus \mC^{lin}$.
However, in order to reach $\U_1$, which contains the identity matrix, $\U_2$ and $\U_{10}$, one can only use matrices $A_i$ with $\varphi_0(A_i) \in \mC^{lin}$.
This is formally stated by the following proposition. 
\begin{restatable}{prop}{Clin}\label{prop:Clin}
If the product of a string $B_1 \cdots B_m$ is in $\U_1$, then every $B_j, j = 1, \ldots m,$ must be in the set $\{A_i \in \mG \mid \varphi_0(A_i) \in \mC^{lin}\}$.
\end{restatable}

\begin{proof}
Suppose on the contrary that some $B_j$ satisfies $\varphi_0(B_j) \in \mC \setminus \mC^{lin}$.

Since $\varphi_0$ is a group homomorphism, we have
\[
B_1 \cdots B_m \in \U_1 \iff \varphi_0(B_1 \cdots B_m) = \bzer \iff \sum_{i=1}^m \varphi_0(B_i) = \bzer.
\]
Therefore, $-\varphi_0(B_j) = \sum_{i \neq j} \varphi_0(B_i) \in \mC$.

Hence, the linear subspace $\varphi_0(B_j) \R = \langle \varphi_0(B_j), -\varphi_0(B_j) \rangle_{\Rp}$ is contained in $\mC$.
This yields $\varphi_0(B_j) \R \subseteq \mC^{lin}$, a contradiction to $\varphi_0(B_j) \in \mC \setminus \mC^{lin}$.
\end{proof}

The overall strategy for constructing our algorithm is to use induction on $\card(\mG)$.
If $\card(\mG) = 0$, then the answers to the Identity Problem, $\U_2$-Reachability and $\U_{10}$-Reachability are all negative.
Suppose now that we have an algorithm that decides all three problems for every set of at most $k-1$ matrices,
we will construct an algorithm that decides them for a set of $k$ matrices $\mG = \{A_1, \ldots, A_k\}$.
By Proposition \ref{prop:Clin}, if some matrix $A_i$ satisfies $\varphi_0(A_i) \in \mC \setminus \mC^{lin}$, then we can discard it without changing the answer to the Identity Problem or $\U_2, \U_{10}$-Reachability.
This decreases the number of elements in $\mG$, and an algorithm is available by the induction hypothesis on $\card(\mG)$.
Hence, we can suppose that every $A_i \in \mG$ satisfies $\varphi_0(A_i) \in \mC^{lin}$,
so $\mC = \mC^{lin}$ is a linear space.

Since $\varphi_0(A_i) \in \Z^3$, $\mC$ is a linear subspace of $\R^3$.
We identify cases according to the dimension of $\mC$,
with each of the following four subsections treating the case of dimension 3, 1, 0, 2.
The pseudocode of the decision procedure for the Identity Problem is given here as a reference point for the detailed case analysis. 
The decision procedures for $\U_{2}$-reachability and $\U_{10}$-reachability follow similar patterns and their pseudocode is given in Appendix~\ref{app:alg}.
Note that the decision procedure for the Identity Problem invokes the decision procedure for $\U_{2}$-reachability as a subroutine.
Similarly, the decision procedure for $\U_{2}$-reachability will invoke the decision procedure for $\U_{10}$-reachability as a subroutine.

\begin{algorithm}[H]
\caption{IdentityProblem(): deciding the Identity Problem for a subset of $\UT(4, \Z)$.}
\label{alg:IP}
\begin{description}
\item[Input:] 
A set $\mathcal{G} = \{A_1, \ldots, A_k\}$ of matrices in $\UT(4, \Z)$.
\item[Output:] True or False.
\end{description}
\begin{enumerate}[Step 1:]
    \item Compute the cone $\mathcal{C}$ and its lineality space $\mathcal{C}^{lin}$.
    For $i = 1, \ldots, k$, if some \\
    $\varphi_0(A_i)$ is not in $\mathcal{C}^{lin}$, return IdentityProblem($\mG \setminus \{A_i\}$).
    \item
    \begin{enumerate}
        \item If $\dim(\mathcal{C}) = 3$, return True.
        \item If $\dim(\mathcal{C}) = 1$, return True if the condition in Proposition~\ref{prop:case1id}(i) is satisfied, otherwise return False.
        \item If $\dim(\mathcal{C}) = 0$, return True if $\tau(A_i), i = 1, \ldots, m$ generate a semigroup \\
        containing $\bzer$, otherwise return False.
        \item If $\dim(\mathcal{C}) = 2$, compute a non-zero vector $(p,q,r) \in \Q^3$ orthogonal to $\mathcal{C}$.
    \begin{enumerate}
        \item If $p = 0$, but $q, r$ are not zero, or $r = 0$, but $q, p$ are not zero. \\
        Compute $L_0$, if $\supp(L_0) = \{1, \ldots, k\}$, return True, otherwise return IdentityProblem($\{A_i \mid i \in \supp(L_0)\}$).
        \item If $p = r = 0$, problem reduces to Identity Problem in $\operatorname{H}_5$.
        \item If $p = q = 0, r \neq 0$ or $r = q = 0, p \neq 0$, compute $A'_i$ as in \eqref{eq:defAprime}.
        Return U2Reachability($A_1', \ldots, A_k'$) (see Appendix~\ref{app:alg}).
    \end{enumerate}
    \end{enumerate}
\end{enumerate}
\end{algorithm}

We now give an overview of the motivation behind classifying cases according to the dimension of $\mC$.
As a convention, we always use $A_i, i = 1, \ldots, k$ to denote elements of the fixed generating set $\mG$, and Greek letters to denote their entries, i.e.\ $A_i = UT(\alpha_i, \beta_i, \kappa_i; \delta_i, \epsilon_i, \phi_i)$.
We use $B_i, i = 1, \ldots, m$ to denote arbitrary elements in $\langle \mG \rangle$ (when appearing in \emph{strings}, they are elements in $\mG$), and Latin letters to denote their entries, i.e.\ $B_i = UT(a_i, b_i, c_i, d_i, e_i, f_i)$.
The variables $B_i$ can depend on the context.

First of all, we need some results on the structure of products in $\UT(4, \Z)$.
For a positive integer $m$, denote by $\Sym_m$ the permutation group of the set $\{1, \ldots, m\}$.
Throughout this paper, given some matrices $B_1, \ldots, B_m \in \langle \mG \rangle$, we will often be computing the product of strings of the form $B_{\sigma(1)}^t \cdots B_{\sigma(m)}^t$, where $\sigma \in \Sym_m$ and $t \in \Zp$.
The overall idea is to find various strings $B_{\sigma(1)}^t \cdots B_{\sigma(m)}^t$ whose product is in $\U_1 \overset{\tau}{\cong} \Z^3$, then use them to generate an abelian semigroup containing the identity matrix.
Let us define the following important values and abbreviations that will be used throughout this paper.
These complicated formulas are related to the \emph{logarithm} of the matrices $B_i$, and readers can for the time being ignore their exact form and treat them as black boxes.
\begin{notation}\label{not:latin}
Given a series of matrices $B_1, \ldots, B_m$ where $B_i = UT(a_i, b_i, c_i; d_i, e_i, f_i), i = 1, \ldots, m$, we introduce the following notation:
\begin{enumerate}[(i)]
    \item For $\sigma \in \Sym_m, t \in \Zp$, 
    \begin{equation}
    B(\sigma, t) \coloneqq B_{\sigma(1)}^t \cdots B_{\sigma(m)}^t.
    \end{equation}
    \item For $\sigma \in \Sym_m$,
    \begin{multline}\label{eq:defXsigma}
     D_{\sigma} \coloneqq \sum_{i < j} a_{\sigma(i)} b_{\sigma(j)} + \frac{1}{2}\sum_{i=1}^{m}a_{i} b_{i}, \quad 
    E_{\sigma} \coloneqq \sum_{i < j} b_{\sigma(i)} c_{\sigma(j)} + \frac{1}{2}\sum_{i=1}^{m}b_{i} c_{i}, \\
     F_{\sigma} \coloneqq \sum_{i < j < k} a_{\sigma(i)} b_{\sigma(j)} c_{\sigma(k)} + \frac{1}{2}\sum_{i < j}(a_{\sigma(i)} b_{\sigma(i)} c_{\sigma(j)} + a_{\sigma(i)} b_{\sigma(j)} c_{\sigma(j)}) + \frac{1}{6}\sum_{i = 1}^m a_{i} b_{i} c_{i}, \\
     \\
        G_{\sigma} \coloneqq \sum_{i < j} (a_{\sigma(i)} e_{\sigma(j)} + d_{\sigma(i)} c_{\sigma(j)} - \frac{1}{2}a_{\sigma(i)} b_{\sigma(j)} c_{\sigma(j)} - \frac{1}{2} a_{\sigma(i)} b_{\sigma(i)} c_{\sigma(j)}) \\
     + \frac{1}{2}\sum_{i=1}^m (a_i e_i + d_i c_i - a_i b_i c_i).
    \end{multline}
    \item For $i = 1, \ldots, m$,
    \begin{equation}\label{eq:defXi}
    D_{i} \coloneqq d_i - \frac{1}{2}a_i b_i, \quad
    E_{i} \coloneqq e_i - \frac{1}{2}b_i c_i, \quad
    F_{i} \coloneqq f_i - \frac{1}{2}(a_i e_i + d_i c_i) + \frac{1}{3}a_i b_i c_i.
    \end{equation}
\end{enumerate}
\end{notation}

The following proposition gives an exact expression for $B(\sigma, t)$.
Because of the heavily computational nature of most of our propositions, their proofs are given in Appendix~\ref{app:proofs}.

\begin{restatable}{prop}{exactpr}\label{prop:exactexpr}
Let $B_i = UT(a_i, b_i, c_i; d_i, e_i, f_i), i = 1, \ldots, m$, $\sigma \in \Sym_m, t \in \Zp$, then
\begin{multline}
  B(\sigma, t) = UT\left(t \sum_{i=1}^m a_i, t \sum_{i=1}^m b_i, t \sum_{i=1}^m c_i; \right.\\ 
  \left. t^2 D_{\sigma} + t \sum_{i=1}^m D_i, t^2 E_{\sigma} + t \sum_{i=1}^m E_i, t^3 F_{\sigma} + t^2 G_{\sigma} + t \sum_{i=1}^m F_i\right).
\end{multline}
\end{restatable}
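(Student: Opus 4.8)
The plan is to compute $B(\sigma,t) = B_{\sigma(1)}^t \cdots B_{\sigma(m)}^t$ in two stages: first determine the power $B_i^t$ of a single unitriangular matrix, then multiply the $m$ factors together in the order prescribed by $\sigma$. For the first stage, writing $B_i = I + N_i$ where $N_i$ is strictly upper triangular, one has $B_i^t = I + t N_i + \binom{t}{2} N_i^2 + \binom{t}{3} N_i^3$ since $N_i^4 = 0$; expanding $\binom{t}{2} = \frac{t^2 - t}{2}$ and $\binom{t}{3} = \frac{t^3 - 3t^2 + 2t}{6}$ and reading off entries gives $B_i^t = \UT(ta_i, tb_i, tc_i;\, t^2 \tilde D_i + t(\cdots),\, \ldots)$ where the coefficients are exactly engineered to match the $D_i, E_i, F_i$ of Notation~\ref{not:latin}(iii). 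Concretely, the $(1,3)$ entry of $B_i^t$ is $t d_i + \binom{t}{2} a_i b_i = t D_i + \frac{t^2}{2} a_i b_i$ (using $D_i = d_i - \frac12 a_i b_i$), and similarly for the $(2,4)$ and $(1,4)$ entries, the latter involving the degree-three term $\binom{t}{3} a_i b_i c_i$.

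For the second stage, I would prove the product formula for a general ordered product $C_1 \cdots C_m$ of unitriangular matrices (not necessarily powers), establishing that if $C_i = \UT(a_i,b_i,c_i;d_i,e_i,f_i)$ then $C_1 \cdots C_m = \UT\big(\sum a_i, \sum b_i, \sum c_i;\, \sum_{i<j} a_i b_j + \sum d_i,\, \sum_{i<j} b_i c_j + \sum e_i,\, \sum_{i<j<k} a_i b_j c_k + \sum_{i<j}(a_i e_j + d_i c_j) + \sum f_i\big)$. This is a straightforward induction on $m$: the diagonal-adjacent entries accumulate additively, the $(1,3)$ and $(2,4)$ entries pick up cross terms $a_i b_j$ and $b_i c_j$ with $i<j$ from the matrix multiplication, and the $(1,4)$ entry picks up both the triple products $a_i b_j c_k$ with $i<j<k$ and the mixed terms $a_i e_j$ (from row-$1$ times the $(2,4)$ block) and $d_i c_j$ (from the $(1,3)$ block times column-$4$). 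I would record this as a short lemma or fold it into the proof.

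The final step is to substitute the single-power entries into the general ordered-product formula with $C_i = B_{\sigma(i)}^t$ and collect terms by powers of $t$. The $(1,2),(2,3),(3,4)$ entries are immediate. For the $(1,3)$ entry, $\sum_{i<j}(t a_{\sigma(i)})(t b_{\sigma(j)}) + \sum_i \big(t D_i + \tfrac{t^2}{2} a_i b_i\big) = t^2\big(\sum_{i<j} a_{\sigma(i)} b_{\sigma(j)} + \tfrac12 \sum_i a_i b_i\big) + t \sum_i D_i = t^2 D_\sigma + t\sum D_i$, as claimed; the $(2,4)$ entry is symmetric. The $(1,4)$ entry is the laborious one: it combines the triple sum $t^3 \sum_{i<j<k} a_{\sigma(i)} b_{\sigma(j)} c_{\sigma(k)}$, the cross terms $\sum_{i<j}\big( (ta_{\sigma(i)})(\text{$(2,4)$-entry of }B_{\sigma(j)}^t) + (\text{$(1,3)$-entry of }B_{\sigma(i)}^t)(tc_{\sigma(j)})\big)$, and the diagonal contributions $\sum_i (\text{$(1,4)$-entry of }B_i^t)$, each of which contributes at several degrees in $t$; regrouping these must reproduce $t^3 F_\sigma + t^2 G_\sigma + t\sum F_i$ with $F_\sigma, G_\sigma, F_i$ exactly as defined in Notation~\ref{not:latin}. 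I expect this bookkeeping for the $(1,4)$ coefficient — reconciling the $t^2$ terms in particular, which is where $G_\sigma$ with its $-\tfrac12 a_{\sigma(i)} b_{\sigma(j)} c_{\sigma(j)}$ type corrections lives — to be the main obstacle; the definitions in Notation~\ref{not:latin} are clearly reverse-engineered to make this work, so the task is careful verification rather than discovery.
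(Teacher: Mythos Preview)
Your proposal is correct and follows essentially the same two-stage strategy as the paper: first compute $B_i^t$, then feed the result into the general ordered-product formula (which the paper records as Lemma~\ref{lem:prodexpr}) and collect by powers of $t$. The only cosmetic difference is that you compute $B_i^t$ via the binomial expansion $(I+N_i)^t=\sum_{k=0}^{3}\binom{t}{k}N_i^k$, whereas the paper uses $B_i^t=\exp(t\log B_i)$; both routes produce the same entries, and the paper additionally reduces to $\sigma=\iota$ at the outset by symmetry rather than carrying a general $\sigma$ through the substitution.
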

Notice that $B(\sigma, t) \in \U_1$ if and only if $\sum_{i=1}^m a_i = \sum_{i=1}^m b_i = \sum_{i=1}^m c_i = 0$, a condition that does not depend on the value of $t$.

Proposition \ref{prop:exactexpr} shows that, if $B(\sigma, t)$ is in $\U_1$, then as $t \rightarrow \infty$, the asymptotic behaviour of $\tau(B(\sigma, t))$ approaches the vector $(t^2 D_{\sigma}, t^2 E_{\sigma}, t^3 F_{\sigma})$, provided that $D_{\sigma}, E_{\sigma}, F_{\sigma}$ do not vanish.
Therefore, the hope is that, as $t, \sigma$ vary, the vectors $(t^2 D_{\sigma}, t^2 E_{\sigma}, t^3 F_{\sigma})$ can generate $\R^3$ as an $\Rp$-cone, barring a few degenerate cases.
If these degenerate cases do not happen, then the different vectors $\tau(B(\sigma, t))$ will also generate $\R^3$ as an $\Rp$-cone.
In particular, the identity element in $\R^3$ can be generated by $\tau(B(\sigma, t))$ as an additive semigroup, giving a positive answer to the Identity Problem.
For the degenerate cases, they will be treated individually.
As it will turn out, there are only two types of degeneracy (which may occur simultaneously):
\begin{enumerate}[(i)]
    \item $F_{\sigma} = 0$ for all $\sigma$.
    \item For some $p, r \in \Q$, possibly zero, we have $pD_{\sigma} = rE_{\sigma}$ for all $\sigma$.
\end{enumerate}
When (i) occurs, the asymptotic behaviour of $\tau(B(\sigma, t))$ approaches the vector \\
$(t^2 D_{\sigma}, t^2 E_{\sigma}, t^2 G_{\sigma})$, since $G_{\sigma}$ is the second most dominant term after $F_{\sigma}$. 
This situation reminds us of the Identity Problem for $\Heis$, and can be solved in a similar way.
When (ii) occurs, the vectors $(t^2 D_{\sigma}, t^2 E_{\sigma}, t^3 F_{\sigma})$ are constrained to a strict linear subspace of $\R^3$. Hence, in order to describe the $\Rp$-cone generated by the vectors $\tau(B(\sigma, t))$, one needs to consider the sub-dominant terms as well, i.e. the terms $t \sum_{i=1}^m D_i, t \sum_{i=1}^m E_i$.

The rest of this paper aims to formalize this idea.
We first exhibit a series of lemmas that characterise these degenerate cases.
Our first lemma shows that, supposing $B(\sigma, t) \in \U_1$, then degenerate case (ii) happens if and only if $\langle \varphi_0(B_1), \ldots, \varphi_0(B_m) \rangle_{\R}$ is degenerate (i.e.\ of dimension at most 2).

\begin{restatable}{lem}{DERtwo}\label{lem:DER2}
Given $p, r \in \R$ and $m \geq 2$. 
Suppose $\sum_{i=1}^m a_i = \sum_{i=1}^m b_i = \sum_{i=1}^m c_i = 0$.
The two following statements are equivalent:
\begin{enumerate}[(i)]
    \item For all $\sigma \in \Sym_m$, $p D_{\sigma} = r E_{\sigma}$.
    \item Either $b_i = 0$ for all $i = 1, \ldots, m$, or there exist $q \in \R$, such that $p a_i + q b_i + r c_i = 0$ for all $i = 1, \ldots, m$.
\end{enumerate}
\end{restatable}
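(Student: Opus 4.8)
The plan is to reduce the ``global'' condition (i) to a ``local'' identity on pairs of indices and then to finish by elementary linear algebra. The starting point is a \emph{difference formula}: if $\sigma, \sigma' \in \Sym_m$ agree in every position except two adjacent ones $k, k+1$, where $\sigma$ takes the values $u$ then $v$ while $\sigma'$ takes $v$ then $u$, then I claim
\[
D_\sigma - D_{\sigma'} = a_u b_v - a_v b_u, \qquad E_\sigma - E_{\sigma'} = b_u c_v - b_v c_u.
\]
This is immediate from the definitions in Notation~\ref{not:latin}: in the double sum $\sum_{i<j} a_{\sigma(i)} b_{\sigma(j)}$ the only ordered pair whose contribution changes under the swap is the one indexed by the positions $(k,k+1)$ themselves (contributing $a_u b_v$ for $\sigma$ and $a_v b_u$ for $\sigma'$); every pair meeting $\{k,k+1\}$ in exactly one position contributes the same total in both orderings, and the term $\frac{1}{2}\sum_i a_i b_i$ is order-independent. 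The claim for $E$ is identical with $(a,b)$ replaced by $(b,c)$. (One may also read this off Proposition~\ref{prop:exactexpr} at $t=1$, since $\tau_d(B(\sigma,1)) = D_\sigma + \sum_i D_i$ and swapping two adjacent factors of a product of upper-left $\UT(3,\Z)$-blocks changes the $(1,3)$-entry by exactly $a_u b_v - a_v b_u$.)

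For (i)$\Rightarrow$(ii), I would fix arbitrary $u \neq v$ (this is where $m \geq 2$ is used), take any $\sigma$ with $\sigma(1)=u$, $\sigma(2)=v$, and let $\sigma'$ swap positions $1$ and $2$. Subtracting $p D_{\sigma'} = r E_{\sigma'}$ from $p D_\sigma = r E_\sigma$ and invoking the difference formula yields $p(a_u b_v - a_v b_u) = r(b_u c_v - b_v c_u)$, which rearranges to
\[
b_v\,(p a_u + r c_u) = b_u\,(p a_v + r c_v) \qquad \text{for all } u, v.
\]
Writing $g_i := p a_i + r c_i$, this says $b_v g_u = b_u g_v$, i.e.\ the vectors $(b_i)_i$ and $(g_i)_i$ are proportional. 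If $(b_i)_i = \bzer$ we are in the first alternative of (ii); otherwise I pick $i_0$ with $b_{i_0} \neq 0$, conclude $g_i = (g_{i_0}/b_{i_0})\,b_i$ for all $i$, and set $q := -g_{i_0}/b_{i_0}$ to get $p a_i + q b_i + r c_i = 0$ for all $i$, the second alternative.

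For (ii)$\Rightarrow$(i): if all $b_i = 0$, then every monomial of $D_\sigma$ and of $E_\sigma$ carries a factor $b_i$, so both vanish and (i) is trivial. Otherwise pick $q$ with $p a_i = -q b_i - r c_i$ for all $i$, substitute into $p D_\sigma$, and regroup to obtain
\[
p D_\sigma = -q\Bigl(\sum_{i<j} b_{\sigma(i)} b_{\sigma(j)} + \tfrac{1}{2}\sum_i b_i^2\Bigr) - r\Bigl(\sum_{i<j} c_{\sigma(i)} b_{\sigma(j)} + \tfrac{1}{2}\sum_i b_i c_i\Bigr).
\]
The first bracket equals $\frac{1}{2}\bigl(\sum_i b_i\bigr)^2 = 0$, hence $p D_\sigma - r E_\sigma = -r\bigl(\sum_{i<j} c_{\sigma(i)} b_{\sigma(j)} + \sum_{i<j} b_{\sigma(i)} c_{\sigma(j)} + \sum_i b_i c_i\bigr)$; and since $\sum_{i<j}(c_{\sigma(i)} b_{\sigma(j)} + b_{\sigma(i)} c_{\sigma(j)}) = \sum_{k \neq l} b_k c_l = (\sum_k b_k)(\sum_l c_l) - \sum_k b_k c_k$, the whole expression collapses to $-r(\sum_k b_k)(\sum_l c_l) = 0$. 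Thus $p D_\sigma = r E_\sigma$ for all $\sigma$.

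The only step requiring genuine care is the difference formula — keeping track of precisely which monomials of $D_\sigma$ and $E_\sigma$ are sensitive to an adjacent transposition. Everything afterwards is routine algebra, but it is essential throughout to use the three linear relations $\sum_i a_i = \sum_i b_i = \sum_i c_i = 0$: without them the cancellations in both directions fail.
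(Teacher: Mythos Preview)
Your proof is correct and follows essentially the same route as the paper's: both directions hinge on the adjacent-transposition difference formula $D_\sigma - D_{\sigma'} = a_u b_v - a_v b_u$ (and its analogue for $E$) to extract the pairwise relation $b_v(pa_u + rc_u) = b_u(pa_v + rc_v)$, followed by the same proportionality argument; and for (ii)$\Rightarrow$(i) both substitute $pa_i = -qb_i - rc_i$ and collapse the resulting expression using $\sum_i b_i = \sum_i c_i = 0$. One small remark: your closing comment that the linear relations are ``essential throughout'' slightly overstates things for the (i)$\Rightarrow$(ii) direction, since the difference formula and the proportionality step there are purely combinatorial and do not invoke $\sum a_i = \sum b_i = \sum c_i = 0$; those hypotheses are only needed in (ii)$\Rightarrow$(i).
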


The next lemma shows that if $B(\sigma, t) \in \U_1$, then by ``inverting'' $\sigma$, we get a permutation $\sigma'$ such that $(D_{\sigma}, E_{\sigma})$ and $(D_{\sigma'}, E_{\sigma'})$ are opposites of one another.
\begin{restatable}{lem}{DEinv}\label{lem:DEinv}
Suppose $\sum_{i=1}^m a_i = \sum_{i=1}^m b_i = \sum_{i=1}^m c_i = 0$, $m \geq 2$.
For every $\sigma \in \Sym_m$, there exists $\sigma' \in \Sym_m$, such that $(D_{\sigma'}, E_{\sigma'}) = - (D_{\sigma}, E_{\sigma})$.
\end{restatable}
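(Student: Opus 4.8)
The plan is to exhibit $\sigma'$ explicitly as $\sigma' = \sigma \circ \rev$, where $\rev \in \Sym_m$ is the order-reversing permutation $\rev(i) = m+1-i$; equivalently, the sequence $\sigma'(1), \dots, \sigma'(m)$ is $\sigma(m), \sigma(m-1), \dots, \sigma(1)$, so that $B(\sigma', t)$ is obtained from $B(\sigma, t)$ by reversing the order of the (blocks of) factors. I expect this one choice of $\sigma'$ to handle the pair $(D_\sigma, E_\sigma)$ simultaneously: reversing the order of the factors turns the ``ordered'' double sums $\sum_{i<j}a_{\sigma(i)}b_{\sigma(j)}$ and $\sum_{i<j}b_{\sigma(i)}c_{\sigma(j)}$ appearing in Notation~\ref{not:latin}(ii) into their reverse-ordered counterparts, and the two versions add up to a fully symmetric sum which is forced to vanish by the hypotheses $\sum_i a_i = \sum_i b_i = \sum_i c_i = 0$.

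The first step is to compute $D_\sigma + D_{\sigma'}$. Applying the substitution $i \mapsto m+1-i$, $j \mapsto m+1-j$ to $\sum_{i<j}a_{\sigma'(i)}b_{\sigma'(j)}$, which interchanges the conditions $i<j$ and $i>j$, rewrites this sum as $\sum_{i<j}a_{\sigma(j)}b_{\sigma(i)}$. Since the diagonal term $\tfrac12\sum_{i=1}^m a_i b_i$ in the definition of $D_\sigma$ does not depend on the permutation, adding the two expressions yields
\[
D_\sigma + D_{\sigma'} = \sum_{i<j}\bigl(a_{\sigma(i)}b_{\sigma(j)} + a_{\sigma(j)}b_{\sigma(i)}\bigr) + \sum_{i=1}^m a_i b_i = \sum_{i\neq j} a_{\sigma(i)}b_{\sigma(j)} + \sum_{i=1}^m a_i b_i.
\]
The second step is the elementary ``complete the square'' identity $\sum_{i\neq j}a_{\sigma(i)}b_{\sigma(j)} = \bigl(\sum_i a_i\bigr)\bigl(\sum_i b_i\bigr) - \sum_i a_i b_i$, which together with $\sum_i a_i = \sum_i b_i = 0$ gives $\sum_{i\neq j}a_{\sigma(i)}b_{\sigma(j)} = -\sum_i a_i b_i$, and hence $D_\sigma + D_{\sigma'} = 0$. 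The verbatim same argument with $(b_i, c_i)$ in place of $(a_i, b_i)$ and the hypotheses $\sum_i b_i = \sum_i c_i = 0$ gives $E_\sigma + E_{\sigma'} = 0$. Thus $(D_{\sigma'}, E_{\sigma'}) = -(D_\sigma, E_\sigma)$, as required.

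I do not anticipate a genuine obstacle: the whole argument is the index-substitution bookkeeping of the first step followed by the cancellation identity of the second. The only point requiring mild care is to verify that the half-weighted diagonal terms $\tfrac12\sum a_i b_i$ and $\tfrac12\sum b_i c_i$, being permutation-independent, combine across $D_\sigma$ and $D_{\sigma'}$ (resp. $E_\sigma$ and $E_{\sigma'}$) into exactly the full diagonal sums $\sum a_i b_i$ and $\sum b_i c_i$ produced by the ``complete the square'' identity — which is precisely the reason those $\tfrac12$ normalisations were built into the definitions of $D_\sigma$ and $E_\sigma$. The hypothesis $m \geq 2$ is not essential to the computation (for $m=1$ the statement is vacuous given $\sum a_i = 0$), but is harmless to keep.
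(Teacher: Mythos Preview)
Your proposal is correct and takes essentially the same approach as the paper: both choose $\sigma'(i) = \sigma(m+1-i)$ and verify $D_{\sigma'} = -D_\sigma$ (and likewise for $E$) by an index-reversal computation that collapses to $(\sum_i a_i)(\sum_i b_i) = 0$. The only cosmetic difference is that the paper first rewrites $D_\sigma$ in the antisymmetric form $\tfrac12\sum_{i<j}(a_{\sigma(i)}b_{\sigma(j)} - a_{\sigma(j)}b_{\sigma(i)})$ before reversing, whereas you compute $D_\sigma + D_{\sigma'}$ directly; the underlying identity is the same.
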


We then show that, if $B(\sigma, t) \in \U_1$, then the value of $F_{\sigma}$ for different $\sigma \in \Sym_m$ sums up to zero:

\begin{restatable}{lem}{Fsumzero}\label{lem:Fsum0}
Suppose $\sum_{i=1}^m a_i = \sum_{i=1}^m b_i = \sum_{i=1}^m c_i = 0$, where $m \geq 3$.
Then we have $\sum_{\sigma \in \Sym_m} F_{\sigma} = 0$.
\end{restatable}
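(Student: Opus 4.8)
The plan is to sum the formula for $F_{\sigma}$ from Notation~\ref{not:latin} over all $\sigma \in \Sym_m$, collect the resulting monomials according to how many of their three indices coincide, and then use the hypothesis $\sum_i a_i = \sum_i b_i = \sum_i c_i = 0$ to see that the total collapses to $0$. Throughout, write $\Sigma = \sum_{i=1}^m a_i b_i c_i$.

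First I would compute the combinatorial multiplicities produced by summing over $\Sym_m$. In the triple-product part, for each ordered triple $(p,q,r)$ of \emph{distinct} indices, the number of pairs $(\sigma,\{i<j<k\})$ with $(\sigma(i),\sigma(j),\sigma(k))=(p,q,r)$ is $\binom{m}{3}(m-3)! = m!/6$, so $\sum_{\sigma}\sum_{i<j<k} a_{\sigma(i)}b_{\sigma(j)}c_{\sigma(k)} = \frac{m!}{6}\sum_{p,q,r\ \mathrm{distinct}} a_p b_q c_r$. In the middle part, every ordered pair of distinct indices is realised $\binom{m}{2}(m-2)! = m!/2$ times by the pairs $i<j$, so this part contributes $\frac{m!}{4}\bigl(\sum_{p\neq q} a_p b_p c_q + \sum_{p\neq q} a_p b_q c_q\bigr)$; and the last part $\frac16\sum_i a_i b_i c_i$, being independent of $\sigma$, contributes $\frac{m!}{6}\Sigma$. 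This is the step where $m\geq 3$ matters, namely so that triples of distinct indices exist at all.

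Next I would reduce everything to $\Sigma$. Since $\sum_q c_q = 0$, we get $\sum_{p\neq q} a_p b_p c_q = \sum_p a_p b_p\bigl(\sum_q c_q - c_p\bigr) = -\Sigma$, and symmetrically $\sum_{p\neq q} a_p b_q c_q = -\Sigma$ and $\sum_{p\neq q} a_p b_q c_p = -\Sigma$. Expanding the identity $0 = \bigl(\sum_p a_p\bigr)\bigl(\sum_q b_q\bigr)\bigl(\sum_r c_r\bigr) = \sum_{p,q,r} a_p b_q c_r$ and splitting the sum according to which of $p,q,r$ coincide yields $0 = \sum_{p,q,r\ \mathrm{distinct}} a_p b_q c_r - 3\Sigma + \Sigma$, hence $\sum_{p,q,r\ \mathrm{distinct}} a_p b_q c_r = 2\Sigma$. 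Substituting into the expression above gives
\[
\sum_{\sigma\in\Sym_m} F_{\sigma} = \frac{m!}{6}\cdot 2\Sigma \;-\; \frac{m!}{4}\cdot 2\Sigma \;+\; \frac{m!}{6}\Sigma \;=\; m!\Bigl(\frac13 - \frac12 + \frac16\Bigr)\Sigma \;=\; 0.
\]

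I do not expect a genuine obstacle: the argument is pure bookkeeping. The only points requiring care are getting the three multiplicities $m!/6$, $m!/4$, $m!/6$ right, and getting the signs right in the reductions to $\Sigma$ — in particular, checking that, when expanding the product of the three vanishing sums, every index configuration is counted exactly once.
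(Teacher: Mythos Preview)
Your proposal is correct and follows essentially the same approach as the paper: both arguments sum over $\Sym_m$, use the orbit-counting identities to replace the permutation sums by $\frac{m!}{6}\sum_{p,q,r\ \text{distinct}} a_p b_q c_r$, $\frac{m!}{4}\sum_{p\neq q}(a_p b_p c_q + a_p b_q c_q)$, and $\frac{m!}{6}\Sigma$, and then reduce everything to $\Sigma$ using the vanishing of $\sum a_i$, $\sum b_i$, $\sum c_i$, obtaining the same cancellation $m!\bigl(\tfrac13-\tfrac12+\tfrac16\bigr)\Sigma=0$. Your write-up is in fact a bit more economical than the paper's, since you factor out $\Sigma$ explicitly and handle the inclusion--exclusion for the triple sum in one line.
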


The last lemma characterizes situations where the aforementioned degenerate case (i) happens.
Its proof relies on the aid of a computational algebraic geometry software due to the complexity of the expressions $F_{\sigma}$.

\begin{restatable}{lem}{FRone}\label{lem:FR1}
Let $m = 4$.
Suppose $\sum_{i=1}^4 a_i = \sum_{i=1}^4 b_i = \sum_{i=1}^4 c_i = 0$.
Then, $F_{\sigma} = 0$ for all $\sigma \in \Sym_4$, if and only if at least one of the following four conditions holds:
\begin{enumerate}[(i)]
    \item $a_1 = a_2 = a_3 = a_4 = 0$.
    \item $b_1 = b_2 = b_3 = b_4 = 0$.
    \item $c_1 = c_2 = c_3 = c_4 = 0$.
    \item 
    $
    \operatorname{rank}
    \begin{pmatrix}
    a_1 & a_2 & a_3 & a_4 \\
    b_1 & b_2 & b_3 & b_4  \\
    c_1 & c_2 & c_3 & c_4 \\
    \end{pmatrix}
    \leq 1.
    $
\end{enumerate}
\end{restatable}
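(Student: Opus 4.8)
The plan is to prove the two implications separately.

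For the ``if'' direction, observe first that every monomial in the expression for $F_\sigma$ in Notation~\ref{not:latin} contains exactly one factor $a_\bullet$, one factor $b_\bullet$ and one factor $c_\bullet$; hence conditions (i), (ii), (iii) each make every $F_\sigma$ vanish identically. For (iv), if the matrix has rank $\le 1$ we may write $(a_i,b_i,c_i) = (u_1 v_i, u_2 v_i, u_3 v_i)$ for some $(u_1,u_2,u_3)\in\R^3$ and $\boldsymbol v\in\R^4$. If $(u_1,u_2,u_3)=\bzer$ then $F_\sigma=0$ trivially; otherwise $\sum_i a_i=\sum_i b_i=\sum_i c_i=0$ forces $\sum_i v_i=0$, and substituting gives $F_\sigma = u_1u_2u_3\,P(\boldsymbol v)$ with $P(\boldsymbol v) = \sum_{i<j<k}v_iv_jv_k + \frac12\sum_{i<j}(v_i^2v_j + v_iv_j^2) + \frac16\sum_i v_i^3$. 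Expanding $(v_1+v_2+v_3+v_4)^3$ shows $P(\boldsymbol v) = \frac16(\sum_i v_i)^3 = 0$, so $F_\sigma = 0$.

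For the ``only if'' direction, assume $F_\sigma=0$ for all $\sigma\in\Sym_4$. If one of the vectors $\boldsymbol a=(a_1,\dots,a_4)$, $\boldsymbol b$, $\boldsymbol c$ is zero we are in case (i)--(iii), so assume all three are nonzero and prove that the matrix has rank $\le 1$, i.e.\ that $\boldsymbol a,\boldsymbol b,\boldsymbol c$ are pairwise proportional. Since the adjacent transpositions $(1\,2),(2\,3),(3\,4)$ generate $\Sym_4$, the statement ``all the $F_\sigma$ are equal'' is equivalent to the vanishing of the differences $F_\sigma - F_{\sigma\tau}$ over adjacent transpositions $\tau$; and once all the $F_\sigma$ agree, Lemma~\ref{lem:Fsum0} gives $24F_\iota = \sum_\sigma F_\sigma = 0$, so our hypothesis is in fact equivalent to the vanishing of all these differences. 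Computing them, and simplifying with $\sum_i a_i=\sum_i b_i=\sum_i c_i=0$, one finds that $(1\,2)$ and $(3\,4)$ both yield the single family of cubic identities
\[
\text{(R1)}\qquad 2(a_ib_jc_i - a_jb_ic_j) + (a_ib_jc_j - a_jb_ic_i) = a_ib_ic_j - a_jb_jc_i \qquad (1\le i\ne j\le 4),
\]
whereas $(2\,3)$ yields a second family (R2) that also involves the two indices outside $\{i,j\}$.

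The argument now has two steps. Step one: (R1) alone forces proportionality to propagate. Specialising (R1) at $\boldsymbol b = \lambda\boldsymbol a$ with $\lambda\ne 0$ reduces it to $(a_i+a_j)(a_ic_j-a_jc_i)=0$ for all $i\ne j$, and this implies $\boldsymbol c\in\R\boldsymbol a$ by the elementary lemma that, for $\boldsymbol x\in\R^4\setminus\{\bzer\}$ and $\boldsymbol y\in\R^4$ with $\sum_i x_i=\sum_i y_i=0$, the relations $(x_i+x_j)(x_iy_j-x_jy_i)=0$ force $\boldsymbol y\in\R\boldsymbol x$ (they give $\supp\boldsymbol y\subseteq\supp\boldsymbol x$, and then $\boldsymbol x$ takes at most two opposite values on its support, which one treats case-by-case for $|\supp\boldsymbol x|\in\{2,4\}$). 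The other pairings are symmetric, using that $F_{\bar\sigma}(\boldsymbol a,\boldsymbol b,\boldsymbol c) = F_\sigma(\boldsymbol c,\boldsymbol b,\boldsymbol a)$, where $\bar\sigma$ is the reversal of $\sigma$, so that the whole hypothesis is invariant under $\boldsymbol a\leftrightarrow\boldsymbol c$. Hence, if the matrix does not have rank $\le 1$, then $\boldsymbol a,\boldsymbol b,\boldsymbol c$ are pairwise non-proportional. Step two: this last situation is impossible. Using (R1), (R2), and the fact (from Lemma~\ref{lem:Fsum0}) that the common value $F_\iota$ of the $F_\sigma$ equals $0$, one performs a finite case analysis according to which of the $2\times 2$ minors $a_ib_j-a_jb_i$, $a_ic_j-a_jc_i$, $b_ic_j-b_jc_i$ vanish --- equivalently, which coincidence patterns among the coordinates are compatible with the sum-zero constraints --- while using the invariance of the set-up under independent rescalings of $\boldsymbol a,\boldsymbol b,\boldsymbol c$, under simultaneous coordinate permutations, and under $\boldsymbol a\leftrightarrow\boldsymbol c$ to keep the number of cases down, and one checks that each surviving case contradicts $F_\iota=0$. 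This concluding case analysis is the main technical obstacle; the rest is short and formula-driven.
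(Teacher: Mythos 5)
Your ``if'' direction is correct and complete: the observation that every monomial of $F_\sigma$ has exactly one factor from each of $\boldsymbol a,\boldsymbol b,\boldsymbol c$ disposes of (i)--(iii), and the identity $P(\boldsymbol v)=\tfrac16(\sum_i v_i)^3$ disposes of (iv). Your reduction of the ``only if'' hypothesis to the vanishing of the adjacent-transposition differences together with Lemma~\ref{lem:Fsum0} is also sound, and I have checked that the relation you call (R1) is indeed what the swaps of positions $1,2$ and $3,4$ both produce after using $\sum_i c_i=0$ (resp.\ $\sum_i a_i=0$). Two smaller issues in Step one: the parenthetical justification of your ``elementary lemma'' is wrong as stated ($\boldsymbol x=(2,-1,-1,0)$ satisfies the hypotheses but does not take two opposite values on its support), although the lemma itself is true and provable by a short case analysis on the pairs with $x_i+x_j=0$; and the reversal symmetry $\boldsymbol a\leftrightarrow\boldsymbol c$ only transports the pairing $\boldsymbol b=\lambda\boldsymbol a$ to $\boldsymbol b=\lambda\boldsymbol c$, so the third pairing $\boldsymbol c=\mu\boldsymbol a$ requires its own specialisation of (R1) (it does reduce to $(a_i+a_j)(a_ib_j-a_jb_i)=0$, so this is repairable).

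The genuine gap is Step two. The entire difficulty of the lemma is to show that no solution exists with $\boldsymbol a,\boldsymbol b,\boldsymbol c$ all nonzero and pairwise non-proportional, and at that exact point you write only that ``one performs a finite case analysis \dots and one checks that each surviving case contradicts $F_\iota=0$,'' while yourself flagging it as ``the main technical obstacle.'' No cases are enumerated, the relation (R2) is never written down, and it is not argued why the proposed stratification by vanishing $2\times 2$ minors terminates in contradictions rather than in further solution families. This is precisely the content that the paper could only certify by computing a primary decomposition of the ideal generated by the twenty-four polynomials $F_\sigma$ and the three linear forms in SageMath; the need for that computation is evidence that this step is not routine. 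As it stands, the proposal is a plausible plan for a by-hand proof (which would be a nice improvement over the computer-assisted argument), but the decisive implication is asserted rather than proved, so the proof cannot be accepted in its current form.
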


A common idea of Lemma~\ref{lem:DER2} and Lemma~\ref{lem:FR1} is that the degeneracy of $(D_{\sigma}, E_{\sigma}, F_{\sigma})$ is related to the degeneracy of $\varphi_0(B_1), \ldots, \varphi_0(B_m)$.
Hence, it is natural to consider the degeneracy of the vectors $\varphi_0(A_i), i = 1, \ldots, k$, where $A_i \in \mG$ are the elements of the generating set.
This degeneracy is described by the dimension of the linear space $\mC$ discussed at the beginning of the section.
This justifies the classification according to $\dim(\mC)$.
We now begin the case analysis.

\subsection{$\mC$ has dimension 3}\label{subsection:case3}
The main idea of this case is that, for a well chosen set of matrices $B_1, B_2, B_3, B_4 \in \langle \mG \rangle$, the vectors $(D_{\sigma}, E_{\sigma}, F_{\sigma}), \sigma \in \Sym_4$, are not degenerate and the asymptotic behaviour of $\tau(B(\sigma, t))$ approaches the vector $(t^2 D_{\sigma}, t^2 E_{\sigma}, t^3 F_{\sigma})$, leading to a positive answer to the Identity Problem.

Let $B_1, B_2, B_3, B_4 \in \langle \mG \rangle$ with $B_i = UT(a_i, b_i, c_i; d_i, e_i, f_i), i = 1, \ldots, 4$ be such that 
\begin{equation}\label{eq:case3sum4}
    \sum_{i=1}^4 \varphi_0(B_i) = 0
\end{equation}
and
\begin{equation}\label{eq:case3span4}
    \langle \varphi_0(B_1), \varphi_0(B_2), \varphi_0(B_3), \varphi_0(B_4) \rangle_{\Rp} = \mC = \R^3.
\end{equation}
Equation \eqref{eq:case3sum4} shows that $B(\sigma, t) \in \U_1$ for all $\sigma \in \Sym_4, t \in \Zp$.

The following lemma shows that the $d, e$-coordinates of different $\tau(B(\sigma, t))$ generate $\R^2$ as an $\Rp$-cone.
\begin{lemma}\label{lem:case3R2}
Assuming \eqref{eq:case3sum4} and \eqref{eq:case3span4}, we have $\langle \{ \varphi_1(B(\sigma, t)) \mid \sigma \in \Sym_4, t \in \Z \} \rangle_{\Rp} = \R^2$.
\end{lemma}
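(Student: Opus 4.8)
The plan is to use the explicit formula of Proposition~\ref{prop:exactexpr}. By \eqref{eq:case3sum4} we have $\sum_{i=1}^4 a_i=\sum_{i=1}^4 b_i=\sum_{i=1}^4 c_i=0$, so $B(\sigma,t)\in\U_1$ for all $\sigma\in\Sym_4,\ t\in\Zp$, and $\varphi_1(B(\sigma,t))=t^2\bv_\sigma+t\,\boldsymbol{w}$, where $\bv_\sigma:=(D_\sigma,E_\sigma)$ and $\boldsymbol{w}:=(\sum_{i=1}^4 D_i,\ \sum_{i=1}^4 E_i)$ is a fixed vector not depending on $\sigma$ or $t$. The inclusion ``$\subseteq\R^2$'' is trivial, so it remains to show that the vectors $t^2\bv_\sigma+t\boldsymbol{w}$ positively span $\R^2$.

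The first step is to show $\langle\bv_\sigma\mid\sigma\in\Sym_4\rangle_{\R}=\R^2$. If this failed, all $\bv_\sigma$ would lie on a line $\{(x,y):px=ry\}$ with $(p,r)\ne(0,0)$, that is, $pD_\sigma=rE_\sigma$ for every $\sigma\in\Sym_4$. Since $m=4\ge2$ and \eqref{eq:case3sum4} holds, Lemma~\ref{lem:DER2} would then force either $b_i=0$ for all $i$, or the existence of $q\in\R$ with $pa_i+qb_i+rc_i=0$ for all $i$. In the first case $\varphi_0(B_1),\dots,\varphi_0(B_4)$ all lie in the plane $\{b=0\}$; in the second they lie in the plane $\{pa+qb+rc=0\}$, which is a genuine plane since $(p,q,r)\ne(0,0,0)$. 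Either way this contradicts the three-dimensionality \eqref{eq:case3span4}. Hence we may fix $\sigma_1,\sigma_2\in\Sym_4$ with $\bv_{\sigma_1},\bv_{\sigma_2}$ linearly independent, and by Lemma~\ref{lem:DEinv} also $\sigma_1',\sigma_2'\in\Sym_4$ with $\bv_{\sigma_i'}=-\bv_{\sigma_i}$.

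The second step is to absorb the lower-order term $t\boldsymbol{w}$. Write $\boldsymbol{w}=w_1\bv_{\sigma_1}+w_2\bv_{\sigma_2}$ and fix a positive integer $t$ with $t>2\max(|w_1|,|w_2|)$. Given an arbitrary target $\boldsymbol{z}=z_1\bv_{\sigma_1}+z_2\bv_{\sigma_2}\in\R^2$, combine the $i$-th pair $\varphi_1(B(\sigma_i,t))=t^2\bv_{\sigma_i}+t\boldsymbol{w}$ and $\varphi_1(B(\sigma_i',t))=-t^2\bv_{\sigma_i}+t\boldsymbol{w}$ with weights $\tfrac{u+s_i}{2}$ and $\tfrac{u-s_i}{2}$, which produces the vector $s_i t^2\bv_{\sigma_i}+u\,t\boldsymbol{w}$; choosing $s_i:=(z_i-2utw_i)/t^2$ makes the sum over $i=1,2$ equal to exactly $\boldsymbol{z}$, and since $t$ is large one can pick $u\ge0$ large enough that $u\ge|s_i|$ for $i=1,2$, so all four weights are nonnegative. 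Thus every $\boldsymbol{z}\in\R^2$ lies in $\langle\varphi_1(B(\sigma,t))\mid\sigma\in\Sym_4,\ t\in\Z\rangle_{\Rp}$, proving the lemma. (Equivalently, since the symmetric family $\{\pm\bv_{\sigma_1},\pm\bv_{\sigma_2}\}$ positively spans $\R^2$, and positively spanning $\R^2$ is an open condition on finite tuples of vectors, the perturbed family $\{\pm\bv_{\sigma_i}+\boldsymbol{w}/t\}_{i=1,2}$ still positively spans $\R^2$ for $t$ large, and rescaling its members by $t^2>0$ gives the $\varphi_1(B(\sigma_i^{(\prime)},t))$.)

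The only genuinely delicate point is the first step: recognising that the failure of $\langle\bv_\sigma\rangle_\R=\R^2$ is exactly the degenerate situation ruled out by Lemma~\ref{lem:DER2} together with the hypothesis that $\mC$ is three-dimensional. Once the leading-order family $\{\bv_\sigma\}$ is known to span $\R^2$ and to be symmetric under negation (Lemma~\ref{lem:DEinv}), absorbing the contribution of $t\boldsymbol{w}$ is a routine perturbation that becomes negligible as $t\to\infty$.
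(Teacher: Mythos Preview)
Your proof is correct and follows essentially the same route as the paper: first use Lemma~\ref{lem:DER2} together with the three-dimensionality hypothesis~\eqref{eq:case3span4} to show that $\{(D_\sigma,E_\sigma)\}$ spans $\R^2$, then invoke Lemma~\ref{lem:DEinv} to get the opposite pair, and finally absorb the linear term $t\boldsymbol{w}$ for large $t$. The only difference is cosmetic: in the last step the paper argues qualitatively that the angle between $\varphi_1(B(\sigma,t))$ and $(D_\sigma,E_\sigma)$ tends to zero, whereas you give an explicit nonnegative combination (and also mention the equivalent open-condition argument), which is a little more work but makes the step self-contained.
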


\begin{proof}
First, we claim that
$
\langle \{ (D_{\sigma}, E_{\sigma}) \mid \sigma \in \Sym_4\} \rangle_{\R} = \R^2.
$

In fact, suppose to the contrary that $\langle \{ (D_{\sigma}, E_{\sigma}) \mid \sigma \in \Sym_4\} \rangle_{\R}$ has dimension at most 1. 
Then there exist $p, r \in \R$, not both zero, such that for all $\sigma \in \Sym_4$, $p D_{\sigma} = r E_{\sigma}$.
By Lemma \ref{lem:DER2}, this means that either $b_i = 0$ for all $i$ or there exists some $q \in \R$ such that $p a_i + q b_i + r c_i = 0$ for all $i$.
In both cases, the $\R$-linear subspace spanned by $\varphi_0(B_1), \varphi_0(B_2), \varphi_0(B_3), \varphi_0(B_4)$ has dimension at most 2,
contradicting Equation \eqref{eq:case3span4}.
This proves the claim.
Hence, there exist $\sigma_1, \sigma_2 \in \Sym_4$ such that $(D_{\sigma_1}, E_{\sigma_1})$ and $(D_{\sigma_2}, E_{\sigma_2})$ span $\R^2$ as an $\R$-linear space.

Next, by Lemma \ref{lem:DEinv}, there exist $\sigma'_1, \sigma'_2 \in \Sym_4$ such that $(D_{\sigma'_1}, E_{\sigma'_1}) = - (D_{\sigma_1}, E_{\sigma_1})$ and $(D_{\sigma'_2}, E_{\sigma'_2}) = - (D_{\sigma_2}, E_{\sigma_2})$.
It follows that $(D_{\sigma_1}, E_{\sigma_1}), (D_{\sigma_2}, E_{\sigma_2}), (D_{\sigma'_1}, E_{\sigma'_1}), (D_{\sigma'_2}, E_{\sigma'_2})$ generate $\R^2$ as an $\Rp$-cone, and all four vectors are non-zero.

Finally, consider the products $B(\sigma, t)$ with $\sigma \in \{\sigma_1, \sigma_2, \sigma'_1, \sigma'_2\}$.
By Proposition \ref{prop:exactexpr}, when $t \rightarrow +\infty$, we have $\varphi_1(B(\sigma, t)) = (D_{\sigma}, E_{\sigma}) t^2 + O(t)$.
Therefore, when $t$ is large enough, the angle between $\varphi_1(B(\sigma, t))$ and $(D_{\sigma}, E_{\sigma})$ tends to zero, for all $\sigma \in \{\sigma_1, \sigma_2, \sigma'_1, \sigma'_2\}$.
Hence, for large enough $t$, $\varphi_1(B(\sigma_1, t)), \varphi_1(B(\sigma_2, t)), \varphi_1(B(\sigma'_1, t)), \varphi_1(B(\sigma'_2, t))$ generate $\R^2$ as an $\Rp$-cone.
This proves the Lemma.
\end{proof}

The next proposition shows that as $\sigma, t$ vary, the vectors $\tau(B(\sigma, t))$ generate $\R^3$ as an $\Rp$-cone.
\begin{proposition}\label{prop:case3R3}
Assuming \eqref{eq:case3sum4} and \eqref{eq:case3span4}, we have $\langle \{ \tau(B(\sigma, t)) \mid \sigma \in \Sym_4, t \in \Z \} \rangle_{\Rp} = \R^3$.
\end{proposition}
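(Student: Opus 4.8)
The plan is to leverage the two-dimensional statement of Lemma~\ref{lem:case3R2} together with the cubic growth term $t^3 F_\sigma$ appearing in Proposition~\ref{prop:exactexpr}, via a separating-hyperplane argument for convex cones. I would argue by contradiction: write $K = \langle\{\tau(B(\sigma,t)) \mid \sigma \in \Sym_4,\ t \in \Zp\}\rangle_{\Rp}$ (it suffices to treat $t \in \Zp$, which only strengthens the conclusion) and suppose $K \neq \R^3$. Since $K$ is a convex cone, there is a nonzero $n = (n_1, n_2, n_3) \in \R^3$ with $\langle n, \tau(B(\sigma,t))\rangle \geq 0$ for every $\sigma \in \Sym_4$ and every $t \in \Zp$; this cone version of the separating-hyperplane theorem is standard (cf.\ \cite{schrijver1998theory}).

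The first step is to show $n_3 = 0$. By Proposition~\ref{prop:exactexpr}, for each fixed $\sigma$ the expression $\langle n, \tau(B(\sigma,t))\rangle$ is a polynomial in $t$ whose coefficient of $t^3$ equals $n_3 F_\sigma$; a polynomial that is nonnegative on all of $\Zp$ cannot have a negative leading coefficient, so $n_3 F_\sigma \geq 0$ for every $\sigma \in \Sym_4$. Here the essential input is that the numbers $F_\sigma$ actually take both signs. Indeed, if $F_\sigma = 0$ for all $\sigma$, then Lemma~\ref{lem:FR1} (applicable since $m = 4$) forces one of its four conditions to hold, and each of these makes $\langle \varphi_0(B_1), \ldots, \varphi_0(B_4)\rangle_{\R}$ have dimension at most $2$, contradicting \eqref{eq:case3span4}. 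So some $F_\sigma \neq 0$, and since $\sum_{\sigma \in \Sym_4} F_\sigma = 0$ by Lemma~\ref{lem:Fsum0}, there exist $\sigma^+, \sigma^- \in \Sym_4$ with $F_{\sigma^+} > 0$ and $F_{\sigma^-} < 0$. Applying $n_3 F_\sigma \geq 0$ to these two permutations yields $n_3 \geq 0$ and $n_3 \leq 0$, hence $n_3 = 0$.

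It remains to derive a contradiction with $n_3 = 0$ and $(n_1, n_2) \neq (0,0)$. Since $\tau = (\tau_d, \tau_e, \tau_f)$ and $\varphi_1 = (\tau_d, \tau_e)$, the inequality $\langle n, \tau(B(\sigma,t))\rangle \geq 0$ becomes $\langle (n_1, n_2), \varphi_1(B(\sigma,t))\rangle \geq 0$ for all $\sigma \in \Sym_4$ and $t \in \Zp$. Thus the cone $\langle\{\varphi_1(B(\sigma,t)) \mid \sigma \in \Sym_4,\ t \in \Zp\}\rangle_{\Rp}$ lies in the proper half-plane $\{x \in \R^2 : \langle (n_1,n_2), x\rangle \geq 0\}$, directly contradicting Lemma~\ref{lem:case3R2}. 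This proves $K = \R^3$.

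Overall I do not expect a serious obstacle: the argument is essentially sign bookkeeping on the coefficient $n_3 F_\sigma$ combined with the polynomial-growth observation and the half-plane statement of Lemma~\ref{lem:case3R2}. The one point requiring care is making sure the degenerate situations of Lemma~\ref{lem:FR1} are exactly those ruled out by the running hypothesis that $\mC = \R^3$, so that both signs of $F_\sigma$ are guaranteed to occur.
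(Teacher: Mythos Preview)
Your argument is correct and uses the same three ingredients as the paper (Lemma~\ref{lem:case3R2}, Lemma~\ref{lem:Fsum0}, Lemma~\ref{lem:FR1}), but you package them more cleanly. The paper first constructs five explicit vectors: three elements $P_1,P_2,P_3$ whose $\varphi_1$-images generate $\R^2$, plus $B(\sigma_+,t)$ and $B(\sigma_-,t)$ whose $\tau$-images approach $\pm(0,0,1)$; it then runs a compactness argument on the unit sphere, extracting a limit of dual vectors $\bv_t$ and deriving a contradiction. You instead take a single global separating functional $n$ for the whole cone $K$ and read off $n_3 F_\sigma \geq 0$ directly from the sign of the degree-$3$ coefficient of the polynomial $t \mapsto \langle n,\tau(B(\sigma,t))\rangle$, which immediately forces $n_3 = 0$ and reduces to Lemma~\ref{lem:case3R2}. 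This avoids both the explicit five-vector construction and the compactness step, at the cost of the (easy) observation that a convex cone $K \subsetneq \R^3$ has a nonzero dual vector. The only minor point to flag is that Lemma~\ref{lem:case3R2} is stated for $t \in \Z$ while you restrict to $t \in \Zp$; its proof in fact only uses large positive $t$, so your reduction goes through.
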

\begin{proof}
First, note that all $B(\sigma, t)$ have integer coefficients.
By Lemma \ref{lem:case3R2}, there exist elements $P_1, P_2, P_3 \in \langle \{ B(\sigma, t) \mid \sigma \in \Sym_4, t \in \Z \} \rangle$ such that $\varphi_1(P_i), i = 1, 2, 3$ generate $\R^2$ as an $\Rp$-cone (see Figure \ref{fig:3dGen} for an illustration.).

Next, the idea is to find two additional matrices $P_+, P_- \in \{ B(\sigma, t) \mid \sigma \in \Sym_4, t \in \Z \}$, whose images under $\tau$ are relatively close to the $f$-axis in $\R^3$.
By Lemmas \ref{lem:Fsum0} and \ref{lem:FR1}, there exist $\sigma_+, \sigma_- \in \Sym_4$ such that $F_{\sigma_+} > 0, F_{\sigma_-} < 0$.
Indeed, by condition \eqref{eq:case3span4}, none of the four conditions of Lemma \ref{lem:FR1} hold. Thus there exists $\sigma \in \Sym_4$ such that $F_{\sigma} \neq 0$.
Then Lemma \ref{lem:Fsum0} shows we can find $\sigma_+$ and $\sigma_-$ such that $F_{\sigma_+} > 0$ and $F_{\sigma_-} < 0$.

By Proposition \ref{prop:exactexpr}, when $t \rightarrow +\infty$, we have $\tau_f(B(\sigma_{+}, t)) = F_{\sigma_{+}} t^3 + O(t^2)$ and $\tau_f(B(\sigma_{-}, t)) = F_{\sigma_{-}} t^3 + O(t^2)$,
whereas $\tau_d(B(\sigma_{\pm}, t)) = O(t^2)$ and $\tau_e(B(\sigma_{\pm}, t)) = O(t^2)$.
Therefore, when $t$ is large enough, the angle between $\tau(B(\sigma_{+}, t))$ and $(0, 0, 1)$ tends to zero, as well as the angle between $\tau(B(\sigma_{-}, t))$ and $(0, 0, -1)$.

Finally, we claim that there exists $t$ such that $\tau(P_1), \tau(P_2), \tau(P_3)$, $\tau(B(\sigma_+, t))$, $\tau(B(\sigma_-, t))$ generate $\R^3$ as an $\Rp$-cone.
See Figure \ref{fig:3dGen} for an illustration.
To justify this claim, suppose to the contrary that for every $t$, the $\Rp$-cone spanned by the five vectors $\tau(P_1), \tau(P_2), \tau(P_3)$, $\tau(B(\sigma_+, t))$, $\tau(B(\sigma_-, t))$ is a proper subset of $\R^3$.
In other words, if we denote by $\langle \cdot , \cdot \rangle$ the canonical inner product of $\R^3$, then there exists a vector $\bv_t$ with norm 1, such that $\langle \bv_t , \tau(P_i) \rangle \geq 0, i = 1, 2, 3$ and $\langle \bv_t , \tau(B(\sigma_{\pm}, t)) \rangle \geq 0$. 
For example, we can take $\bv_t$ to be any normalized vector in the dual of the cone generated by these five vectors (\cite[Chapter~2.6]{boyd2004convex}).
By the compactness of the unit sphere, $\{\bv_t\}_{t \in \N}$ has a limit point $\bv$.
We have $\langle \bv , \tau(P_i) \rangle \geq 0, i = 1, 2, 3$, so $\bv$ is not orthogonal to the $f$-axis, otherwise $\tau(P_i), i = 1, 2, 3$ would all be on the same side of a hyperplane passing through the $f$-axis, contradicting the fact that their $d,e$-coordinates generate $\R^2$ as an $\Rp$-cone.
Hence, $\langle \bv , (0,0,1) \rangle \neq 0$.
Without loss of generality, suppose $\langle \bv , (0,0,1) \rangle < 0$.
When $t \rightarrow \infty$, the angle between $(0,0,1)$ and $\tau(B(\sigma_+, t))$ tends to zero. 
Therefore, for all large enough $t$, we have $\langle \bv , \tau(B(\sigma_+, t)) \rangle < 0$.
Since $\bv$ is a limit point of $\{\bv_t\}_{t \in \N}$, there exists a large enough $t$ such that $\langle \bv_t , \tau(B(\sigma_+, t)) \rangle < 0$.
This contradicts the fact that $\langle \bv_t , \tau(B(\sigma_+, t)) \rangle \geq 0$ for all $t$.
\end{proof}

\begin{figure}[h]
    \centering
    \begin{minipage}[t]{.45\textwidth}
        \centering
        \includegraphics[width=1.0\textwidth,height=1.0\textheight,keepaspectratio, trim={4.5cm 0.75cm 4.5cm 0.3cm},clip]{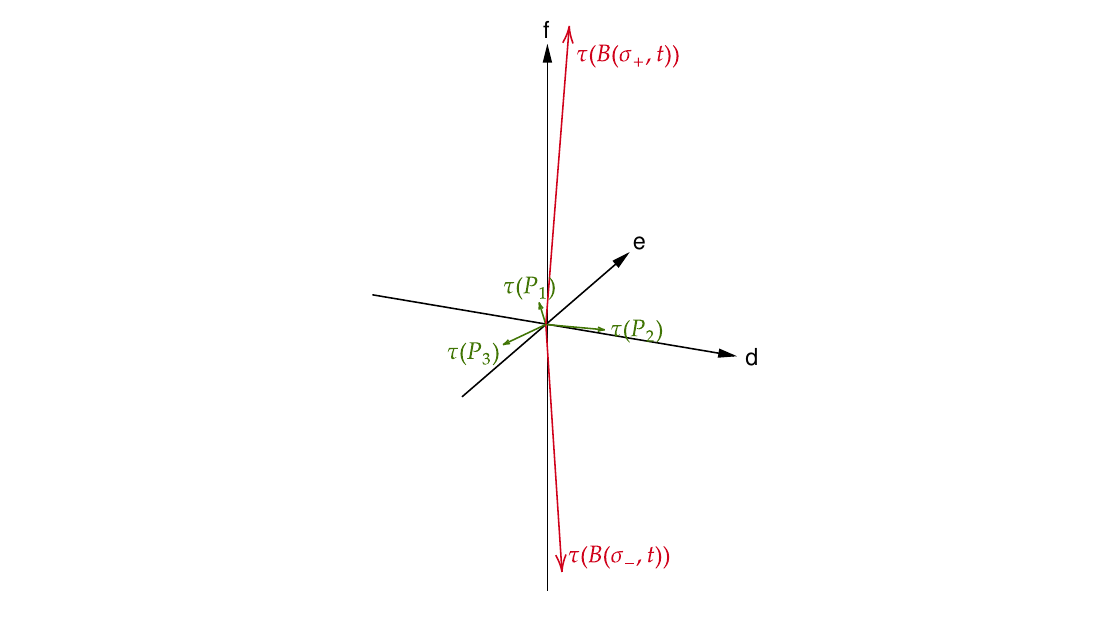}
        \caption{Illustration of the five vectors constructed in Proposition \ref{prop:case3R3}.}
        \label{fig:3dGen}
    \end{minipage}
    \hfill
    \begin{minipage}[t]{0.45\textwidth}
        \centering
        \includegraphics[width=1.0\textwidth,height=1.0\textheight,keepaspectratio, trim={5.5cm 0.75cm 3.5cm 0.3cm},clip]{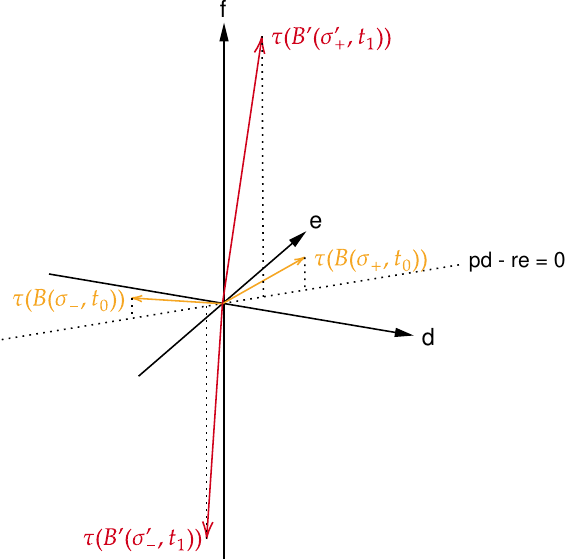}
        \caption{Illustration of the four vectors constructed in Proposition \ref{prop:case21reach2D}.}
        \label{fig:2dLattice}
    \end{minipage}
\end{figure}

\begin{corollary}\label{cor:case3reachI}
When $\mC$ has dimension 3, the identity matrix is reachable (and hence also $\U_2$ and $\U_{10}$).
\end{corollary}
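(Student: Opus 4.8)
The plan is to leverage Proposition~\ref{prop:case3R3}: once we know that the vectors $\tau(B(\sigma,t))$, as $\sigma$ and $t$ range over $\Sym_4$ and $\Z$, generate all of $\R^3$ as an $\Rp$-cone, we want to conclude that $\bzer \in \Z^3$ is actually reachable as a finite \emph{nonnegative integer} combination of such vectors, not merely as a real conic combination. First I would fix finitely many products $Q_1, \ldots, Q_N \in \langle \{B(\sigma,t) \mid \sigma \in \Sym_4, t \in \Z\} \rangle \subseteq \langle \mG \rangle$ whose images $\tau(Q_1), \ldots, \tau(Q_N)$ positively span $\R^3$; these exist by Proposition~\ref{prop:case3R3}. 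Since the $\tau(Q_j)$ positively span $\R^3$, their $\Rp$-cone is all of $\R^3$, so in particular $-\sum_{j=1}^N \tau(Q_j)$ lies in this cone and hence there is a vector of \emph{rational} nonnegative coefficients $(r_1,\ldots,r_N)$ with $\sum_j r_j \tau(Q_j) = -\sum_j \tau(Q_j)$, i.e.\ $\sum_j (1+r_j)\tau(Q_j) = \bzer$. Clearing denominators gives positive \emph{integers} $n_1,\ldots,n_N$ with $\sum_j n_j \tau(Q_j) = \bzer$.

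The key remaining point is that the products $Q_j$ all lie in $\U_1$, on which $\tau$ is an isomorphism onto $\Z^3$ (this is the isomorphism $\tau: \U_1 \xrightarrow{\sim} \Z^3$ recorded in the preliminaries), and $\U_1$ is abelian. Therefore the product $Q_1^{n_1} Q_2^{n_2} \cdots Q_N^{n_N}$, taken in any order, is a well-defined element of $\U_1$ whose image under $\tau$ is $\sum_j n_j \tau(Q_j) = \bzer$; since $\tau$ is injective, this product equals the identity matrix $I$. As each $Q_j$ is a product of elements of $\mG$, so is $Q_1^{n_1} \cdots Q_N^{n_N}$, and it is a nonempty product because every $n_j \geq 1$ and $N \geq 1$. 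Hence $I \in \langle \mG \rangle$, i.e.\ the identity matrix is reachable. Finally, since $I \in \U_2 \subseteq \U_{10}$, the same nonempty product witnesses that $\U_2$ and $\U_{10}$ are reachable, which gives the ``in particular'' clause.

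The only mild subtlety — and the step I would be most careful about — is the passage from a real conic combination to an integer one: I need that the $\tau(Q_j)$ positively span $\R^3$ (strictly, not just span it as a vector space), so that $\bzer$ lies in the \emph{interior} of the cone they generate and hence admits a representation with all coefficients strictly positive; rationality is then automatic because the $\tau(Q_j)$ are integer vectors and the relevant linear system has rational data, and clearing denominators is routine. Proposition~\ref{prop:case3R3} gives exactly the positive-spanning property, so no genuine obstacle arises; everything else is bookkeeping using the abelianness of $\U_1$ and the injectivity of $\tau$.
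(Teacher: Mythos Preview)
Your proposal is correct and follows essentially the same route as the paper's proof: invoke Proposition~\ref{prop:case3R3} to obtain finitely many elements of $\U_1 \cap \langle \mG \rangle$ whose $\tau$-images positively span $\R^3$, pass from a real conic relation $\sum x_j \tau(Q_j)=\bzer$ to an integer one using that the $\tau(Q_j)$ are integer vectors, and then use that $\U_1$ is abelian with $\tau$ an isomorphism to conclude $\prod Q_j^{n_j}=I$. The paper carries this out with four generators and the single observation $-\tau(Q_1)\in\langle\tau(Q_1),\ldots,\tau(Q_4)\rangle_{\Rp}$, but your variant with $-\sum_j\tau(Q_j)$ and the explicit care that all $n_j\geq 1$ (so the product is nonempty) is equally valid and arguably cleaner on that point.
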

\begin{proof}
By Proposition \ref{prop:case3R3}, one can find $Q_1, Q_2, Q_3, Q_4 \in \langle \mG \rangle \cap \U_1$ such that $\tau(Q_i), i=1, \ldots, 4$ generate $\R^3$ as an $\Rp$-cone.
In particular, $- \tau(Q_1) \in \langle \tau(Q_1), \tau(Q_2), \tau(Q_3), \tau(Q_4) \rangle_{\Rp}$.
So there exist $x_i \in \Rp, i=1, \ldots, 4$, not all zero, such that $\sum_{i=1}^{4} x_i \tau(Q_i) = \bzer$.
Since $\tau(Q_i), i=1, \ldots, 4$ have integer entries, one can suppose $x_i \in \Zp$.
Hence, $\tau(\prod_{i = 1}^4 Q_i^{x_i}) = \sum_{i=1}^{4} x_i \tau(Q_i) = \bzer$, which yields $I = \prod_{i = 1}^4 Q_i^{x_i} \in \langle \mG \rangle$.
\end{proof}

\subsection{$\mC$ has dimension 1}\label{subsection:case1}
Next, we consider the case where $\dim \mC = 1$.
The main idea of this case is that if the product of a string $B_1 \cdots B_m$ is in $\U_1$, then all $D_{\sigma}, E_{\sigma}, F_{\sigma}$ vanish, so $\tau(B(\sigma, t))$ is determined by some linear terms as well as by $G_{\sigma}$.
Recall that we write $A_i = UT(\alpha_i, \beta_i, \kappa_i; \delta_i, \epsilon_i, \phi_i)$, $i = 1, \ldots, k$.
Similar to notation \eqref{eq:defXi}, we define the following quantities for convenience:
\begin{equation}\label{eq:defGreeki}
\Delta_{i} \coloneqq \delta_i - \frac{1}{2}\alpha_i \beta_i, \quad
\mE_{i} \coloneqq \epsilon_i - \frac{1}{2}\beta_i \kappa_i, \quad
\Phi_{i} \coloneqq \alpha_i \beta_i \kappa_i - \frac{1}{2}(\alpha_i \epsilon_i + \delta_i \kappa_i) + \frac{1}{3}\phi_i.
\end{equation}
Since $\mC$ has dimension 1, there exist $\alpha, \beta, \kappa \in \Z$ such that $\varphi_0(A_i) = (\alpha, \beta, \kappa) \cdot \rho_i$ for $\rho_i \in \Z, i=1, \ldots, k$.

\begin{restatable}{prop}{caseone}\label{prop:case1}
    Suppose $\varphi_0(A_i) = (\alpha, \beta, \kappa) \cdot \rho_i$ for $\rho_i \in \Z, i=1, \ldots, k$.
    Let $\bl = (\ell_1, \ldots, \ell_k)$ be the Parikh vector of a string $B_1 \cdots B_m$, with the product $P = B_1 \cdots B_m$. Then 
    \begin{enumerate}[(i)]
        \item 
    $P \in \U_{10}$ if and only if $\sum_{i=1}^k \ell_i \rho_i = 0$ and $\sum_{i=1}^k \ell_i \Delta_i = 0$.
        \item 
        $P \in \U_2$ if and only if
    $\sum_{i=1}^k \ell_i \rho_i = 0, \sum_{i=1}^k \ell_i \Delta_i = 0$ and $\sum_{i=1}^k \ell_i \mE_i = 0$.
    \end{enumerate}
\end{restatable}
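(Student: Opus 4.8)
The plan is to reduce membership of $P$ in $\U_{10}$ and $\U_2$ to the vanishing of the coordinates $\varphi_0(P)$, $\tau_d(P)$, $\tau_e(P)$, and then to evaluate these coordinates explicitly using Proposition~\ref{prop:exactexpr}. Recall that $\U_1 = \ker\varphi_0$, that $\tau$ identifies $\U_1$ with $\Z^3$, and that under $\tau$ the subgroups $\U_{10}$ and $\U_2$ correspond to $\{(0,e,f)\}$ and $\{(0,0,f)\}$. Hence
\[
P \in \U_{10} \iff \varphi_0(P) = \bzer \text{ and } \tau_d(P) = 0, \qquad P \in \U_2 \iff \varphi_0(P) = \bzer \text{ and } \tau_d(P) = \tau_e(P) = 0.
\]
Write $P = B_1\cdots B_m$ with $B_j = A_{i_j}$, set $s_j := \rho_{i_j}$, so $\varphi_0(B_j) = (\alpha,\beta,\kappa)\,s_j$ and $a_j = \alpha s_j$, $b_j = \beta s_j$, $c_j = \kappa s_j$. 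Since $\dim\mC = 1$ forces $(\alpha,\beta,\kappa)\neq\bzer$, we have $\varphi_0(P) = (\alpha,\beta,\kappa)\sum_{j=1}^m s_j$, so $\varphi_0(P) = \bzer$ iff $\sum_{j} s_j = 0$, i.e.\ iff $\sum_{i=1}^k \ell_i\rho_i = 0$. This settles the $\varphi_0$-part of both equivalences.

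Next I would compute $\tau_d(P)$ and $\tau_e(P)$ under the assumption $\sum_j s_j = 0$. Applying Proposition~\ref{prop:exactexpr} with $\sigma = \iota$ and $t = 1$, so that $B(\iota,1) = P$, and expanding Notation~\ref{not:latin}, one obtains $\tau_d(P) = \sum_{j < j'} a_j b_{j'} + \sum_{j} d_j$ and $\tau_e(P) = \sum_{j<j'} b_j c_{j'} + \sum_{j} e_j$. The crucial observation is that the order-dependent cross terms collapse under the collinearity hypothesis: $\sum_{j<j'} a_j b_{j'} = \alpha\beta\sum_{j<j'} s_j s_{j'} = \frac{\alpha\beta}{2}\bigl((\sum_j s_j)^2 - \sum_j s_j^2\bigr) = -\frac{\alpha\beta}{2}\sum_j s_j^2 = -\frac12\sum_j a_j b_j$, and likewise $\sum_{j<j'} b_j c_{j'} = -\frac12\sum_j b_j c_j$. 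Therefore $\tau_d(P) = \sum_{j}\bigl(d_j - \frac12 a_j b_j\bigr)$ and $\tau_e(P) = \sum_{j}\bigl(e_j - \frac12 b_j c_j\bigr)$, each depending only on the multiset $\{i_1,\dots,i_m\}$, hence only on $\bl$. Grouping the sums by generator and recalling the definitions~\eqref{eq:defGreeki}, we get $d_j - \frac12 a_j b_j = \Delta_{i_j}$ and $e_j - \frac12 b_j c_j = \mE_{i_j}$, whence $\tau_d(P) = \sum_{i=1}^k \ell_i\Delta_i$ and $\tau_e(P) = \sum_{i=1}^k \ell_i\mE_i$. Plugging these into the two equivalences above yields (i) and (ii).

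The single step that is more than routine bookkeeping — and that I would regard as the crux rather than a genuine obstacle — is the collapse of the cross term $\sum_{j<j'} a_j b_{j'}$, which for a general product depends on the order of the factors and not just on the exponent vector. It becomes a function of $\bl$ exactly because $\dim\mC = 1$ makes all $\varphi_0(B_j)$ scalar multiples of a single vector $(\alpha,\beta,\kappa)$, so the cross term is $\alpha\beta\sum_{j<j'} s_j s_{j'}$, and then the constraint $P\in\U_1$ forces $\sum_j s_j = 0$, simplifying $\sum_{j<j'} s_j s_{j'}$ to $-\frac12\sum_j s_j^2$. Everything else is arithmetic, and the $f$-coordinate (hence $F_\iota$ and $G_\iota$) is never touched, since $\U_{10}$- and $\U_2$-membership only constrain the $d$- and $e$-coordinates of a matrix already in $\U_1$.
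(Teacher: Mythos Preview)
Your proof is correct and follows essentially the same route as the paper: apply Proposition~\ref{prop:exactexpr} with $t=1$, $\sigma=\iota$, show that under the collinearity hypothesis and $\sum_j s_j = 0$ the quadratic pieces $D_\iota, E_\iota$ vanish, then regroup the remaining linear terms $\sum_j D_j$, $\sum_j E_j$ by generator to obtain $\sum_i \ell_i \Delta_i$, $\sum_i \ell_i \mE_i$. Your computation of the cross-term collapse via $\sum_{j<j'} s_j s_{j'} = -\tfrac12\sum_j s_j^2$ is equivalent to the paper's antisymmetry argument $a_i b_j - a_j b_i = 0$, and your remark that $F_\iota$ is never needed is right (the paper records $F_\iota = 0$ only for later use in Proposition~\ref{prop:case1id}).
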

The immediate consequence of Proposition \ref{prop:case1} is that $\U_2$-Reachability and $\U_{10}$-Reachability are decidable using linear programming (LP).
For example, $\U_{10}$-Reachability has a positive answer if and only if the LP instance $\sum_{i=1}^k \ell_i \rho_i = 0$, $\sum_{i=1}^k \ell_i \Delta_i = 0$, $\ell_i \geq 0, i = 1, \ldots, k$, has a non-zero \emph{integer} solution $(\ell_1, \ldots, \ell_k)$.
However, because all the equations and inequalities in the LP instance are \emph{homogeneous}, the LP instance has a non-zero \emph{integer} solution if and only if it has a non-zero \emph{rational} solution.
Furthermore, the total bit length of $\rho_i, \Delta_i, \mE_i$ is linear with respect to the encoding size of $\mG$.
Therefore, the existence of a non-zero rational solution is decidable in polynomial time.
In particular, for $i = 1, \ldots, k$, one can decide whether this LP instance has a rational solution $(\ell_1, \ldots, \ell_k)$ with $\ell_i = 1$.
Then, the LP instance has a non-zero rational solution if and only if it has a rational solution with $\ell_i = 1$ for some $i$.
The decision procedure for $\U_2$-Reachability is similar.

Next, we consider the Identity Problem.
Define the set
\[
\Lambda \coloneqq \left\{ (\ell_1, \ldots, \ell_k) \in \Zp^k \middle| \sum_{i=1}^k \ell_i \rho_i = \sum_{i=1}^k \ell_i \Delta_i = \sum_{i=1}^k \ell_i \mE_i = 0 \right\}.
\]
By Proposition \ref{prop:case1}, the product of a string is in $\U_2$ if and only if its Parikh vector is in $\Lambda$.
It is easy to see that $\Lambda$ is additively closed, meaning $\Lambda + \Lambda \subseteq \Lambda$.
Define the \emph{support} of a Parikh vector $\bl = (\ell_1, \ldots, \ell_k)$ to be $\supp(\bl) = \{i \mid \ell_i \neq 0\}$,
and the support of the set $\Lambda$ to be
\[
\supp(\Lambda) \coloneqq \bigcup_{\bl \in \Lambda} \supp(\bl) = \{i \mid \exists (\ell_1, \ldots, \ell_k) \in \Lambda, \ell_i \neq 0\}.
\]
For $i = 1, \ldots, k$, we have $i \in \supp(\Lambda)$ if and only if the LP instance
$\sum_{i=1}^k \ell_i \rho_i = \sum_{i=1}^k \ell_i \Delta_i = \sum_{i=1}^k \ell_i \mE_i = 0$, $\ell_i > 0$ and $\ell_j \geq 0, j \neq i$ has an \emph{integer} solution.
Again, by homogeneity, this is decidable in polynomial time by deciding the existence of a \emph{rational} solution.
Hence, $\supp(\Lambda)$ is computable in polynomial time by deciding whether $i \in \supp(\Lambda)$ for all $i = 1, \ldots, k$.

If $\supp(\Lambda) \neq \{1, \ldots, k\}$, we can discard the elements $A_i \in \mG$ with $i \notin \supp(\Lambda)$, then $\card(\mG)$ decreases and we are done by the induction hypothesis.
Hence, we only need to consider the case where $\supp(\Lambda) = \{1, \ldots, k\}$.
The following proposition answers the Identity Problem in this case. 
Again, the homogeneity yields a polynomial time deciding procedure.
\begin{restatable}{prop}{caseoneid}\label{prop:case1id}
Suppose $\varphi_0(A_i) = (\alpha, \beta, \kappa) \cdot \rho_i$ for $\rho_i \in \Z, i=1, \ldots, k$, and $\supp(\Lambda) = \{1, \ldots, k\}$.
Define the values $\Gamma_i = \alpha \epsilon_i - \kappa \delta_i, i = 1, \ldots, k$.
Then
\begin{enumerate}[(i)]
    \item When $\rho_i \Gamma_j = \rho_j \Gamma_i$ for all $i, j \in \{1, \ldots, k\}$, the identity matrix is reachable if and only if the set $\{(\ell_1, \ldots, \ell_k) \in \Lambda \mid \sum_{i=1}^k \ell_i \Phi_i = 0\}$ is not equal to $\{\bzer\}$.
    \item When $\rho_i \Gamma_j \neq \rho_j \Gamma_i$ for some $i, j \in \{1, \ldots, k\}$, the identity matrix is reachable.
\end{enumerate}
\end{restatable}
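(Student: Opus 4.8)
The plan is to analyze how the top-right entry $f$ (i.e.\ $\tau_f$) of a product lying in $\U_2$ depends on the \emph{order} of multiplication, exploiting the hypothesis $\dim\mC=1$. The key step is a \emph{swap lemma}: if $P=B_1\cdots B_m$ has exponent vector in $\Lambda$ (so $P\in\U_2$ by Proposition~\ref{prop:case1}) and $P'$ is obtained from $P$ by transposing two adjacent factors, turning a block ``$A_i$ then $A_j$'' into ``$A_j$ then $A_i$'', then the $f$-entries differ by exactly $\rho_j\Gamma_i-\rho_i\Gamma_j$. This comes out of the multiplication rule of $\UT(4,\Z)$: a generic adjacent transposition changes $f$ by a ``linear'' term $\rho_j\epsilon_i-\rho_i\epsilon_j$-type contribution together with ``cubic'' terms carrying the factors $a_jb_i-a_ib_j$ and $b_jc_i-b_ic_j$, but because $\varphi_0(A_\ell)=(\alpha,\beta,\kappa)\rho_\ell$ these last two factors vanish identically, leaving precisely $\rho_j\Gamma_i-\rho_i\Gamma_j$ (with $\Gamma_\ell=\alpha\epsilon_\ell-\kappa\delta_\ell$). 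Since any two orderings of the same multiset are connected by adjacent transpositions, this means that for $\bl\in\Lambda$ the $f$-entry equals a fixed value $C(\bl)$ plus an ``antisymmetric'' correction which varies over a set symmetric about $0$; and a short calculation using $\sum_i\ell_i\rho_i=0$ collapses $C(\bl)$ to the \emph{linear} functional $\sum_i\ell_i\Phi_i$ (the higher-degree symmetric parts of the cross-terms all cancel because $\sum_i\ell_i\rho_i=0$).

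Case (i), where $\rho_i\Gamma_j=\rho_j\Gamma_i$ for all $i,j$, is then immediate: the antisymmetric correction vanishes identically, so every product with exponent vector $\bl\in\Lambda$ has $f$-entry exactly $\sum_i\ell_i\Phi_i$; being in $\U_2$, such a product equals $I$ iff $\sum_i\ell_i\Phi_i=0$. As the identity has to be realized by a non-empty product, reachability is equivalent to the existence of a non-zero $\bl\in\Lambda$ with $\sum_i\ell_i\Phi_i=0$ (the zero exponent vector being excluded because $\langle\mG\rangle$ consists of non-empty products), which is the stated condition.

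For case (ii), fix $i_0,j_0$ with $g:=\rho_{j_0}\Gamma_{i_0}-\rho_{i_0}\Gamma_{j_0}\neq 0$. Consider the additive set $\mathcal{F}:=\{\,f\text{-entry of }P \mid P\text{ a non-empty product with exponent vector in }\Lambda\,\}\subseteq\Z$; it is a sub-semigroup of $(\Z,+)$, since $\U_2$ is central and hence $f$-entries add under concatenation. A sub-semigroup of $\Z$ containing both a positive and a negative element contains $0$, and a product in $\mathcal{F}$ with $f$-entry $0$ is an identity-valued non-empty product; so it suffices to exhibit two products in $\U_2$ with $f$-entries of opposite sign. Using $\supp(\Lambda)=\{1,\dots,k\}$ pick $\bl^*\in\Lambda$ with $\ell^*_{i_0},\ell^*_{j_0}\geq 1$. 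For the exponent vector $n\bl^*$ choose the ordering that places all copies of $A_{i_0}$ in one block immediately before all copies of $A_{j_0}$, with the remaining generators interleaved so that as evenly as possible half of each sits before and half after this block; by the swap lemma its $f$-entry is $n\sum_i\ell^*_i\Phi_i-\tfrac12 n^2\ell^*_{i_0}\ell^*_{j_0}\,g+O(n)$, and the reversed ordering gives $n\sum_i\ell^*_i\Phi_i+\tfrac12 n^2\ell^*_{i_0}\ell^*_{j_0}\,g+O(n)$. Because $\sum_i\ell^*_i\Phi_i$ only contributes an $O(n)$ term, for $n$ large enough these two $f$-entries have opposite signs, so $0\in\mathcal{F}$ and the identity is reachable.

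The main obstacle is really the swap lemma together with the resulting control of the ordering-dependence of $f$: one has to push the (notation-heavy) $\UT(4,\Z)$ multiplication bookkeeping far enough to see that exactly the $\Gamma$-terms survive under $\dim\mC=1$, and then engineer, in case (ii), the ``blocked $A_{i_0}$--$A_{j_0}$, balanced remainder'' ordering so that its antisymmetric correction is $\Theta(n^2)$ while the ordering-independent part $n\sum_i\ell^*_i\Phi_i$ stays $O(n)$ --- it is precisely the defining constraint $\sum_i\ell_i\rho_i=0$ of $\Lambda$ that makes the latter linear rather than cubic in $n$, and that is exactly what forces the sign flip and makes the semigroup argument go through.
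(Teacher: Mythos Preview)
Your swap lemma is correct and is essentially the elementary form of the paper's computation: in the $\dim\mC=1$ case the commutator $[A_j,A_i]$ lands in the centre $\U_2$ with $f$-entry exactly $\rho_j\Gamma_i-\rho_i\Gamma_j$, which is the same content as the paper's formula $G_\sigma=\tfrac12\sum_{s<t}(r_{\sigma(s)}G_{\sigma(t)}-r_{\sigma(t)}G_{\sigma(s)})$ read off one transposition at a time. Case~(i) then goes through exactly as you say, and your identification $C(\bl)=\sum_i\ell_i\Phi_i$ is right (it is the $t=1$ specialisation of Proposition~\ref{prop:exactexpr} once one notes $F_\sigma=0$ here).

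There is, however, a genuine gap in your case~(ii) construction. Splitting the remaining generators ``half before, half after'' the $A_{i_0}$--$A_{j_0}$ block kills the $(i_0,q)$ and $(j_0,q)$ cross-terms, but it does \emph{not} control the $(p,q)$ terms for $p,q\notin\{i_0,j_0\}$: those depend on the relative ordering \emph{within} each half, and for a generic arrangement they contribute $\Theta(n^2)$, not $O(n)$. So your displayed asymptotic $n\sum_i\ell^*_i\Phi_i-\tfrac12 n^2\ell^*_{i_0}\ell^*_{j_0}g+O(n)$ is not justified as stated. The fix is easy --- for instance make the ``after'' half the reversal of the ``before'' half, or take each half to be $n/2$ repetitions of a fixed word; either choice forces $N_{pq}-N_{qp}=O(n)$ for all $p,q\notin\{i_0,j_0\}$ --- but it does need to be said.

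The paper sidesteps this bookkeeping entirely: rather than designing an ordering by hand, it takes \emph{any} product $B_1\cdots B_m$ with full-support exponent vector in $\Lambda$ and looks at $B(\sigma,t)=B_{\sigma(1)}^t\cdots B_{\sigma(m)}^t$, for which Proposition~\ref{prop:exactexpr} gives $\tau_f=t^2G_\sigma+t\sum_j\ell_j\Phi_j$ directly. One then only needs \emph{some} $\sigma$ with $G_\sigma\ne0$ (obtained from a single swap, exactly your lemma) together with $\sum_\sigma G_\sigma=0$ to get $G_{\sigma_+}>0>G_{\sigma_-}$, and the $t^2$-versus-$t$ separation does the rest. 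Your approach and the paper's are thus the same idea; the paper's packaging via $B(\sigma,t)$ is what lets it avoid the ordering-balancing issue you ran into.
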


\subsection{$\mC$ has dimension 0}\label{subsection:case0}
In this case, $\varphi_0(A_i) = \bzer$ for all $i$, so $\mG \subset \U_1$. 
Since $\U_1 \overset{\tau}{\cong} \Z^3$, the Identity Problem and $\U_2$, $\U_{10}$-Reachability are decidable using linear programming.
For example, deciding the Identity Problem amounts to deciding whether the LP instance $\sum_{i = 1}^k \ell_i \cdot \tau(A_i) = \bzer$, $\ell_i \geq 0, i = 1, \ldots, k$ has a non-zero integer solution.
As before, by the homogeneity of the LP instance, this is decidable in polynomial time by considering solutions in $\Q$.

\subsection{$\mC$ has dimension 2}

Suppose now that there exist $p, q, r \in \Z$, not all zero, such that $p \alpha_i + q \beta_i + r \kappa_i = 0, i=1, \ldots, k$.
Consider the following cases on the values of $p, q, r$.

\subsubsection{Case 1: there is at most one zero among $p, q, r$.}\label{subsubsection:case21}
The main difficulty of this case is as follows.
By Lemma \ref{lem:DER2}, $(D_{\sigma}, E_{\sigma})$ is constrained to the one dimensional subspace $\{(d,e) \mid pd - re = 0\} \subset \R^2$.
Therefore, in order to decide whether the vectors $\tau(B(\sigma, t))$ can generate the neutral element, one needs to take into account their linear terms, i.e.\ $(\sum_{i=1}^m D_i, \sum_{i=1}^m E_i)$ as well.
Define the additively closed set:
\[
L \coloneqq \left\{(\ell_1, \ldots, \ell_k) \in \Zp^k \middle| \sum_{i=1}^{k} \ell_i \varphi_0(A_i) = 0 \right\}.
\]
The product $P$ of a string $B_1 \cdots B_m$ is in $\U_1$ if and only if its Parikh vector is in $L$. 
\begin{restatable}{lem}{suppL}\label{lem:suppL}
When $\mC = \mC^{lin}$, we have $\supp(L) = \{1, \ldots, k\}$.
\end{restatable}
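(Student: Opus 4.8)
The plan is to prove the reverse inclusion $\{1,\dots,k\} \subseteq \supp(L)$ directly: for each fixed index $j$ I will exhibit an element of $L$ whose $j$-th coordinate is strictly positive. The starting observation is that the hypothesis $\mC = \mC^{lin}$ says exactly that the cone $\mC = \langle \varphi_0(A_1), \dots, \varphi_0(A_k) \rangle_{\Rp}$ coincides with its lineality space, hence is an $\R$-linear subspace of $\R^3$. In particular $\varphi_0(A_j) \in \mC$ forces $-\varphi_0(A_j) \in \mC$, so there are $r_1, \dots, r_k \in \Rp$ with $-\varphi_0(A_j) = \sum_{i=1}^k r_i \varphi_0(A_i)$. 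Setting $\ell_i = r_i$ for $i \neq j$ and $\ell_j = r_j + 1$ produces a vector $(\ell_1, \dots, \ell_k) \in \Rp^k$ satisfying $\sum_{i=1}^k \ell_i \varphi_0(A_i) = \bzer$ with $\ell_j \geq 1$.

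The one step that needs a little care is that this only gives a \emph{real} nonnegative solution, whereas membership in $L$ requires nonnegative integers. To fix this I would consider the polyhedron
\[
P_j = \left\{ (\ell_1, \dots, \ell_k) \in \Rp^k \;\middle|\; \sum_{i=1}^k \ell_i \varphi_0(A_i) = \bzer, \ \ell_j \geq 1 \right\}.
\]
Since the matrices $A_i$ have integer entries, $P_j$ is cut out by linear equalities and inequalities with rational (indeed integer) coefficients, so it is a rational polyhedron; by the previous paragraph it is non-empty. A non-empty rational polyhedron contains a point with rational coordinates, and multiplying such a point by a common denominator of its coordinates yields a point of $P_j$ with nonnegative integer coordinates, i.e.\ an element $\bl \in L$ with $\ell_j \geq 1$. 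Hence $j \in \supp(L)$, and since $j$ was arbitrary we conclude $\supp(L) = \{1,\dots,k\}$.

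The main (mild) obstacle is precisely this real-to-integer passage; everything else is immediate from the definition of the lineality space. If one prefers to avoid invoking the rationality of $P_j$ explicitly, the same conclusion follows from the standard fact that a homogeneous linear system with integer coefficients that admits a nonnegative real solution of a given support also admits a nonnegative integer solution of the same support (Carathéodory plus clearing denominators), which is the form already used implicitly elsewhere in the paper.
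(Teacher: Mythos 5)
Your proposal is correct and follows essentially the same route as the paper: use $\mC = \mC^{lin}$ to write $-\varphi_0(A_j)$ as a nonnegative combination of the $\varphi_0(A_i)$, add $1$ to the $j$-th coefficient, then pass from real to rational coefficients (the paper justifies this by the integrality of the entries, you by the rationality of the polyhedron) and clear denominators. Your treatment of the real-to-integer step is slightly more explicit, but the argument is the same.
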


We continue to adopt the notations from \eqref{eq:defGreeki} for $\Delta_i, \mE_i$.
Consider the subset of $L$: 
    \[
    L_0 \coloneqq \left\{(\ell_1, \ldots, \ell_k) \in L \middle| p \sum_{i=1}^{k} \ell_i \Delta_i - r \sum_{i=1}^{k} \ell_i \mE_i = 0 \right\}.
    \]
$L_0$ can be described as the set of Parikh vectors whose corresponding strings have linear terms falling on the line $pd - re = 0$.
Again, $L_0$ is additively closed.
The main idea is that the quadratic term of $\varphi_1(B(\sigma, t))$ falls on the line $pd - re = 0$, therefore, if $P \in \U_2, \varphi_1(B(\sigma, t)) = 0$, then its linear term must also fall on the line $pd - re = 0$.
This leads to the following lemma.

\begin{restatable}{lem}{casetwoinLzero}\label{lem:case2inL0}
Suppose $\dim \mC = 2$. If the product $P$ of a string $B_1 \cdots B_m$ is in $\U_2$, then its Parikh vector $\bl$ is in $L_0$.
\end{restatable}

The following proposition gives a solution to the $\U_{10}$-Reachability problem.

\begin{restatable}{prop}{casetwoUonezero}\label{prop:case2U10}
Suppose $\dim \mC = 2$ and at most one of $p, q, r$ is zero.
\begin{enumerate}[(i)]
    \item When $r \neq 0$, $\U_{10}$ is reachable.
    \item When $r = 0, p \neq 0$, $\U_{10}$ is reachable if and only if $L_0$ is not equal to $\{\bzer\}$.
\end{enumerate}
\end{restatable}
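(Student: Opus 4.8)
The plan is to analyse the asymptotic behaviour of $\tau(B(\sigma,t))$ for products landing in $\U_1$, separating the linear and quadratic contributions to $\varphi_1(B(\sigma,t))$. Throughout, fix $B_1,\dots,B_m\in\langle\mG\rangle$ with $\sum\varphi_0(B_i)=0$, so that $B(\sigma,t)\in\U_1$ for every $\sigma\in\Sym_m$ and $t\in\Zp$ by Proposition~\ref{prop:exactexpr}. By Lemma~\ref{lem:DER2} applied with $(p,r)$ (using that $p\alpha_i+q\beta_i+r\kappa_i=0$ for all $i$ forces the hypothesis of the lemma to fail for the linear forms, so the first alternative $pD_\sigma=rE_\sigma$ must hold), every vector $(D_\sigma,E_\sigma)$ lies on the line $\{(d,e)\mid pd-re=0\}$. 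Hence $\varphi_1(B(\sigma,t)) = t^2(D_\sigma,E_\sigma) + t\bigl(\sum_i D_i,\sum_i E_i\bigr)$, where the first summand is confined to this line while the second is a fixed vector determined only by the multiset $\{B_i\}$, i.e.\ by the exponent vector $\bl\in L$.

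For part~(i), when $r\neq 0$ the line $pd-re=0$ is \emph{not} the $e$-axis, so it contains points with nonzero $d$-coordinate; in particular I would like to realise a product whose $\tau_d$ and $\tau_e$ are both zero, i.e.\ a product in $\U_{10}$. The strategy mirrors the dimension-3 case: first use Lemma~\ref{lem:suppL} to pick an exponent vector in $L$ with full support, then use Lemma~\ref{lem:DEinv} to find $\sigma$ and $\sigma'$ with $(D_{\sigma'},E_{\sigma'})=-(D_\sigma,E_\sigma)$, so that $B(\sigma,t)$ and $B(\sigma',t)$ have $\varphi_1$-images whose quadratic parts are opposite along the line while the linear parts agree. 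Taking a suitable positive combination $B(\sigma,t)^{x}B(\sigma',t)^{y}$ cancels the quadratic part, leaving $(x+y)t$ times the fixed linear vector $(\sum_iD_i,\sum_iE_i)$. To kill this residual linear term too, I would exploit additivity of $L$ together with Lemma~\ref{lem:DEinv}-style inversion of the \emph{underlying word} (reversing a product of generators produces another product in $\U_1$ whose linear $D,E$-sums are negated relative to the original while the quadratic terms can be controlled), yielding a product in $\U_{10}$; alternatively one shows directly that as $\sigma,t$ range the vectors $\tau_d,\tau_e$ of products in $\U_1$ already generate $\R^2$ (using $r\neq 0$ to get a nonzero $d$-direction from the quadratic term and the $e$-direction from Lemma~\ref{lem:DEinv}), so $(0,0)$ is reachable as a $\Zp$-combination with integer entries.

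For part~(ii), when $r=0$ and $p\neq 0$ the line $pd-re=0$ is exactly the $e$-axis, so every quadratic term $(D_\sigma,E_\sigma)=(0,E_\sigma)$ has vanishing $d$-coordinate. Then $\tau_d(B(\sigma,t)) = t\sum_i D_i$ is purely linear and independent of $\sigma$: a product $P$ with exponent vector $\bl$ has $\tau_d(P)=\sum_i\ell_i\Delta_i$. Hence $P\in\U_{10}$ forces $\sum_i\ell_i\Delta_i=0$, and since $r=0$ this is precisely the condition $p\sum\ell_i\Delta_i - r\sum\ell_i\mathcal{E}_i=0$ defining $L_0$; so reachability of $\U_{10}$ implies $L_0\neq\emptyset$. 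Conversely, given $\bl\in L_0$ with (after adding a full-support element of $L$, using additivity and Lemma~\ref{lem:suppL}) $\supp(\bl)=\{1,\dots,k\}$, form a product $P$ with this exponent vector; then $\tau_d(P)=0$ automatically, and it remains to force $\tau_e=0$. The $e$-coordinate is $\tau_e(B(\sigma,t))=t^2E_\sigma+t\sum_iE_i$; one checks that $E_\sigma$ is not identically zero in the relevant non-degenerate situation (or, if it is, that $\sum_iE_i$ can itself be made zero by a further additivity argument), and then invokes $\sum_{\sigma}E_\sigma$-type cancellation plus Lemma~\ref{lem:DEinv} to obtain $\sigma_+,\sigma_-$ with $E_{\sigma_+}>0>E_{\sigma_-}$; picking $t$ large and a positive integer combination of $B(\sigma_+,t)$ and $B(\sigma_-,t)$ drives $\tau_e$ to $0$ while keeping $\tau_d=0$, producing the desired element of $\U_{10}$.

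The main obstacle I anticipate is the careful bookkeeping of the \emph{linear} terms $\sum_iD_i$, $\sum_iE_i$ in part~(i) and $\sum_iE_i$ in part~(ii): unlike the dimension-3 case, these do not disappear in the limit $t\to\infty$, and cancelling them requires combining the quadratic-term arguments (via Lemmas~\ref{lem:DEinv}, \ref{lem:Fsum0}) with the additivity and full-support properties of $L$ (and $L_0$) in just the right order, so that after neutralising the quadratic part one still has enough freedom to neutralise the linear part. Making precise the claim that $E_\sigma$ does not vanish identically (equivalently, ruling out the degenerate sub-case where all products in $\U_1$ already have $\tau_e$ proportional to a single fixed value) is the delicate point and will likely need a separate small lemma along the lines of Lemma~\ref{lem:FR1}.
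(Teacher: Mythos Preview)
Your proposal contains a basic misreading of $\U_{10}$ that derails both parts. Recall from the paper that
\[
\U_{10} = \{\U_1(0,e,f) \mid e,f\in\Z\},
\]
so membership in $\U_{10}$ (for an element already in $\U_1$) is equivalent to $\tau_d=0$ alone; there is \emph{no} constraint on $\tau_e$. You repeatedly aim for ``$\tau_d$ and $\tau_e$ both zero'', which is the condition for $\U_2$, not $\U_{10}$. This is why your argument becomes tangled with linear-term bookkeeping and the ``delicate point'' about $E_\sigma$ not vanishing: none of that is needed.

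With the correct target, the paper's proof is short. For (i), take $\bl\in L$ with full support (Lemma~\ref{lem:suppL}); since $r\neq 0$ and $\dim\mC=2$, some $a_ub_v\neq a_vb_u$, so swapping two letters changes $D_\sigma$ and hence $D_\sigma\neq 0$ for some $\sigma$. Lemma~\ref{lem:DEinv} then gives $\sigma_+,\sigma_-$ with $D_{\sigma_+}>0>D_{\sigma_-}$, and for large $t$ the quadratic term dominates so $\tau_d(B(\sigma_\pm,t))$ have opposite signs; a positive integer combination kills $\tau_d$ and lands in $\U_{10}$. The linear term $\sum_iD_i$ never needs separate cancellation. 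For (ii), since $r=0$ and $p\neq0$, Lemma~\ref{lem:DER2} forces $D_\sigma=0$ for every $\sigma$, so $\tau_d(P)=\sum_i\ell_i\Delta_i$ for any product $P\in\U_1$ with exponent vector $\bl$; thus $P\in\U_{10}$ iff $\bl\in L_0$, and reachability of $\U_{10}$ is literally equivalent to $L_0\neq\emptyset$. No full-support trick, no $E_\sigma$ analysis, no converse construction beyond picking any $\bl\in L_0$.

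A secondary slip: in your (ii) you propose enlarging $\bl\in L_0$ to full support by ``adding a full-support element of $L$'', but $L_0$ is not closed under adding arbitrary elements of $L$, only under adding elements of $L_0$ itself.
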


In particular, whether $L_0$ equals $\{\bzer\}$ is decidable by linear programming, (again, by homogeneity, one can solve the linear programming instance in $\Q$).
Hence, $\U_{10}$-Reachability is decidable.
We then treat the Identity Problem and $\U_2$-Reachability.
Consider the support of $L_0$. 
As before, $\supp(L_0) = \{i \mid \exists (\ell_1, \ldots, \ell_k) \in L_0, \ell_i \neq 0\}$ is computable using linear programming.
By Lemma \ref{lem:case2inL0}, in order to reach $\U_2$ (or the identity matrix), we can only use matrices with index in $\supp(L_0)$.
By discarding matrices and using the induction hypothesis on $\card(\mG)$, we only need to consider the case where $\supp(L_0) = \{1, \ldots, k\}$.
The following proposition gives a positive answer to the Identity Problem and $\U_2$-Reachability in this case.

\begin{restatable}{prop}{casetwoonereachtwoD}\label{prop:case21reach2D}
Suppose $\dim \mC = 2$ and at most one of $p, q, r$ is zero. 
If $\supp(L_0) = \{1, \ldots, k\}$, then the identity matrix is reachable. (In particular, $\U_2$ is reachable.)
\end{restatable}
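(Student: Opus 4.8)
The plan is to reach the identity by producing finitely many products in $\langle\mG\rangle\cap\U_1$ whose $\tau$-images positively span a $2$-dimensional linear subspace of $\R^3\cong\U_1$, and then concluding as in Corollary~\ref{cor:case3reachI}. Set $\mathcal{L}=\{(d,e)\in\R^2\mid pd-re=0\}$ and $\Pi=\{(d,e,f)\in\R^3\mid pd-re=0\}=\mathcal{L}\times\R$. The key preliminary observation is that any product $P\in\U_1$ whose exponent vector lies in $L_0$ has $\tau(P)\in\Pi$: Lemma~\ref{lem:DER2} confines $(D_\sigma,E_\sigma)$ to $\mathcal{L}$ (each $\varphi_0(A_j)$ satisfies $p\alpha_j+q\beta_j+r\kappa_j=0$), and combining this with the defining equation of $L_0$, as in the proof of Lemma~\ref{lem:case2inL0}, forces $\varphi_1(P)\in\mathcal{L}$ too.

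First I would reduce to a four-matrix configuration. Since $L_0$ is additively closed and $\supp(L_0)=\{1,\dots,k\}$, fix $\bl\in L_0$ with full support and, after scaling, with every coordinate large. Partition the multiset of generator-occurrences prescribed by $\bl$ into four parts, forming a product $P_i\in\langle\mG\rangle$ from the $i$-th part; then $\sum_{i=1}^{4}\varphi_0(P_i)=\bzer$, so $B(\sigma,t)=P_{\sigma(1)}^{t}\cdots P_{\sigma(4)}^{t}\in\U_1$ for every $\sigma\in\Sym_4$ and $t\in\Zp$, with exponent vector $t\bl\in L_0$, whence $\tau(B(\sigma,t))\in\Pi$. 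The partition must be chosen so that none of the four degenerate conditions of Lemma~\ref{lem:FR1} holds for $(\varphi_0(P_1),\dots,\varphi_0(P_4))$ and, in addition, so that $D_\sigma\neq 0$ for some $\sigma$ when $r\neq 0$ (respectively $E_\sigma\neq 0$ for some $\sigma$ when $r=0$, which then forces $p,q\neq 0$). This is possible because $\dim\mC=2$ together with ``at most one of $p,q,r$ is zero'' rules out ``all $\alpha_j=0$'', ``all $\beta_j=0$'', ``all $\kappa_j=0$'' and ``all $\varphi_0(A_j)$ parallel'' — each of these would either force two of $p,q,r$ to vanish or give $\dim\mC\le 1$. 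Concretely, when $r\neq 0$ one can choose generators $A_a,A_b$ with linearly independent $(\alpha,\beta)$-projections (otherwise a relation $(p',q',0)$ appears, forcing $r=0$ by uniqueness of the relation); placing one occurrence of each into $P_1$ and $P_2$ makes Lemma~\ref{lem:FR1}(iv) fail, forces some $D_\sigma\neq 0$, and automatically kills conditions (i) and (ii); since two independent vectors cannot share two zero coordinates, at most the $\kappa$-condition survives, and it is killed by inserting into $P_3$ a single occurrence of a generator with $\kappa_c\neq 0$. The case $r=0$ is symmetric, using $(\beta,\kappa)$-projections and $E_\sigma$.

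Granting such a partition, Lemma~\ref{lem:FR1} yields $F_\sigma\neq 0$ for some $\sigma$, so by Lemma~\ref{lem:Fsum0} there are $\sigma_+,\sigma_-\in\Sym_4$ with $F_{\sigma_+}>0>F_{\sigma_-}$; and since some $D_\sigma$ (or $E_\sigma$) is nonzero, Lemma~\ref{lem:DEinv} supplies $\sigma'_+,\sigma'_-$ with $(D_{\sigma'_-},E_{\sigma'_-})=-(D_{\sigma'_+},E_{\sigma'_+})\neq(0,0)$, both on $\mathcal{L}$. By Proposition~\ref{prop:exactexpr}, $\tau(B(\sigma_\pm,t))/t^{3}\to(0,0,F_{\sigma_\pm})$ as $t\to\infty$, whereas $\varphi_1(B(\sigma'_\pm,t))=(D_{\sigma'_\pm},E_{\sigma'_\pm})t^{2}+O(t)$ has its leading term pointing along $\pm\mathcal{L}$. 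I would then show that $\{\tau(B(\sigma,t))\mid\sigma\in\Sym_4,\ t\in\N\}$ generates $\Pi$ as an $\Rp$-cone by the dual-cone and compactness argument of Proposition~\ref{prop:case3R3}: if not, some nonzero $\bv\in\Pi$ is nonnegative on all of them; the pair $\sigma_\pm$ forces $\langle\bv,(0,0,1)\rangle=0$, so $\bv$ spans $\mathcal{L}$; but then for a suitable $\sigma'\in\{\sigma'_+,\sigma'_-\}$ one gets $\langle\bv,\tau(B(\sigma',t))\rangle<0$ for all large $t$, a contradiction. With $\Pi$ generated as an $\Rp$-cone by integer vectors $\tau(B(\sigma,t))$, clearing denominators in a nonnegative linear dependence among them gives $I\in\langle\mG\rangle$, exactly as in Corollary~\ref{cor:case3reachI}.

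The step I expect to be the main obstacle is the four-fold partition of the second paragraph: arranging for all four degeneracies of Lemma~\ref{lem:FR1} to fail simultaneously while also keeping $D_\sigma$ (or $E_\sigma$) from vanishing identically is precisely where all three hypotheses — $\dim\mC=2$, at most one of $p,q,r$ zero, and $\supp(L_0)=\{1,\dots,k\}$ — are needed together; the concluding cone argument is a routine adaptation of Proposition~\ref{prop:case3R3}.
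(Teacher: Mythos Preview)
Your approach is correct and leads to the same conclusion, but it is organised differently from the paper's proof. The paper does \emph{not} reduce to a single four-tuple; instead it runs two separate constructions. In its first step it takes an $\bl\in L_0$ with full support and works with the length-$m$ tuple of \emph{individual generators} occurring in a product with exponent vector $\bl$ (so $\sigma\in\Sym_m$, not $\Sym_4$); from $\dim\mC=2$ and $r\neq0$ it extracts $\sigma_\pm$ with $D_{\sigma_+}>0>D_{\sigma_-}$, and fixes a single $t_0$ so that $\varphi_1(B(\sigma_\pm,t_0))$ already positively span $\mathcal{L}$. In its second step it builds a \emph{three}-tuple $B_1,B_2,B_3\in\langle\mG\rangle$ (with $\varphi_0(B_1),\varphi_0(B_2)$ independent and $B_3$ chosen so that the total exponent vector is a multiple of $\bl$), applies Lemma~\ref{lem:FR1} with a phantom fourth zero vector to get $\sigma'_\pm\in\Sym_3$ with $F_{\sigma'_+}>0>F_{\sigma'_-}$, and then lets $t_1\to\infty$ so that $\tau(B'(\sigma'_\pm,t_1))$ approaches $\pm(0,0,1)$. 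The four vectors so obtained---two with $(d,e)$-projections spanning $\mathcal{L}$, two nearly vertical---then visibly span $\Pi$ as an $\Rp$-cone. Your unified four-block partition is arguably tidier and avoids juggling two different tuples, but it forces you to verify simultaneously that none of the four degeneracies of Lemma~\ref{lem:FR1} holds \emph{and} that some $D_\sigma$ is nonzero, which is exactly the obstacle you flag; the paper's separation of concerns sidesteps this by letting the two steps use different tuples, different permutation groups, and crucially different values of $t$, so that the final spanning argument needs no compactness beyond choosing $t_1$ large relative to the already-fixed $t_0$. One small point worth tightening in your write-up: because the $f$-coordinate of $\tau(B(\sigma'_\pm,t))$ is cubic in $t$, all four of your vectors may point arbitrarily close to the $f$-axis for a common large $t$; your dual-cone argument handles this correctly (since $\Pi$ is two-dimensional, a convex cone with trivial polar is all of $\Pi$), but it is worth saying explicitly that this is why you argue over the full family $\{\tau(B(\sigma,t)):\sigma,t\}$ rather than at a single $t$.
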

\begin{proof}[Sketch of proof]
Similarly to Proposition~\ref{prop:case3R3}, we construct four elements in $\U_1 \cap \langle \mG \rangle$ whose images under $\tau$ generate the two-dimensional linear subspace $\{(d,e,f) \in \R^3 \mid pd - re = 0\}$ as an $\Rp$-cone (see Figure \ref{fig:2dLattice} for an illustration). 
Consequently, the $\Zp$-cone that they generate is two-dimensional lattice in $\{(d,e,f) \in \Z^3 \mid pd - re = 0\}$, which contains the neutral element.
\end{proof}


\subsubsection{Case 2: $p = r = 0$.}\label{subsubsection:caseH5}
In this case, $\mG \subset \operatorname{H}_5$, so the Identity Problem is decidable by Lemma \ref{lem:Heis}.
$\U_2$ and $\U_{10}$-Reachability reduce to the Identity Problem in $\Z^4$ and $\Z^3$, respectively, which are decidable in polynomial time using linear programming.
Here, we claim an additional complexity result that strengthens Lemma \ref{lem:Heis}, which is crucial for a polynomial complexity algorithm for $\UT(4, \Z)$.

\begin{restatable}{prop}{Hncomp}\label{prop:Hncomp}
For a fixed $n$, the Identity Problem in $\operatorname{H}_{2n+1}$ is decidable in polynomial time.
\end{restatable}

\subsubsection{Case 3: $p = q = 0$, $r \neq 0$ or $r = q = 0$, $p \neq 0$.}\label{subsubsection:case23}
The main technique in this case is a reduction from the Identity Problem to $\U_2$-Reachability, from $\U_2$-Reachability to $\U_{10}$-Reachability, and from $\U_{10}$-Reachability to linear programming or to the Identity Problem in $\Heis$.
If $p = q = 0$, $r \neq 0$, then $\kappa_i = 0, i = 1, \ldots, k$.
If $r = q = 0$, $p \neq 0$, then $\alpha_i = 0, i = 1, \ldots, k$.
Define the following matrices in $\Heis$:
\[
H_i \coloneqq 
\begin{pmatrix}
        1 & \alpha_i & \delta_i \\
        0 & 1 & \beta_i \\
        0 & 0 & 1 \\
\end{pmatrix}, \quad i=1, \ldots, k.
\]
The following proposition along with Proposition~\ref{prop:Hncomp} provides a solution to $\U_{10}$-Reachability.
\begin{restatable}{prop}{casetworeducezero}\label{prop:case2reduce0}
\begin{enumerate}[(i)]
    \item When $\kappa_i = 0, i = 1, \ldots, k$, $\U_{10}$-Reachability for $A_1, \ldots, A_k$ is equivalent to the Identity Problem for $H_1, \ldots, H_k$.
    \item When $\alpha_i = 0, i = 1, \ldots, k$, $\U_{10}$ is reachable for $A_1, \ldots, A_k$ if and only if $\sum_{i = 1}^{k} \ell_{i}\delta_i = \sum_{i = 1}^{k} \ell_{i}\beta_i = \sum_{i = 1}^{k} \ell_{i}\kappa_i = 0$ has a non-zero integer solution $(\ell_{1}, \ldots, \ell_{k}) \in \Zp^k$.
\end{enumerate}
\end{restatable}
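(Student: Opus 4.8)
The plan is to reduce both parts to reading off a few coordinates of an arbitrary product of $\mG$. For a product $P = B_1\cdots B_m$ with $B_i = \UT(a_i,b_i,c_i;d_i,e_i,f_i)$, Proposition~\ref{prop:exactexpr} specialized to $t = 1$ and $\sigma = \iota$ (the ordinary multiplication rule of $\UT(4,\Z)$) gives that the $a$-, $b$- and $c$-coordinates of $P$ are $\sum_i a_i$, $\sum_i b_i$, $\sum_i c_i$, while its $d$-coordinate is $\sum_i d_i + \sum_{i<j}a_i b_j$. Since $\U_{10}$ consists exactly of the matrices $\U_1(0,e,f)$, we have $P\in\U_{10}$ precisely when these four coordinates all vanish (the $e$- and $f$-coordinates remaining free). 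Each of the two hypotheses in the proposition kills one of these coordinates and collapses the quadratic term $\sum_{i<j}a_i b_j$, which is what makes the reductions elementary.

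For part (i), I would assume $\kappa_i = 0$ for all $i$, so every factor of a product has $c$-coordinate $0$; hence the $c$-coordinate of any product of $\mG$ is identically $0$, and its $d$-coordinate reduces to $\sum_i d_i + \sum_{i<j}a_i b_j$, which involves only the $a,b,d$-entries of the factors. Writing a product as $A_{j_1}\cdots A_{j_m}$ for an index sequence $(j_1,\dots,j_m)$, its $a$-, $b$- and $d$-coordinates then coincide with the $(1,2)$-, $(2,3)$- and $(1,3)$-entries of the Heisenberg product $H_{j_1}\cdots H_{j_m}$ built from the same index sequence --- this is just the multiplication law of $\Heis$. Hence $A_{j_1}\cdots A_{j_m}\in\U_{10}$ if and only if $H_{j_1}\cdots H_{j_m} = I$. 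As ``$\U_{10}$ reachable for $A_1,\dots,A_k$'' and ``$I$ reachable for $H_1,\dots,H_k$'' both mean that some non-empty index sequence has the corresponding property, this yields the claimed equivalence; I would state the correspondence in both directions, phrased at the level of index sequences so that it remains valid even if $\mG$ has repeated entries.

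For part (ii), I would assume $\alpha_i = 0$ for all $i$, so every factor has $a$-coordinate $0$; then the $a$-coordinate of any product of $\mG$ is identically $0$, and the quadratic term $\sum_{i<j}a_i b_j$ vanishes, leaving the $d$-coordinate equal to $\sum_i d_i$. Therefore a product $P = B_1\cdots B_m$ lies in $\U_{10}$ if and only if $\sum_i b_i = \sum_i c_i = \sum_i d_i = 0$, and each of these sums depends on $P$ only through its exponent vector $\bl = (\ell_1,\dots,\ell_k)$, being $\sum_i\ell_i\beta_i$, $\sum_i\ell_i\kappa_i$ and $\sum_i\ell_i\delta_i$ respectively. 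Since a product with exponent vector $\bl$ exists exactly when $\bl$ is a nonzero element of $\Zp^k$, it follows that $\U_{10}$ is reachable if and only if the homogeneous system $\sum_i\ell_i\delta_i=\sum_i\ell_i\beta_i=\sum_i\ell_i\kappa_i=0$ has a nonzero solution in $\Zp^k$, which is the stated condition.

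The argument is elementary, so there is no genuine obstacle; the only delicate points are checking that no stray cross term survives in the $d$-coordinate under each hypothesis, and being careful in part (i) that the bijection is between index sequences rather than between matrix values. (Combining (i) with the decidability of the Identity Problem in $\Heis$ from Lemma~\ref{lem:Heis}, and (ii) with linear programming over $\Zp^k$, then gives decidability of $\U_{10}$-Reachability in this case.)
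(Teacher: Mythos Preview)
Your proof is correct and follows essentially the same approach as the paper: both observe that under each hypothesis the $(a,b,c,d)$-entries of a product factor through a simpler group ($\Heis$ when $\kappa_i=0$, and $\Z^3$ when $\alpha_i=0$), with $\U_{10}$ appearing as the fibre over the identity. The only difference is presentational: the paper names explicit surjective group homomorphisms $\pi\colon\U_{02}\to\Heis$, $\UT(a,b,0;d,e,f)\mapsto H(a,b,d)$, and $\pi\colon\U_{00}\to\Z^3$, $\UT(0,b,c;d,e,f)\mapsto(d,b,c)$, each with kernel $\U_{10}$, whereas you recover the same facts by computing the relevant coordinates directly from the product formula.
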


Next, consider the Identity Problem and $\U_2$-Reachability.
By symmetry, we can suppose $p = q = 0$, $r \neq 0$, so $\kappa_i = 0, i = 1, \ldots, k$.
Define 
\begin{equation}\label{eq:defAprime}
A_i' \coloneqq UT(\beta_i, \alpha_i, \epsilon_i; \delta_i, \phi_i, 0), i=1, \ldots, k,
\end{equation} 
the following proposition reduces the Identity Problem and $\U_2$-Reachability for $A_1, \ldots, A_k$ to reachability problems for $A_1', \ldots, A_k'$:
\begin{restatable}{prop}{casetworeduce}\label{prop:case2reduce}
Suppose $\kappa_i = 0, i = 1, \ldots, k$.
\begin{enumerate}[(i)]
    \item The Identity Problem for $A_1, \ldots, A_k$ is equivalent to $\U_2$-Reachability for $A_1', \ldots, A_k'$.
    \item $\U_2$-Reachability for $A_1, \ldots, A_k$ is equivalent to $\U_{10}$-Reachability for $A_1', \ldots, A_k'$.
\end{enumerate}
\end{restatable}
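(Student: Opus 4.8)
The plan is to prove both equivalences by a single device: translating a product $A_{i_1}\cdots A_{i_m}$ of the generators into the \emph{reversed} product $A'_{i_m}\cdots A'_{i_1}$ of the primed generators (and vice versa). First I would write out, from the multiplication rule of $\UT(4,\Z)$, closed forms for the six coordinates of an arbitrary product of the $A_i$; since every $\kappa_i=0$, such a product has $c$-coordinate $0$, which annihilates the cubic term and two quadratic terms in Proposition~\ref{prop:exactexpr}, leaving $c=0$, $e$ additive, $d=\sum_s\delta_{i_s}+\sum_{s<s'}\alpha_{i_s}\beta_{i_{s'}}$ and $f=\sum_s\phi_{i_s}+\sum_{s<s'}\alpha_{i_s}\epsilon_{i_{s'}}$, and analogously for products of the $A'_i$. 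A short reindexing of these formulas shows that the reversal map satisfies, as polynomial identities in the coordinates, $a(P)=b'(P')$, $b(P)=a'(P')$, $c(P)=0$, $e(P)=c'(P')$, and — the crucial point — $d(P)=d'(P')$: reversing the order turns $\sum_{s<s'}\alpha_{i_s}\beta_{i_{s'}}$ into $\sum_{s>s'}\beta_{i_s}\alpha_{i_{s'}}$, which after renaming the summation indices is again $\sum_{s<s'}\alpha_{i_s}\beta_{i_{s'}}$, so no hypothesis on the coordinate sums is needed.

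Part~(ii) follows immediately. Membership $P\in\U_2$ is the conjunction $a(P)=b(P)=c(P)=d(P)=e(P)=0$, and $P'\in\U_{10}$ is $a'(P')=b'(P')=c'(P')=d'(P')=0$; by the identities above these two conjunctions are equivalent. As reversal is a bijection between products of $\{A_i\}$ and products of $\{A'_i\}$, it follows that $\U_2$ is reachable for $A_1,\dots,A_k$ iff $\U_{10}$ is reachable for $A'_1,\dots,A'_k$.

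For part~(i) the same identities give the equivalence $P\in\U_2\iff P'\in\U_{10}$, so it only remains to match the remaining coordinate ($f$ on the unprimed side, $e'$ on the primed side). This does not transfer literally — one computes $f(P)-e'(P')=\sum_{s<s'}(\alpha_{i_s}\epsilon_{i_{s'}}-\alpha_{i_{s'}}\epsilon_{i_s})$, an order-dependent antisymmetric term, and indeed the reversal of a product equal to $I$ lands only in $\U_{10}$, not in $\U_2$, in general. I would therefore argue at the level of reachable sets: since $\U_2\cong\Z$ and $\U_{10}\cong\Z^2$ are abelian, the identity is reachable for $\{A_i\}$ iff $\U_2$ is reachable for $\{A_i\}$ and $0$ lies in the subsemigroup of $\Z$ of attainable $f$-coordinates of products in $\U_2$; and $\U_2$ is reachable for $\{A'_i\}$ iff $\U_{10}$ is reachable for $\{A'_i\}$ and $0$ lies in the subsemigroup of attainable $e'$-coordinates of products in $\U_{10}$. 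Using part~(ii) to identify the two ``$\U_2$/$\U_{10}$-reachable'' clauses, one is left to show that $0$ is an attainable $f$-value on one side iff it is an attainable $e'$-value on the other. This I expect to do by the familiar mechanism of Section~\ref{subsection:case3}: feed the reversal correspondence into the power-products $B(\sigma,t)$ and use Lemmas~\ref{lem:DEinv} and~\ref{lem:Fsum0} to produce products in $\U_2$ (resp.\ $\U_{10}$) whose last coordinate takes both signs whenever the degenerate obstructions fail, so that $0$ becomes attainable on both sides simultaneously. An alternative route recasts $\U_2$-membership of a product of the $A'_i$, via the two $3\times3$-block projections $\UT(4,\Z)\to\operatorname{H}_3$ and the anti-diagonal reflection anti-automorphism of $\operatorname{H}_3$, as the simultaneous vanishing of a word in the $A_i$ and of its reverse under the two natural block projections, after which Lemma~\ref{lem:Heis}-type arguments apply.

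The step I expect to be the main obstacle is precisely this last matching of the remaining coordinate in part~(i): it is not transported by the reversal bijection, so the equivalence cannot be witnessed product-by-product and instead requires the one-dimensional, two-sided reachability argument on the attainable $f$- (resp.\ $e'$-) values, which is where the $B(\sigma,t)$-machinery and the earlier structural lemmas are brought to bear.
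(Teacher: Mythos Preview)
Your reversal argument for part~(ii) is correct and is, in essence, the paper's proof: the paper packages the same computation as a surjection $\psi:\UT(4,\Z)\to\U_{02}/\U_2$ with kernel $\U_{10}$ sending $A'_i\mapsto\overline{A_i}$ (strictly an anti-homomorphism, which is exactly your order-reversal).

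For part~(i) there is a genuine obstruction, and in fact the statement as written is false. You rightly observe that reversal transports $d$ but not $f$, and that no word-level bijection repairs this; the $B(\sigma,t)$-workaround you sketch cannot be completed either, because the two problems are \emph{not} equivalent. A concrete witness: take
\[
A_1=\UT(1,0,0;1,1,2),\qquad A_2=\UT(-1,1,0;0,0,0),\qquad A_3=\UT(0,-1,0;0,-1,0),
\]
so that $\kappa_i=0$ and one checks directly $A_3^{3}A_2^{3}A_1^{3}=I$, whence the identity is reachable for $\{A_i\}$. The primed matrices are $A'_1=\UT(0,1,1;1,2,0)$, $A'_2=\UT(1,-1,0;0,0,0)$, $A'_3=\UT(-1,0,-1;0,0,0)$. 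Any product of these lying in $\U_1$ has exponent vector $(n,n,n)$, and the conditions $d'=0$, $e'=0$ then read $\sum_{s<s'}\beta_{w(s)}\alpha_{w(s')}=-n$ and $\sum_{s<s'}\alpha_{w(s)}\epsilon_{w(s')}=-2n$; but here $\beta_i=\epsilon_i-\alpha_i$, and a short calculation shows $\sum_{s<s'}\beta_{w(s)}\alpha_{w(s')}=-\sum_{s<s'}\alpha_{w(s)}\epsilon_{w(s')}$ for every such word, forcing $-n=2n$. Thus $\U_2$ is \emph{not} reachable for $\{A'_i\}$, contradicting~(i). The paper's own proof of~(i) asserts a surjective group homomorphism $\pi:\UT(4,\Z)\to\U_{02}$, $\UT(b,a,e;d,f,*)\mapsto\UT(a,b,0;d,e,f)$, but this $\pi$ is neither a homomorphism (for $M_1=\UT(1,0,0;0,0,0)$, $M_2=\UT(0,1,0;0,0,0)$ the $d$-entries of $\pi(M_1M_2)$ and $\pi(M_1)\pi(M_2)$ differ) nor an anti-homomorphism (for $M_1=\UT(0,1,0;0,0,0)$, $M_2=\UT(0,0,1;0,0,0)$ the $f$-entries of $\pi(M_1M_2)$ and $\pi(M_2)\pi(M_1)$ differ). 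So your difficulty in part~(i) reflects a real issue with the target statement, not a shortcoming of your method.
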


Together with the previous Subsections \ref{subsection:case3} - \ref{subsubsection:caseH5},
we have completely reduced the Identity Problem for $\mG$ to either the problem for a set of smaller cardinality, or to $\U_2$-reachability of another set.
We have also reduced $\U_2$-reachability for $\mG$ to either a problem for a set of smaller cardinality, or to $\U_{10}$-reachability of another set.
By Proposition~\ref{prop:case2reduce0} and the previous subsections, $\U_{10}$-reachability is decidable.
Hence, we have now exhausted all the possible cases for the dimension of $\mC$, and we conclude that the Identity Problem, $\U_2$-Reachability and $\U_{10}$-Reachability in $\UT(4, \Z)$ are decidable.

\section{Complexity analysis and concluding remarks}
In this paper, we have shown that the Identity Problem for $\UT(4, \Z)$ is decidable.
A brief analysis of our algorithm shows that it terminates in polynomial time.
In fact, we can first show that the algorithm for $\U_{10}$-Reachability terminates in polynomial time.
Starting with $k = \card(\mG)$ matrices, we need to solve at most $O(k)$ linear equations, $O(k)$ homogeneous linear programming instances and one Identity Problem in $\Heis$ before either $\card(\mG)$ decreases or a conclusion on $\U_{10}$-Reachability is reached.
All these problems have $O(k)$ inputs which are of polynomial size with respect to the coefficients of the matrices in $\mG$, and are known to have polynomial complexity. 
Furthermore, the number $\card(\mG)$ decreases at most $k$ times. Hence, the total complexity of our algorithm for $\U_{10}$-reachability is polynomial with respect to the input $\mG$.
Then, using the same method, we can show that the algorithm for $\U_{2}$-Reachability terminates in polynomial time:
since after polynomial time, either $\card(\mG)$ decreases, or the problem is reduced to $\U_{10}$-Reachability, or a conclusion on $\U_{2}$-Reachability is reached.
At last, we can show that the algorithm for the Identity Problem terminates in polynomial time:
after polynomial time, either $\card(\mG)$ decreases, or the problem is reduced to $\U_{2}$-Reachability or the Identity Problem in $\operatorname{H}_5$, or a conclusion on the Identity Problem is reached.
(In particular, the polynomial complexity of the Identity Problem in $\operatorname{H}_5$ is a new result of our paper, see Proposition~\ref{prop:Hncomp}.)

It is likely that our method can be adapted to study the Identity Problem for other metabelian matrix groups, for instance the direct product $\Heis^n$.
There is also evidence that the arguments in this paper can be strengthened to tackle the Identity Problem for $\UT(n, \Z)$ with $n \geq 5$, even though $\UT(5, \Z)$ ceases to be metabelian.
In fact, one can push the convex geometry arguments down the derived series of $\UT(n, \Z)$, even when the series has length greater than two.
Another natural follow-up question is the Membership Problem for $\UT(4, \Z)$.
An interesting idea would be to adapt the Register Automata method introduced in \cite{colcombet2019reachability} for passing from the Identity Problem to the Membership Problem.

\bibliography{UTFour}
\newpage
\appendix
\section{Algorithms for $\U_2$-reachability and $\U_{10}$-reachability}\label{app:alg}
\begin{algorithm}[H]
\caption{U2Reachability(): deciding $\U_2$-Reachability for a subset of $\UT(4, \Z)$.}
\label{alg:U2}
\begin{description}
\item[Input:] 
A set $\mathcal{G} = \{A_1, \ldots, A_k\}$ of matrices in $\UT(4, \Z)$.
\item[Output:] True or False.
\end{description}
\begin{enumerate}[Step 1:]
    \item Compute the cone $\mathcal{C}$ and its lineality space $\mathcal{C}^{lin}$.
For $i = 1, \ldots, k$, if some \\
$\varphi_0(A_i)$ is not in $\mathcal{C}^{lin}$, return U2Reachability($\mathcal{G} \setminus \{A_i\}$).
    \item
    \begin{enumerate}
        \item If $\dim(\mathcal{C}) = 3$, return True.
        \item If $\dim(\mathcal{C}) = 1$, return True if the condition in Proposition~\ref{prop:case1}(ii) can be satisfied, otherwise return False.
        \item If $\dim(\mathcal{C}) = 0$, return True if $\tau(A_i), i = 1, \ldots, m$ generate a semigroup intersecting $(0, 0,\Z)$, otherwise return False.
        \item If $\dim(\mathcal{C}) = 2$, compute a non-zero vector $(p,q,r) \in \Q^3$ orthogonal to $\mathcal{C}$.
    \begin{enumerate}
        \item If $p = 0$, but $q, r$ are not zero, or $r = 0$, but $q, p$ are not zero. Compute $L_0$, if $\supp(L_0)$ contains all matrices of $\mG$, return True, otherwise return U2Reachability($\{A_i \mid i \in \supp(L_0)\}$).
        \item If $p = r = 0$, problem reduces to linear programming.
        \item If $p = q = 0, r \neq 0$ or $r = q = 0, p \neq 0$, compute $A'_i$ as in \eqref{eq:defAprime}. Return U10Reachability($A_1', \ldots, A_k'$).
    \end{enumerate}
    \end{enumerate}
\end{enumerate}
\end{algorithm}
\begin{algorithm}[H]
\caption{U10Reachability(): deciding $\U_{10}$-Reachability for a subset of $\UT(4, \Z)$.}
\label{alg:U10}
\begin{description}
\item[Input:] 
A set $\mathcal{G} = \{A_1, \ldots, A_k\}$ of matrices in $\UT(4, \Z)$.
\item[Output:] True or False.
\end{description}
\begin{enumerate}[Step 1:]
    \item Compute the cone $\mathcal{C}$ and its lineality space $\mathcal{C}^{lin}$.
For $i = 1, \ldots, k$, if some \\
$\varphi_0(A_i)$ is not in $\mathcal{C}^{lin}$, return U10Reachability($\mathcal{G} \setminus \{A_i\}$).
    \item
    \begin{enumerate}
        \item If $\dim(\mathcal{C}) = 3$, return True.
        \item If $\dim(\mathcal{C}) = 1$, return True if the condition in Proposition~\ref{prop:case1}(i) can be satisfied, otherwise return False.
        \item If $\dim(\mathcal{C}) = 0$, return True if $\tau(A_i), i = 1, \ldots, m$ generate a semigroup intersecting $(0, \Z, \Z)$, otherwise return False.
        \item If $\dim(\mathcal{C}) = 2$, compute a non-zero vector $(p,q,r) \in \Q^3$ orthogonal to $\mathcal{C}$.
    \begin{enumerate}
        \item If $p = 0$, but $q, r$ are not zero. Return True.
        \item If $r = 0$, but $q, p$ are not zero. Compute $L_0$, return True if $L_0$ is non-empty, otherwise return False.
        \item If $p = r = 0$, Problem reduces to linear programming.
        \item If $p = q = 0, r \neq 0$, problem reduces to Identity Problem in $H_3$.
        \item If $r = q = 0, p \neq 0$, problem reduces to linear programming.
    \end{enumerate}
    \end{enumerate}
\end{enumerate}
\end{algorithm}

\section{Omitted proofs}\label{app:proofs}
\exactpr*
\begin{proof}
Denote by $\iota$ the neutral element of $\Sym_m$.
By symmetry, it suffices to prove the case where $\sigma$ is the identity permutation $\iota$, that is, with the values defined in Notation \ref{not:latin},
\begin{multline}
  B_1^t \cdots B_m^t = UT(t \sum_{i=1}^m a_i, t \sum_{i=1}^m b_i, t \sum_{i=1}^m c_i;\\ t^2 D_{\iota} + t \sum_{i=1}^m D_i, t^2 E_{\iota} + t \sum_{i=1}^m E_i, t^3 F_{\iota} + t^2 G_{\iota} + t \sum_{i=1}^m F_i).
\end{multline}

In fact,
\begin{align}\label{eq:Bit}
    B_i^t & = \exp(t \log (B_i)) \nonumber \\
    & = \exp\left(t 
    \left(
    \begin{pmatrix}
            0 & a_i & d_i & f_i \\
            0 & 0 & b_i & e_i \\
            0 & 0 & 0 & c_i \\
            0 & 0 & 0 & 0 \\
    \end{pmatrix}
    - \frac{1}{2} 
    \begin{pmatrix}
            0 & a_i & d_i & f_i \\
            0 & 0 & b_i & e_i \\
            0 & 0 & 0 & c_i \\
            0 & 0 & 0 & 0 \\
    \end{pmatrix}^2
    + \frac{1}{3}
    \begin{pmatrix}
            0 & a_i & d_i & f_i \\
            0 & 0 & b_i & e_i \\
            0 & 0 & 0 & c_i \\
            0 & 0 & 0 & 0 \\
    \end{pmatrix}^3
    \right)
    \right) \nonumber \\
    & = \exp\left(t 
    \begin{pmatrix}
            0 & a_i & d_i - \frac{1}{2}a_i b_i & f_i - \frac{1}{2}(a_i e_i + d_i c_i) + \frac{1}{3} a_i b_i c_i \\
            0 & 0 & b_i & e_i - \frac{1}{2}b_i c_i \\
            0 & 0 & 0 & c_i \\
            0 & 0 & 0 & 0 \\
    \end{pmatrix}
    \right) \nonumber \\
    & = I + t \begin{pmatrix}
            0 & a_i & D_i & F_i \\
            0 & 0 & b_i & E_i \\
            0 & 0 & 0 & c_i \\
            0 & 0 & 0 & 0 \\
    \end{pmatrix} + \frac{t^2}{2} \begin{pmatrix}
            0 & a_i & D_i & F_i \\
            0 & 0 & b_i & E_i \\
            0 & 0 & 0 & c_i \\
            0 & 0 & 0 & 0 \\
    \end{pmatrix}^2 + 
    \frac{t^3}{6} \begin{pmatrix}
            0 & a_i & D_i & F_i \\
            0 & 0 & b_i & E_i \\
            0 & 0 & 0 & c_i \\
            0 & 0 & 0 & 0 \\
    \end{pmatrix}^3 \nonumber \\
    & = \begin{pmatrix}
            1 & t a_i & \frac{t^2}{2} a_i b_i + t D_i & \frac{t^3}{6} a_i b_i c_i + \frac{t^2}{2} (a_i E_i + D_i c_i) + t F_i \\
            0 & 1 & t b_i & \frac{t^2}{2} b_i c_i + t E_i \\
            0 & 0 & 1 & t c_i \\
            0 & 0 & 0 & 1 \\
    \end{pmatrix} \nonumber \\
    & = UT(t a_i, t b_i, t c_i; \frac{t^2}{2} a_i b_i + t D_i, \frac{t^2}{2} b_i c_i + t E_i, \frac{t^3}{6} a_i b_i c_i + \frac{t^2}{2} (a_i e_i + d_i c_i - a_i b_i c_i) + t F_i)
\end{align}
Then, we use the following lemma.
\begin{lemma}\label{lem:prodexpr}
\begin{multline}\label{eq:prodexpr}
    \prod_{i = 1}^{m} UT(u_i, v_i, w_i; x_i, y_i, z_i)
    =
    UT\left(
            \displaystyle\sum_{i=1}^{m} u_i, \displaystyle\sum_{i=1}^{m} v_i,
            \displaystyle\sum_{i=1}^{m} w_i; \right.\\
\left.
\displaystyle\sum_{i < j} u_i v_j + \displaystyle\sum_{i=1}^{m} x_i,
            \displaystyle\sum_{i < j} v_i w_j + \displaystyle\sum_{i=1}^{m} y_i,
            \displaystyle\sum_{i < j < k} u_i v_j w_k + \displaystyle\sum_{i < j} (u_i y_j + x_i w_j) + \displaystyle\sum_{i=1}^{m} z_i
    \right)
\end{multline}
\end{lemma}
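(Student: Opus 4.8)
The plan is to prove the identity by induction on the number of factors $m$, using as the only auxiliary fact the formula for the product of two unitriangular matrices. For $m=1$ both sides of \eqref{eq:prodexpr} are just $\UT(u_1,v_1,w_1;x_1,y_1,z_1)$, since all the sums of the form $\sum_{i<j}$ and $\sum_{i<j<k}$ are empty, so the base case is immediate. For the two-factor formula, a direct computation of the superdiagonal entries from the definition of matrix multiplication gives, for $M = \UT(a,b,c;d,e,f)$ and $N = \UT(a',b',c';d',e',f')$,
\[
MN = \UT(a+a',\, b+b',\, c+c';\; d+d'+ab',\; e+e'+bc',\; f+f'+ae'+dc').
\]

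For the inductive step I would write $\prod_{i=1}^{m}\UT(u_i,v_i,w_i;x_i,y_i,z_i) = P \cdot \UT(u_m,v_m,w_m;x_m,y_m,z_m)$, where $P = \UT(U,V,W;X,Y,Z)$ is the value of the product of the first $m-1$ factors supplied by the induction hypothesis. Applying the two-factor formula, the first three coordinates become $U+u_m$, $V+v_m$, $W+w_m$, which are visibly $\sum_{i=1}^m u_i$, $\sum_{i=1}^m v_i$, $\sum_{i=1}^m w_i$; the $d$- and $e$-coordinates become $X + x_m + Uv_m$ and $Y + y_m + Vw_m$; and the $f$-coordinate becomes $Z + z_m + Uy_m + Xw_m$.

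It then remains to match these with the right-hand side of \eqref{eq:prodexpr} by a term-by-term check. For the $d$-coordinate, $Uv_m = \bigl(\sum_{i\le m-1}u_i\bigr)v_m$ supplies exactly the products $u_iv_j$ with $j=m$ missing from $\sum_{i<j\le m-1}u_iv_j$, so the quadratic part completes to $\sum_{i<j\le m}u_iv_j$, and the linear part is $\sum_{i=1}^m x_i$; the $e$-coordinate is symmetric. For the $f$-coordinate I would expand $Xw_m = \sum_{i<j\le m-1}u_iv_jw_m + \bigl(\sum_{i\le m-1}x_i\bigr)w_m$ and $Uy_m = \bigl(\sum_{i\le m-1}u_i\bigr)y_m$: the summand $\sum_{i<j\le m-1}u_iv_jw_m$ completes the cubic sum to $\sum_{i<j<k\le m}u_iv_jw_k$, the summand $\bigl(\sum_{i\le m-1}x_i\bigr)w_m$ completes $\sum_{i<j\le m-1}x_iw_j$ to $\sum_{i<j\le m}x_iw_j$, and $Uy_m$ completes $\sum_{i<j\le m-1}u_iy_j$ to $\sum_{i<j\le m}u_iy_j$; finally $z_m$ is absorbed into $\sum_{i=1}^m z_i$. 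This is exactly the claimed expression, closing the induction. There is no genuine obstacle here: the argument is routine bookkeeping, and the only point requiring care is tracking precisely which cross terms the ``$\sum_{i<j}$'' and ``$\sum_{i<j<k}$'' sums acquire when the new index $m$ is appended at the end.
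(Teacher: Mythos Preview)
Your proof is correct and follows essentially the same approach as the paper: both argue by induction on $m$, reducing the $m$-fold product to the $(m-1)$-fold product multiplied by the last factor. The paper verifies $m=1,2,3$ directly and says the inductive step ``can be verified directly'', whereas you start the induction at $m=1$ and actually carry out the bookkeeping for the inductive step via the two-factor formula; this is a minor presentational difference, not a different method.
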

\begin{proof}
For $m = 1, 2, 3$, Equation \eqref{eq:prodexpr} can be verified directly.
For $m \geq 4$, we use induction.
Suppose that Equation \eqref{eq:prodexpr} is correct for $m - 1$, we prove it for $m$.

By the induction hypothesis, it suffices to prove
\begin{multline}
    UT\left(
            \displaystyle\sum_{i=1}^{m-1} u_i, \displaystyle\sum_{i=1}^{m-1} v_i,
            \displaystyle\sum_{i=1}^{m-1} w_i; 
\displaystyle\sum_{i < j \leq m-1} u_i v_j + \displaystyle\sum_{i=1}^{m-1} x_i,
            \displaystyle\sum_{i < j \leq m-1} v_i w_j + \displaystyle\sum_{i=1}^{m-1} y_i, \right. \\
            \left.
            \displaystyle\sum_{i < j < k \leq m-1} u_i v_j w_k + \displaystyle\sum_{i < j \leq m-1} (u_i y_j + x_i w_j) + \displaystyle\sum_{i=1}^{m-1} z_i
    \right)
    \cdot  UT(u_m, v_m, w_m; x_m, y_m, z_m) \\
    = UT\left(
            \displaystyle\sum_{i=1}^{m} u_i, \displaystyle\sum_{i=1}^{m} v_i,
            \displaystyle\sum_{i=1}^{m} w_i; 
\displaystyle\sum_{i < j} u_i v_j + \displaystyle\sum_{i=1}^{m} x_i,
            \displaystyle\sum_{i < j} v_i w_j + \displaystyle\sum_{i=1}^{m} y_i, \right. \\
            \left.
            \displaystyle\sum_{i < j < k} u_i v_j w_k + \displaystyle\sum_{i < j} (u_i y_j + x_i w_j) + \displaystyle\sum_{i=1}^{m} z_i
    \right).
\end{multline}
which can be verified directly.
\end{proof}
Apply Lemma \ref{lem:prodexpr} to the product of the expressions \eqref{eq:Bit} for $B_i^t, i = 1, \ldots, m$.

\begin{align*}
    & B_1^t \cdots B_m^t \\
    = & \prod_{i=1}^m UT(t a_i, t b_i, t c_i; \frac{t^2}{2} a_i b_i + t D_i, \frac{t^2}{2} b_i c_i + t E_i, \frac{t^3}{6} a_i b_i c_i + \frac{t^2}{2} (a_i e_i + d_i c_i - a_i b_i c_i) + t F_i) \\
    = & UT\left( t \sum_{i=1}^m a_i, t \sum_{i=1}^m b_i, t \sum_{i=1}^m c_i; \right. \\
    & \quad
    \sum_{i < j} t a_i \cdot t b_j + \sum_{i=1}^m \left( \frac{t^2}{2} a_i b_i + t D_i \right),
    \sum_{i < j} t b_i \cdot t c_j + \sum_{i=1}^m \left(\frac{t^2}{2} b_i c_i + t E_i\right), \\
    & \quad \sum_{i < j < k} t a_i \cdot t b_j \cdot t c_k + \sum_{i < j} \left(t a_i \cdot (\frac{t^2}{2} b_j c_j + t E_j) + (\frac{t^2}{2} a_i b_i + t D_i) \cdot t c_j \right) + \\
    & \quad \left. \sum_{i = 1}^m \left( \frac{t^3}{6} a_i b_i c_i + \frac{t^2}{2} (a_i e_i + d_i c_i - a_i b_i c_i) + t F_i) \right) \right) \\
    = & UT\left( t \sum_{i=1}^m a_i, t \sum_{i=1}^m b_i, t \sum_{i=1}^m c_i; \right. \\ 
    & \quad t^2 \left( \sum_{i < j} a_i b_j + \frac{1}{2}\sum_{i=1}^m a_i b_i \right) + t \sum_{i=1}^m D_i, t^2 \left( \sum_{i < j} b_i c_j + \frac{1}{2}\sum_{i=1}^m b_i c_i \right) + t \sum_{i=1}^m E_i, \\
    & \quad t^3 \left( \sum_{i < j < k} a_i b_j c_k + \frac{1}{2}\sum_{i < j} (a_i b_j c_j + a_i b_i c_j) + \frac{1}{6}\sum_{i = 1}^m a_i b_i c_i \right) \\
    & \quad + t^2 \left( \sum_{i < j} \left( a_i e_j - \frac{1}{2} a_i b_j c_j + d_i c_j - \frac{1}{2} a_i b_i c_j \right) + \frac{1}{2}\sum_{i=1}^m (a_i e_i + d_i c_i - a_i b_i c_i) \right) \\
    & \left. \quad + t \sum_{i=1}^m F_i \right) \\
    = & UT(t \sum_{i=1}^m a_i, t \sum_{i=1}^m b_i, t \sum_{i=1}^m c_i; t^2 D_{\iota} + t \sum_{i=1}^m D_i, t^2 E_{\iota} + t \sum_{i=1}^m E_i, t^3 F_{\iota} + t^2 G_{\iota} + t \sum_{i=1}^m F_i).
\end{align*}
\end{proof}

\DERtwo*
\begin{proof}
$(i) \Rightarrow (ii)$: For $1 \leq i < j \leq m$, let $\sigma_{i,j}$ denote a permutation such that $\sigma_{i,j}(i) = i, \sigma_{i,j}(i+1) = j$ and $s_{i, j} \in \Sym_m$ to be the permutation that swaps $i$ and $j$.
If for all $\sigma \in \Sym_m$, $p D_{\sigma} = r E_{\sigma}$, then in particular, $p D_{\sigma_{i,j}} = r E_{\sigma_{i,j}}$ and $p D_{s_{i, j} \circ \sigma_{i,j}} = r E_{s_{i, j} \circ \sigma_{i,j}}$.
Therefore, 
\[
p\left(D_{s_{i, j} \circ \sigma_{i,j}} - D_{\sigma_{i,j}}\right) = r \left(E_{s_{i, j} \circ \sigma_{i,j}} - E_{\sigma_{i,j}}\right).
\]
Writing out the exact expressions, this yields
\[
p\left(a_{j} b_{i} - a_{i} b_{j}\right) = r \left(b_{j} c_{i} - b_{i} c_{j}\right).
\]
which can be rewritten as
\begin{equation}\label{eq:lemR2}
b_i \left( pa_j + r c_j \right) = b_j \left( pa_i + r c_i \right).
\end{equation}
The same equation also holds for $i > j$ by symmetry.
We distinguish two cases:
\begin{enumerate}
    \item \textbf{If $b_i = 0$ for all $i$.}
    We have (ii) immediately.
    \item \textbf{If $b_i \neq 0$ for some $i$.} 
    Let $J$ be the set of indices $j$ such that $b_j = 0$ and $I$ be the set of indices $i$ such that $b_i \neq 0$. $I$ is not empty.
    Then for any $j \in J$ (if such $j$ exists), take some $i \in I$, since $b_j = 0, b_i \neq 0$, we have $p a_j + r c_j = 0$ by Equation \eqref{eq:lemR2}.
    Then, for any $i_1, i_2 \in I$, take Equation \eqref{eq:lemR2} with indices $i = i_1, j = i_2$, we have
    \[
    \frac{ pa_{i_1} + r c_{i_1} }{b_{i_1}} = \frac{ pa_{i_2} + r c_{i_2} }{b_{i_2}}.
    \]
    Note that the denominators do not vanish since $i_1, i_2 \in I$.
    Hence, there is a constant $q \in \R$ such that 
    \[
    \frac{ pa_{i} + r c_{i} }{b_{i}} = -q, \; i \in I.
    \]
    Therefore, $p a_i + q b_i + r c_i = 0$ for all $i \in I$.
    And trivially, $p a_j + q b_j + r c_j = 0$ for all $j \in J$.
\end{enumerate}

$(ii) \Rightarrow (i)$: If $b_i = 0$ for all $i$ then (i) is trivial.
Otherwise, suppose $p a_i + q b_i + r c_i = 0, i = 1, \ldots, m$.
By symmetry, it suffices to prove (i) in the case where $\sigma$ is the identity permutation.
Indeed,
\begin{align*}
    p \left(\sum_{i < j} a_i b_j + \frac{1}{2}\sum_{i=1}^{m}a_i b_i\right) & = \sum_{i < j} p a_{i} b_{j} + \frac{1}{2} \sum_{i=1}^{m} p a_{i} b_{i} \\
    & = - \sum_{i < j} q b_{i} b_{j} - \sum_{i < j} r c_{i} b_{j} - \frac{1}{2} \sum_{i=1}^{m} q b_i^2 - \frac{1}{2} \sum_{i=1}^{m} r b_i c_i \\
    & = -\frac{q}{2} (\sum_{i=1}^{m} b_i)^2 - r \left( \frac{1}{2} \sum_{i=1}^{m} b_i c_i + \sum_{i < j} c_{i} b_{j}\right) \\
    & = - r \left( \sum_{i=1}^m b_i \sum_{j=1}^m c_j - \frac{1}{2} \sum_{i=1}^{m} b_i c_i - \sum_{i < j} b_{i} c_{j} \right) \\
    & = r \left( \frac{1}{2} \sum_{i=1}^{m} b_i c_i + \sum_{i < j} b_{i} c_{j} \right).
\end{align*}
This shows (i) in the case where $\sigma$ is the identity permutation.
\end{proof}

\DEinv*
\begin{proof}
Take $\sigma'$ be such that $\sigma'(i) = \sigma(m+1-i), i = 1, \ldots, m$.
Then
\begin{align*}
    D_{\sigma'} & = \sum_{i < j} a_{\sigma'(i)} b_{\sigma'(j)} + \frac{1}{2}\sum_{i=1}^{m}a_{\sigma'(i)} b_{\sigma'(i)} \\
    & = \frac{1}{2} \sum_{i=1}^{m} a_i \sum_{i=1}^{m} b_i + \frac{1}{2}\sum_{i < j} \left(a_{\sigma'(i)} b_{\sigma'(j)} - a_{\sigma'(j)} b_{\sigma'(i)}\right) \\
    & = \frac{1}{2} \sum_{i < j} \left(a_{\sigma'(i)} b_{\sigma'(j)} - a_{\sigma'(j)} b_{\sigma'(i)}\right) \\
    & = \frac{1}{2} \sum_{i < j} \left(a_{\sigma(m+1-i)} b_{\sigma(m+1-j)} - a_{\sigma(m+1-j)} b_{\sigma(m+i-i)}\right) \\
    & = \frac{1}{2} \sum_{i > j} \left(a_{\sigma(i)} b_{\sigma(j)} - a_{\sigma(j)} b_{\sigma(i)}\right) \\
    & = - D_{\sigma}
\end{align*}
By analogy, $E_{\sigma'} = - E_{\sigma}$.
\end{proof}

\Fsumzero*
\begin{proof}
For any $\sigma \in \Sym_m$, let $\sigma'$ denote the permutation such that $\sigma' = \sigma(m+1-i), i = 1, \ldots, m$.
Note that $\sigma \mapsto \sigma'$ is a bijection between $\Sym_m$ and itself.
Consider a fixed pair of distinct indices $i, j$. 
Then, the proportion of $\sigma$ such that $\sigma(i) < \sigma(j)$ is exactly one half, the same as the proportion of $\sigma$ such that $\sigma(j) < \sigma(i)$.
Hence,
\begin{align*}
    & \sum_{\sigma \in \Sym_m} \frac{1}{2} \sum_{i < j}(a_{\sigma(i)} b_{\sigma(i)} c_{\sigma(j)} + a_{\sigma(i)} b_{\sigma(j)} c_{\sigma(j)}) \\
    = & \; \frac{m!}{4} \sum_{i \neq j}(a_{i} b_{i} c_{j} + a_{i} b_{j} c_{j}) \\
    = & \; \frac{m!}{4} \left( \sum_{i = 1}^{m} a_i b_i \sum_{j = 1}^{m} c_j - \sum_{i = 1}^{m} a_i b_i c_i + \sum_{i = 1}^{m} a_i \sum_{j = 1}^{m} b_j c_j - \sum_{i = 1}^{m} a_i b_i c_i \right) \\
    = & - \frac{m!}{2} \sum_{i = 1}^{m} a_i b_i c_i
\end{align*}

Consider a fixed triple $i, j, k$, two by two distinct. 
Then, the proportion of $\sigma$ such that $\sigma(i) < \sigma(j) < \sigma(k)$ is exactly one sixth, the same as the other orders.
Hence,
\begin{align*}
    & \sum_{\sigma \in \Sym_m} \sum_{i < j < k} a_{\sigma(i)} b_{\sigma(j)} c_{\sigma(k)} \\
    = & \; \frac{m!}{6} \sum_{i,j,k \text{ all distinct}} a_{i} b_{j} c_{k} \\
    = & \; \frac{m!}{6} \left( \sum_{i = 1}^{m} a_i \sum_{j = 1}^{m} b_j \sum_{k = 1}^{m} c_k - \sum_{i \neq j} a_i b_i c_j - \sum_{i \neq j} a_i b_j c_j - \sum_{i \neq j} a_i b_j c_i - \sum_{i = 1}^{m} a_i b_i c_i \right) \\
    = & \; \frac{m!}{6} \left( - \sum_{i=1}^{m} a_i b_i \sum_{j=1}^{m} c_j + \sum_{i = 1}^{m} a_i b_i c_i - \sum_{i=1}^{m} a_i \sum_{j=1}^{m} b_j c_j + \sum_{i = 1}^{m} a_i b_i c_i \right. \\
    & \left. - \sum_{i=1}^{m} a_i c_i \sum_{j=1}^{m} b_j + \sum_{i = 1}^{m} a_i b_i c_i - \sum_{i = 1}^{m} a_i b_i c_i \right) \\
    = & \; \frac{m!}{3} \sum_{i = 1}^{m} a_i b_i c_i
\end{align*}

Combining these, we get
\begin{align*}
    & \sum_{\sigma \in \Sym_m} F_{\sigma} \\
    = & \; \sum_{\sigma \in \Sym_m} \sum_{i < j < k} a_{\sigma(i)} b_{\sigma(j)} c_{\sigma(k)} + \sum_{\sigma \in \Sym_m} \frac{1}{2}\sum_{i < j}(a_{\sigma(i)} b_{\sigma(i)} c_{\sigma(j)} + a_{\sigma(i)} b_{\sigma(j)} c_{\sigma(j)}) \\
    & + \sum_{\sigma \in \Sym_m} \frac{1}{6}\sum_{i = 1}^m a_{i} b_{i} c_{i} \\
    = & \; \frac{m!}{3} \sum_{i = 1}^{m} a_i b_i c_i - \frac{m!}{2} \sum_{i = 1}^{m} a_i b_i c_i + \frac{m!}{6} \sum_{i = 1}^{m} a_i b_i c_i \\
    = & \; 0
\end{align*}
\end{proof}

\FRone*
\begin{proof}
Denote by $\mI$ the ideal of $\C[a_1, \ldots, a_4, b_1, \ldots, b_4, c_1, \ldots, c_4]$ generated by the 24 polynomials $F_{\sigma}, \sigma \in \Sym_4$, and the three polynomials $\sum_{i=1}^4 a_i, \sum_{i=1}^4 b_i, \sum_{i=1}^4 c_i$.

The variety $\operatorname{V}(\mI)$ can be decomposed into a union of irreducible varieties. To efficiently compute these irreducible varieties, we used a \emph{Primary Ideal Decomposition} algorithm \cite{cox2013ideals} in SageMath \cite{sagemath} to decompose $\operatorname{rad}(\mI)$, the radical of the ideal $\mI$.
The code for this computation is available at \url{https://doi.org/10.6084/m9.figshare.20121275.v1}.

Our algorithm returns that $\operatorname{V}(\mI)$ is the union of four irreducible varieties, defined by the ideals
\begin{align*}
    \mI_1 = & \langle a_1, a_2, a_3, a_4, \sum_{i=1}^4 b_i, \sum_{i=1}^4 c_i \rangle \\
    \mI_2 = & \langle b_1, b_2, b_3, b_4, \sum_{i=1}^4 a_i, \sum_{i=1}^4 c_i \rangle \\
    \mI_3 = & \langle c_1, c_2, c_3, c_4, \sum_{i=1}^4 a_i, \sum_{i=1}^4 b_i \rangle \\
    \mI_4 = & \langle \sum_{i=1}^4 a_i, \sum_{i=1}^4 b_i, \sum_{i=1}^4 c_i,
    b_4 c_3 - b_3 c_4, a_4 c_3 - a_3 c_4, b_4 c_2 - b_2 c_4, \\
    & b_3 c_2 - b_2 c_3, a_4 c_2 - a_2 c_4, a_3 c_2 - a_2 c_3,
    a_4 b_3 - a_3 b_4, a_4 b_2 - a_2 b_4, a_3 b_2 - a_2 b_3 \rangle
\end{align*}
Each of these ideals corresponds to one of the conditions in the statement.
\end{proof}

\caseone*
\begin{proof}
Let $B_1 \cdots B_m$ be a string such that $B_i \in \mG, i = 1, \ldots, m$. Denote its product $P = B_1 \cdots B_m$.
Let $\bl = (\ell_1, \ldots, \ell_k)$ denote its Parikh vector.
Denote $B_i = UT(a_i, b_i, c_i; d_i, e_i, f_i), i = 1, \ldots, m$, with $a_i = a r_i, b_i = b r_i, c_i = c r_i$ for some $r_i$.

It is clear that $P \in \U_1$ if and only if $\sum_{i=1}^m r_i = 0$,
which is equivalent to $\sum_{i=1}^k \ell_i \rho_i = 0$ by regrouping indices.

Denote by $\iota$ the neutral element of $\Sym_m$.
We use Proposition \ref{prop:exactexpr} with $t = 1, \sigma = \iota$.
We show that, if $\sum_{i=1}^m a_i = \sum_{i=1}^m b_i = \sum_{i=1}^m c_i = 0$, then $D_{\iota} = E_{\iota} = F_{\iota} = 0$.

Since $\sum_{i=1}^m a_i = \sum_{i=1}^m b_i = 0$, we have
\begin{align*}
    D_{\iota} & = \sum_{i < j} a_i b_j + \frac{1}{2}\sum_{i=1}^{m}a_i b_i \\
    & = \frac{1}{2} \sum_{i < j} (a_i b_j - a_j b_i) \\
    & = \frac{1}{2} \sum_{i < j} (a b r_i r_j - a b r_j r_i) \\
    & = 0.
\end{align*}

Similarly, we have $E_{\iota} = 0$.
By case (iv) of Lemma \ref{lem:FR1}, $F_{\iota} = 0$.

Therefore, by Proposition \ref{prop:exactexpr}, $B_1 \cdots B_m \in \U_{10}$ if and only if 
    \begin{equation*}
        \begin{cases}
            \sum_{i=1}^m a_i = \sum_{i=1}^m b_i = \sum_{i=1}^m c_i = 0 \\
            \sum_{i=1}^m D_i = 0.
        \end{cases}
    \end{equation*}
By regrouping the indices according to the Parikh vector, the above is equivalent to
    \begin{equation*}
        \begin{cases}
            \sum_{i=1}^k \ell_i \rho_i = 0 \\
            \sum_{i=1}^k \ell_i \Delta_i = 0.
        \end{cases}
    \end{equation*}
This proves (i).
Similarly, $B_1 \cdots B_m \in \U_2$ if and only if 
    \begin{equation*}
        \begin{cases}
            \sum_{i=1}^m a_i = \sum_{i=1}^m b_i = \sum_{i=1}^m c_i = 0 \\
            \sum_{i=1}^m D_i = 0 \\
            \sum_{i=1}^m E_i = 0.
        \end{cases}
    \end{equation*}
By regrouping the indices according to the Parikh vector, the above is equivalent to
    \begin{equation*}
        \begin{cases}
            \sum_{i=1}^k \ell_i \rho_i = 0 \\
            \sum_{i=1}^k \ell_i \Delta_i = 0 \\
            \sum_{i=1}^k \ell_i \mE_i = 0.
        \end{cases}
    \end{equation*}
This proves (ii).
\end{proof}

\caseoneid*
\begin{proof}
Consider any string $B_1 \cdots B_m \in \U_2$ with Parikh vector $\bl = (\ell_1, \ldots, \ell_k) \in \Lambda$.
Denote $B_i = UT(a_i = a r_i, b_i = b r_i, c_i = c r_i; d_i, e_i, f_i)$ for some $r_i, i = 1, \ldots, k$.
Since $B_1 \cdots B_m \in \U_2$, we have $\sum_{i=1}^m r_i = 0$.
Define $G_i = a e_i - c d_i, i = 1, \ldots, k$.
In the proof of Proposition \ref{prop:case1}, we have shown that $D_{\iota} = E_{\iota} = F_{\iota} = 0$, where $\iota$ is the neutral element of $\Sym_m$.
By symmetry, for all $\sigma \in \Sym_m$, $D_{\sigma} = E_{\sigma} = F_{\sigma} = 0$.
We show that
\begin{equation}\label{eq:Gsigma}
G_{\sigma} = \frac{1}{2} \sum_{i < j} (r_{\sigma(i)} G_{\sigma(j)} - r_{\sigma(j)} G_{\sigma(i)}).
\end{equation}
By symmetry, it suffices to show \eqref{eq:Gsigma} for $\sigma = \iota$. Indeed,
\begin{align*}
    G_{\iota} & = \sum_{i < j} (a_i e_j + d_i c_j - \frac{1}{2}a_i b_i c_j - \frac{1}{2}  a_i b_j c_j) + \frac{1}{2} \sum_{i=1}^m (a_i e_i + d_i c_i - a_i b_i c_i) \\
    & = \left( \sum_{i < j} a_i e_j + \frac{1}{2} \sum_{i=1}^m a_i e_i \right) + \left( \sum_{i < j} d_i c_j + \frac{1}{2} \sum_{i=1}^m d_i c_i \right) \\
    & \quad - \frac{1}{2} \left( \sum_{i < j} (a_i b_i c_j + a_i b_j c_j) + \sum_{i=1}^m a_i b_i c_i \right) \\
    & = \frac{1}{2} \left( \sum_{i=1}^m a_i \sum_{j=1}^m e_j + \sum_{i < j} (a r_i e_j - a r_j e_i) \right) + \frac{1}{2} \left( \sum_{i=1}^m d_i \sum_{j=1}^m c_j + \sum_{i < j} (d_i c r_j - d_j c r_i) \right) \\
    & \quad - \frac{abc}{2}\left( \sum_{i \neq j} r_i^2 r_j + \sum_{i=1}^m r_i^3 \right) \\
    & = \frac{1}{2} \sum_{i < j} \left(r_i (a e_j - c d_j) - r_j(a e_i - c d_i)\right) - \frac{abc}{2} \sum_{i=1}^m r_i \sum_{j=1}^m r_j^2 \\
    & = \frac{1}{2} \sum_{i < j} \left(r_i G_j - r_j G_i\right)
\end{align*}
(i) If $\rho_i \Gamma_j = \rho_j \Gamma_i$ for all $i, j \in \{1, \ldots, k\}$, then $G_{\sigma}$ vanishes for all $\sigma$.
Thus 
\[
P = B_1 \cdots B_m = UT(a \sum_{i=1}^k \ell_i \rho_i, b \sum_{i=1}^k \ell_i \rho_i, c \sum_{i=1}^k \ell_i \rho_i, \sum_{i=1}^k \ell_i \Delta_i, \sum_{i=1}^k \ell_i \mE_i, \sum_{i=1}^k \ell_i \Phi_i),
\]
and $P = I$ if and only if $\bl$ is in the set $\{(\ell_1, \ldots, \ell_k) \in \Lambda \mid \sum_{i=1}^k \ell_i \Phi_i = 0\}$.
This proves (i).

(ii) By the additivity of $\Lambda$, one can find an Parikh vector $\bl = (\ell_1, \ldots, \ell_k) \in \Lambda$ whose support is equal to $\supp(\Lambda) = \{1, \ldots, k\}$.
Let $P = B_1 \cdots B_m \in \U_2$ be a string with Parikh vector $\bl$.
Since $\rho_i \Gamma_j \neq \rho_j \Gamma_i$ for some $i, j \in \supp(\bl)$, we claim that there exists some non-zero $G_{\sigma}$.
Indeed, let $r_u G_v \neq r_v G_u$ for some $u, v \in \{1, \ldots, m\}$, let $\sigma_{uv} \in \Sym_m$ be a permutation such that $\sigma_{12}(1) = u, \sigma_{12}(2) = v$, and let $s_{12}$ be the permutation that swaps $1$ and $2$.
Then,
\[
G_{\sigma_{uv} \circ {s_{12}}} - G_{\sigma_{uv}} = r_v G_u - r_u G_v \neq 0.
\]
Hence at least one of $G_{s_{12} \circ \sigma_{uv}}$ and $G_{\sigma_{uv}}$ is non-zero.

Next, since $\sum_{\sigma \in \Sym_m} G_{\sigma} = 0$, one can find permutations $\sigma_+, \sigma_- \in \Sym_m$ such that $G_{\sigma_+} > 0, G_{\sigma_-} < 0$.
By additivity of $\Lambda$, $B(\sigma, t) \in \U_2$ for all $\sigma \in \Sym_m, t \in \Zp$.
Proposition \ref{prop:exactexpr} shows that, when $t \rightarrow \infty$,
\[
\tau_f(B(\sigma_{\pm}, t)) = t^2 G_{\sigma_{\pm}} + O(t).
\]
Therefore there exists a large enough $t$ such that
\[
\tau_f(B(\sigma_+, t)) > 0, \tau_f(B(\sigma_-, t)) < 0.
\]
Hence, there are $x_+, x_- \in \Zp$ such that $x_+ \tau_f(B(\sigma_+, t)) + x_- \tau_f(B(\sigma_-, t)) = 0$.
Consequently, $I = B(\sigma_+, t)^{x_+} B(\sigma_-, t)^{x_-} \in \langle \mG \rangle$.
This proves (ii).
\end{proof}

\suppL*
\begin{proof}
Take any index $i \in \{1, \ldots, k\}$,
we show that $i \in \supp(L)$.
Since 
\[
\langle \varphi_0(A_1), \ldots, \varphi_0(A_k) \rangle_{\Rp} = \mC = \mC^{lin}
\]
is a linear space, it contains $- \varphi_0(A_i)$.
Therefore, there exists $x_j \in \Rp, j = 1, \ldots, k$ such that $\sum_{j = 1}^{k} x_j \varphi_0(A_j) = - \varphi_0(A_i)$.
As all the entries are integers, we can suppose all $x_j$ lie in $\Qp$.
Let $d$ be a common denominator of $x_j, j = 1, \ldots, k$, then we have
\[
d(x_i + 1)\varphi_0(A_i) + \sum_{j \neq i} d x_j \varphi_0(A_j) = 0.
\]
Hence $(d x_1, \ldots, d(x_i + 1), \ldots, d x_k) \in L$. 
We conclude that $i \in \supp(L)$ since $d(x_i + 1) > 0$.
\end{proof}

\casetwoinLzero*
\begin{proof}
If $P \in \U_2$, then $\tau_d(P) = \tau_e(P) = 0$.
By Lemma \ref{lem:DER2}, $p D_{\iota} = r E_{\iota}$ (recall that $\iota \in \Sym_m$ denotes the identity permutation).
Therefore
\[
p \sum_{i=1}^{k} \ell_i \Delta_i - r \sum_{i=1}^{k} \ell_i \mE_i = \left(p D_{\iota} - r E_{\iota}\right) + p \sum_{i=1}^m D_i - r \sum_{i=1}^m E_i = p \tau_d(P) - r \tau_e(P) = 0,
\]
so $\bl \in L_0$.
\end{proof}

\casetwoUonezero*
\begin{proof}
(i). 
We prove this proposition by constructing two elements in $\U_1 \cap \langle \mG \rangle$ whose images under $\tau_d$ are positive and negative, respectively.

Since $L$ is additively closed, we can find an Parikh vector $\bl = (\ell_1, \ldots, \ell_k) \in L$ whose support is equal to $\supp(L) = \{1, \ldots, k\}$.
Consider a string $B_1 B_2 \cdots B_m \in \U_1$ whose Parikh vector is $\bl$.
Since $\dim\mC = 2$, and $\supp(\bl) = \{1, \ldots, k\}$, there exist $u, v \in \{1, \ldots, m\}$ such that $a_{u} b_{v} \neq a_{v} b_{u}$: otherwise for some $p', q' \in \R$, not both zero, we have $\varphi_0(A_i) \cdot (p',q',0)^{\top} = 0, i = 1, \ldots, k$, contradicting $r \neq 0$.
    
    Let $\sigma_{uv} \in \Sym_m$ be a permutation with $\sigma_{uv}(1) = u, \sigma_{uv}(2) = v$, and let $s_{12} \in \Sym_m$ be the permutation that swaps 1 and 2.
    Then
    \[
    D_{\sigma_{uv} \circ s_{12}} - D_{\sigma_{uv}} = a_{v} b_{u} - a_{u} b_{v} \neq 0.
    \]
    Hence, there exists a permutation $\sigma \in \Sym_m$ such that $D_{\sigma} \neq 0$.
    Consequently, by Lemma \ref{lem:DEinv},
    one can find $\sigma_+, \sigma_-$ such that $D_{\sigma_+} > 0, D_{\sigma_-} < 0$.
    Then, by Proposition \ref{prop:exactexpr}, as $t \rightarrow +\infty$,
    \[
    \tau_d(B(\sigma_{+}, t)) = t^2 D_{\sigma_{+}} + O(t), \quad \tau_d(B(\sigma_{-}, t)) = t^2 D_{\sigma_{-}} + O(t).
    \]
    Thus, one can find a large enough $t$ such that 
    \[
    \tau_d(B(\sigma_+, t)) > 0, \tau_d(B(\sigma_-, t)) < 0.
    \]
    Since $\tau_d(B(\sigma_{\pm}, t))$ are integers, there exist $x_+, x_- \in \Zp$ such that $x_+ \tau_d(B(\sigma_+, t)) + x_- \tau_d(B(\sigma_-, t)) = 0$.
    Consequently, $B(\sigma_+, t)^{x_+} B(\sigma_-, t)^{x_-} \in \U_{10}$.
    This proves (i).
    
(ii). Consider any string $B_1 B_2 \cdots B_m \in \U_1$.
Denote by $\bl = (\ell_1, \ldots, \ell_k) \in L$ its Parikh vector.
By Lemma \ref{lem:DER2} and $p \neq 0, r = 0$, we have $D_{\iota} = 0$. Then by Proposition \ref{prop:exactexpr} with $t = 1$ and $\sigma = \iota$, 
\[
    P \in \U_{10} \iff \sum_{i = 1}^{m} D_i = 0
    \iff \sum_{i = 1}^{m} \ell_i \Delta_i = 0
    \iff (\ell_1, \ldots, \ell_k) \in L_0.
\]
Hence, $\U_{10}$ is reachable if and only if $L_0$ is not $\{\bzer\}$.
This proves (ii).
\end{proof}

\casetwoonereachtwoD*
\begin{proof}
We prove this proposition by constructing four elements in $\U_1 \cap \langle \mG \rangle$ whose images under $\tau$ generate the two-dimensional linear subspace $\{(d,e,f) \in \R^3 \mid pd - re = 0\}$ as an $\Rp$-cone (see Figure \ref{fig:2dLattice} for an illustration). 
Consequently, they would generate a two-dimensional lattice in $\{(d,e,f) \in \Z^3 \mid pd - re = 0\}$ as a $\Zp$-cone (in other words, as an additive monoid). In particular, the identity matrix lies in $\U_1 \cap \langle \mG \rangle$ by the same argument as Corollary~\ref{cor:case3reachI}.
We proceed in two steps.
By symmetry, we can suppose $r \neq 0$.
\begin{enumerate}[1.]
    \item \textbf{Finding two vectors in $\varphi_1(\U_1 \cap \langle \mG \rangle)$ with directions $\pm (r, p)$.}
    
    Since $L_0$ is additively closed, we can find an Parikh vector $\bl = (\ell_1, \ldots, \ell_k) \in L_0$ whose support is equal to $\supp(L_0) = \{1, \ldots, k\}$.
    Consider a string $B_1 B_2 \cdots B_m \in \U_1$ whose Parikh vector is $\bl$.
    Denote $B_i = UT(a_i, b_i, c_i; d_i, e_i, f_i), i = 1, \ldots, m$.
    
    Since $\dim \mC = 2$ and $\supp(\bl) = \{1, \ldots, k\}$, for the same reason as in the proof of Proposition \ref{prop:case2U10}(i),
    one can find $\sigma_+, \sigma_-$ such that $D_{\sigma_+} > 0, D_{\sigma_-} < 0$.
    As $\bl \in L_0$ we have $
    p \sum_{i=1}^{k} \ell_i \Delta_i = r \sum_{i=1}^{k} \ell_i \mE_i$.
    Then by Proposition \ref{prop:exactexpr} and Lemma \ref{lem:DER2},
    \[
    \frac{\tau_e(B(\sigma_{\pm}, t))}{\tau_d(B(\sigma_{\pm}, t))} = \frac{t^2 E_{\sigma} + t \sum_{i = 1}^m E_i}{t^2 D_{\sigma} + t \sum_{i = 1}^m D_i} = \frac{t^2 E_{\sigma_{\pm}} + t \sum_{i=1}^{k} \ell_i \mE_i}{t^2 D_{\sigma_{\pm}} + t \sum_{i=1}^{k} \ell_i \Delta_i} = \frac{p}{r}, 
    \]
    \[
    \lim_{t \rightarrow +\infty} \tau_d(B(\sigma_{\pm}, t)) = \lim_{t \rightarrow +\infty} \left( t^2 D_{\sigma_{\pm}} + O(t) \right) = \pm \infty,
    \]
    Hence, we conclude that there exists a large enough $t_0 \in \Zp$, such that $\varphi_1(B(\sigma_+, t_0))$ and $\varphi_1(B(\sigma_-, t_0))$ generate the $\R$-linear space $\{(d,e) \in \R^2 \mid pd - re = 0\}$ as an $\Rp$-cone.

    \item \textbf{Finding two vectors in $\tau(\U_1 \cap \langle \mG \rangle)$ with directions arbitrarily close to $\pm (0, 0, 1)$.}
    
    Since $\dim \mC = 2$, let $B_1, B_2 \in \langle \mG \rangle$ be such that $\varphi_0(B_1)$ and $\varphi_0(B_2)$ are $\R$-linearly independent.
    Let $\bl_1, \bl_2$ be the Parikh vectors of $B_1, B_2$, respectively.
    As in the previous step, take $\bl \in L_0$ such that $\supp(\bl) = \{1, \ldots, k\}$.
    Define $m = \max\{\|\bl_1\|_{\infty}, \|\bl_2\|_{\infty}\}$,
    then $\bl_3 = m \bl - \bl_1 - \bl_2$ is in $L_0$ since all its entries must be non-negative. 
    Let $B_3$ be any string whose Parikh vector is $\bl_3$.
    We have $\sum_{i=1}^3 \varphi_0(B_i) = 0$
    because $\bl_1 + \bl_2 + \bl_3 = m\bl \in L_0$, as well as
    \begin{equation}\label{eq:case21pm}
        \langle \varphi_0(B_1), \varphi_0(B_2), \varphi_0(B_3)\rangle_{\Rp} = \mC = \{(a, b, c) \in \R^3 \mid pa + qb + rc = 0\}
    \end{equation}
    because $\varphi_0(B_1)$ and $\varphi_0(B_2)$ are $\R$-linearly independent and $\varphi_0(B_3) = - \varphi_0(B_1) - \varphi_0(B_2)$.
    
    Denote $B_i = UT(a_i, b_i, c_i; d_i, e_i, f_i), i = 1, 2, 3$.
    Take $(a_4, b_4, c_4) = (0, 0, 0)$ in Lemma \ref{lem:FR1}.
    The fact that at most one of $p, q, r$ is zero and Equation \eqref{eq:case21pm} yield that all four conditions in Lemma \ref{lem:FR1} are false. 
    Therefore there exist $\sigma \in \Sym_4$ such that $F_{\sigma} \neq 0$.
    Consequently, by Lemma \ref{lem:Fsum0}, there exist $\sigma'_+, \sigma'_- \in \Sym_4$ such that $F_{\sigma'_+} > 0, F_{\sigma'_-} < 0$.
    By ignoring $(a_4, b_4, c_4) = (0, 0, 0)$, this trivially implies that there exist $\sigma'_+, \sigma'_- \in \Sym_3$ such that $F_{\sigma'_+} > 0, F_{\sigma'_-} < 0$.
    
    Define the string
    \[
    B'(\sigma, t) = B_{\sigma(1)}^t B_{\sigma(2)}^t B_{\sigma(3)}^t \in \U_1.
    \]
    By Proposition \ref{prop:exactexpr} and Lemma \ref{lem:DER2}, 
    \[
    \frac{\tau_e(B'(\sigma'_{\pm}, t))}{\tau_d(B'(\sigma'_{\pm}, t))} = \frac{t^2 E_{\sigma} + t \sum_{i = 1}^m E_i}{t^2 D_{\sigma} + t \sum_{i = 1}^m D_i} = \frac{t^2 E_{\sigma} + t \sum_{i = 1}^k \ell_i \mE_i}{t^2 D_{\sigma} + t \sum_{i = 1}^k \ell_i \Delta_i} = \frac{p}{r}.
    \]
    Hence $\tau(B'(\sigma'_+, t)), \tau(B'(\sigma'_-, t))$ are in the linear subspace $\{(d,e,f) \in \R^3 \mid pd - re = 0\}$.
    Then by Proposition \ref{prop:exactexpr},
    \begin{align*}
    & \lim_{t \rightarrow +\infty} \frac{\tau_i(B'(\sigma'_{\pm}, t))}{\tau_f(B'(\sigma'_{\pm}, t))} = \lim_{t \rightarrow +\infty} \frac{O(t^2)}{t^3 F_{\sigma'_{\pm}} + O(t^2)} = 0, i \in \{d,e\}, \\
    & \lim_{t \rightarrow +\infty} \tau_f(B'(\sigma_{\pm}, t)) = \pm \infty.
        \end{align*}
    Note that the projections of $\tau(B(\sigma_+, t_0))$ and $\tau(B(\sigma_-, t_0))$ onto the $d, e$ coordinates generate the $\R$-linear space $\{(d,e) \in \R^2 \mid pd - re = 0\}$ as an $\Rp$-cone.
    By adding two vectors pointing close enough towards both directions of the $f$ axis, together the four vectors will generate the $\R$-linear space $\{(d,e,f) \in \R^3 \mid pd - re = 0\}$ as an $\Rp$-cone (see Figure \ref{fig:2dLattice} for an illustration).
    Hence, we conclude that there exists a large enough $t_1 \in \Zp$, such that $\tau(B(\sigma_+, t_0)), \tau(B(\sigma_-, t_0))$, $\tau(B'(\sigma'_+, t_1)), \tau(B'(\sigma'_-, t_1))$ generate $\{(d,e,f) \in \R^3 \mid pd - re = 0\}$ as an $\Rp$-cone.
    This shows that the identity matrix lies in $\U_1 \cap \langle \mG \rangle$.
\end{enumerate}
\end{proof}

\Hncomp*
\begin{proof}
For brevity we introduce the following notation.
\[
H(\boldsymbol{a}, \boldsymbol{b}, c) \coloneqq
\begin{pmatrix}
        1 & \boldsymbol{a} & c \\
        0 & I_{n} & \boldsymbol{b}^{\top} \\
        0 & 0 & 1 \\
\end{pmatrix}.
\]
Let $H_1, \ldots, H_k$ be a set of generators, $H_i = H(\boldsymbol{\alpha}_i, \boldsymbol{\beta}_i, \kappa_i), i = 1, \ldots, k$.

Consider the following $\Rp$-cone in $\R^{2n}$:
\[
\mC_H = \langle \{(\boldsymbol{\alpha}_i, \boldsymbol{\beta}_i) \mid i = 1, \ldots, k\} \rangle_{\Rp}.
\]
Denote $\mC_H^{lin}$ its lineality space.
For given $B_1, \ldots, B_m$, $B_i = H(\boldsymbol{a}_i, \boldsymbol{b}_i, c_i)$, denote $B(\sigma, t) = B_{\sigma(1)}^t \cdots B_{\sigma(m)}^t$.
One can show that
\[
B(\sigma, t) = H\left(t \sum_{i = 1}^m \boldsymbol{a}_i, t \sum_{i = 1}^m \boldsymbol{b}_i, t^2 C_{\sigma} + t \sum_{i=1}^m (c_i - \frac{1}{2} \boldsymbol{a}_i \cdot \boldsymbol{b}_i)\right),
\]
where 
\[
C_{\sigma} = \sum_{1 \leq i < j \leq m} \boldsymbol{a}_{\sigma(i)} \cdot \boldsymbol{b}_{\sigma(j)} + \frac{1}{2} \sum_{i = 1}^m \boldsymbol{a}_{i} \cdot \boldsymbol{b}_{i}.
\]

If some $H_i = H(\boldsymbol{\alpha}_i, \boldsymbol{\beta}_i, \kappa_i)$ satisfy $(\boldsymbol{\alpha}_i, \boldsymbol{\beta}_i) \notin \mC_H^{lin}$, then any string equal to the identity cannot contain $H_i$.
Therefore we can remove these $H_i$ and suppose $\mC_H = \mC_H^{lin}$.
Define
\[
L_H = \left\{ (\ell_1, \ldots, \ell_k) \in \Zp^k \middle| \sum_{i=1}^k \ell_i \boldsymbol{\alpha}_i = \sum_{i=1}^k \ell_i \boldsymbol{\beta}_i = \bzer \right\},
\]
which is an additively closed set.
We have $(\ell_1, \ldots, \ell_k) \in L_H$ if and only if $H_1^{\ell_1} \cdots H_k^{\ell_k} = H(\bzer, \bzer, c)$ for some $c$.
Similar to Lemma~\ref{lem:suppL}, we have $\supp(L_H) = \{1, \ldots, k\}$.

Consider the two following situations.
\begin{enumerate}
    \item If there exist $u, v$ such that $\boldsymbol{\alpha}_{u} \cdot \boldsymbol{\beta}_{v} \neq \boldsymbol{\alpha}_{v} \cdot \boldsymbol{\beta}_{u}$.
    We claim that $I \in \langle H_1, \ldots, H_k \rangle$.
    
    Since $L_H$ is additively closed, one can find a $B_1, \ldots, B_m$, $B_1 \cdots B_m = H(\bzer, \bzer, c)$ for some $c$, and such that every $H_i$ appears in $B_1, \ldots, B_m$ at least once.
    Write $B_i = H(\boldsymbol{a}_i, \boldsymbol{b}_i, c_i)$
    
    Let $\sigma_{uv} \in \Sym_m$ be a permutation with $\sigma_{uv}(1) = u, \sigma_{uv}(2) = v$, and let $s_{12}$ be the permutation that swaps 1 and 2. Then
    \[
    C_{\sigma_{uv} \circ s_{12}} - C_{\sigma_{uv}} = \boldsymbol{\alpha}_{u} \cdot \boldsymbol{\beta}_{v} - \boldsymbol{\alpha}_{v} \cdot \boldsymbol{\beta}_{u}.
    \]
    Hence, there exists a permutation $\sigma \in \Sym_m$ such that $C_{\sigma} \neq 0$. 
    Let $\sigma' \in \Sym_m$ be such that $\sigma'(i) = \sigma(m+1-i), i = 1, \ldots, m$. 
    We have that 
    \[
    C_{\sigma'} + C_{\sigma} = \sum_{i = 1}^m \boldsymbol{a}_i \sum_{j = 1}^m \boldsymbol{b}_j = 0,
    \]
    so $C_{\sigma'} = - C_{\sigma} \neq 0$.
    Therefore, when $t$ is large enough, the $c$-coordinates of $B(\sigma, t)$ and $B(\sigma', t)$ have different signs, thus the identity matrix can be generated by $B(\sigma, t)$ and $B(\sigma', t)$ as a semigroup.
    
    \item If $\boldsymbol{\alpha}_{u} \cdot \boldsymbol{\beta}_{v} = \boldsymbol{\alpha}_{v} \cdot \boldsymbol{\beta}_{u}$ for all $u, v \in \{1, \ldots, k\}$.
    
    Then $C_{\sigma} = 0$ whenever $B_1 \cdots B_m = H(\bzer, \bzer, c)$ for some $c$.
    Therefore $B_1 \cdots B_m = I$ if and only if
    $\sum_{i=1}^m \boldsymbol{a}_i = \sum_{i=1}^m \boldsymbol{b}_i = \sum_{i=1}^m (c_i - \frac{1}{2}\boldsymbol{a}_i \cdot \boldsymbol{b}_i) = 0$.
    The existence of such a string can be determined by linear programming.
\end{enumerate}
Notice that effectively computing $\mC_H^{lin}$ can be done in polynomial time. Also, all linear programming instances in the above procedure are of polynomial size with respect to the input $\{H_1, \ldots, H_k\}$.
Therefore, the overall complexity of the above procedure in polynomial in $k$.
\end{proof}

\casetworeducezero*
\begin{proof}
(i) $\langle \mG \rangle$ is contained in the following group
\begin{equation}\label{eq:defU02}
\U_{02} = \{UT(a, b, 0; d, e, f) \mid a, b, d, e, f \in \Z\}.
\end{equation}
Consider the group homomorphism
\begin{align*}
    \pi: \U_{02} & \rightarrow \Heis \\
    UT(a, b, 0; d, e, f) & \mapsto \begin{pmatrix}
            1 & a & d \\
            0 & 1 & b \\
            0 & 0 & 1 \\
    \end{pmatrix}
\end{align*}
$\pi$ is surjective and $\ker(\pi) = \U_{10}$.
Note that $\pi(A_i) = H_i$.
Therefore,
\begin{align*}
\U_{10} \cap \langle A_1, \ldots, A_k \rangle \neq \emptyset 
\iff & I \in \langle H_1, \ldots, H_k \rangle    
\end{align*}

(ii) $\langle \mG \rangle$ is contained in the following group
\[
\U_{00} = \{UT(0, b, c; d, e, f) \mid a, b, d, e, f \in \Z\}.
\]
Consider the group homomorphism
\begin{align*}
    \pi\colon \U_{00} & \rightarrow \Z^3 \\
    UT(0, b, c; d, e, f) & \mapsto (d, b, c)
\end{align*}
$\pi$ is surjective and $\ker(\pi) = \U_{10}$.
Note that $\pi(A_i) = (\delta_i, \beta_i, \kappa_i)$.
Therefore,
\begin{align*}
& \U_{10} \cap \langle A_1, \ldots, A_k \rangle \neq \emptyset \\
\iff & \bzer \in \langle (\delta_1, \beta_1, \kappa_1), \ldots, (\delta_k, \beta_k, \kappa_k) \rangle \\
\iff & \exists \bzer \neq (\ell_1, \ldots, \ell_k) \in \Zp^k, \sum_{i=1}^k \ell_i \delta_i = \sum_{i=1}^k \ell_i \beta_i = \sum_{i=1}^k \ell_i \kappa_i = 0
\end{align*}

\end{proof}

\casetworeduce*
\begin{proof}
(i) $\langle \mG \rangle$ is contained in $\U_{02}$ (defined in \eqref{eq:defU02}).
Define the group homomorphism
\begin{align*}
    \pi \colon \UT(4, \Z) & \rightarrow \U_{02} \\
    UT(b, a, e; d, f, *) & \mapsto UT(a, b, 0; d, e, f)
\end{align*}
$\pi'$ is surjective and $\ker(\pi) = \U_2$.
Moreover, $\pi(A_i') = A_i, i = 1, \ldots, k$. 
Thus,
\begin{align*}
\U_2 \cap \langle A_1', \ldots, A_k' \rangle \neq \emptyset 
\iff & I \in \langle A_1, \ldots, A_k \rangle
\end{align*}
This proves (i).

(ii) First of all, it is easy to show that 
\begin{align*}
    i \colon \U_{02} & \rightarrow \Heis \times \Z \\
    UT(a, b, 0; d, e, *) & \mapsto \left(H(b, a, d), e\right)
\end{align*}
is a surjective group homomorphism, where $H(b, a, d)$ denotes the matrix
\[
H = 
\begin{pmatrix}
        1 & b & d \\
        0 & 1 & a \\
        0 & 0 & 1 \\
\end{pmatrix}.
\]
Since $\ker(i) = \U_2$, $i$ induces a canonical isomorphism
\[
\bar{i} \colon \U_{02}/\U_2 \xrightarrow{\sim} \Heis \times \Z.
\]
Define the group homomorphism
\begin{align*}
    \pi' \colon \UT(4, \Z) & \rightarrow \Heis \times \Z \\
    UT(b, a, e; d, *, *) & \mapsto \left(H(b, a, d), e\right)
\end{align*}
$\pi'$ obviously surjective and $\ker(\pi') = \U_{10}$.

Composing $\pi'$ with the isomorphism $\bar{i}^{-1}$ then gives the surjective homomorphism
\[
\psi = \bar{i}^{-1} \circ \pi' \colon \UT(4, \Z) \rightarrow \U_{02}/\U_2.
\]
with $\ker(\psi) = \U_{10}$.
Moreover, we have $\psi(A_i') = \overline{A_i}, i = 1, \ldots, k$. 
Thus,
\begin{align*}
& \U_{10} \cap \langle A_1', \ldots, A_k' \rangle \neq \emptyset \\
\iff & \overline{I} \in \langle \overline{A_1}, \ldots, \overline{A_k} \rangle \; \text{ (in $\U_{02}/\U_2$)} \\
\iff & \U_2 \cap \langle A_1, \ldots, A_k \rangle \neq \emptyset
\end{align*}
This proves (ii).
\end{proof}

\end{document}